\newif \ifFullversion \Fullversionfalse   
\newif \ifTitlepage \Titlepagefalse   
\DeclareSymbolFont{AMSb}{U}{msb}{m}{n}
\DeclareSymbolFontAlphabet{\Bbb}{AMSb}
\def\R{\ensuremath{\Bbb R}}
\def\hb@xt@{\hbox to }
\let\oldendproof\endproof
\def\endproof{\qed\oldendproof}
\newcommand{\comment}[1]{\relax}
\begin{document}

\ifTitlepage
   \begin{titlepage}
   \thispagestyle{empty}
   \setcounter{page}{0}
\fi

\title{Straight Skeletons of Three-Dimensional Polyhedra}

\author{Gill Barequet\inst{1} \and David Eppstein\inst{2} 
	\and Michael T. Goodrich\inst{2} \and Amir Vaxman\inst{1}}

\institute{Dept. of Computer Science\\
Technion---Israel Institute of Technology\\
\email{\{barequet,avaxman\}(at)cs.technion.ac.il}\\[0.1in]
\and
Computer Science Department\\
University of California, Irvine\\
\email{\{eppstein,goodrich\}(at)ics.uci.edu}}

\maketitle

\begin{abstract}
\ifTitlepage\large\fi
This paper studies the straight skeleton of polyhedra in three dimensions.
We first address voxel-based polyhedra (polycubes),
formed as the union of a collection of cubical (axis-aligned) voxels.
We analyze the ways in which the skeleton may intersect each voxel of
the polyhedron, and show that the skeleton may be constructed by a
simple voxel-sweeping algorithm taking constant time per voxel.
In addition, we describe a more complex algorithm for straight skeletons
of voxel-based polyhedra, which takes time proportional to the area of the
surfaces of the straight skeleton rather than the volume of the polyhedron.
We also consider more general polyhedra with axis-parallel edges and faces,
and show that any $n$-vertex polyhedron of this type has a straight
skeleton with $O(n^2)$ features. We provide algorithms for
constructing the straight skeleton, with running time
$O(\min(n^2\log n,k\log^{O(1)} n))$ where $k$ is the output complexity.  
Next, we discuss the straight skeleton of a general nonconvex polyhedron.
We show that it has an ambiguity issue, and suggest a consistent method to
resolve it.
We prove that the straight skeleton of a general polyhedron has a
superquadratic complexity in the worst case.
\ifFullversion
Thus, we show that straight skeletons are strictly simpler for orthogonal
polyhedra than they are for more general polyhedra.
\fi
Finally, we report on an implementation of a simple algorithm for the
general case.
\end{abstract}

\ifTitlepage
   \end{titlepage}
\fi

\section{Introduction}

The straight skeleton is a geometric construction that reduces
two-dimensional shapes---polygons---to one-dimensional sets of line
segments approximating the same shape. 
It is defined in terms of an offset process in which edges 
move inward, remaining straight and meeting at vertices. When a vertex meets an offset edge, 
the process continues within the two pieces so formed.
The straight line segments traced out
by vertices during this offset process define 
the straight skeleton.
Introduced in
1995 by Aichholzer {\it et al.}~\cite{AicAurAlb-JUCS-95,AicAur-COCOON-96}, 
the two-dimensional straight skeleton
has since found many applications, including surface
folding~\cite{DemDemLub-JCDCG-98}, offset curve
construction~\cite{EppEri-DCG-99}, interpolation of three-dimensional
surfaces from cross-section contours~\cite{1046646},
automated interpretation of geographic data~\cite{HauSes-GMR-04},
polygon decomposition~\cite{TanVel-SoCG-03}, and graph
drawing~\cite{BagRaz-CI-04}. 
Compared to other well-known types of skeleton, 
the straight skeleton is more complex to
compute~\cite{EppEri-DCG-99,CheVig-SODA-02}, but its simple geometric
form, comprised exclusively of line segments, offers advantages in applications.
The best known alternative, the medial axis~\cite{b-tends-67}, consists of 
both linear and quadratic curve segments.
Thus, of the two, only the straight skeleton characterizes the shape of a  polygon while preserving its linear nature.

It is natural, then, to try to extend algorithms for straight skeleton
construction to three dimensions. In three dimensions, a skeleton is a
two-dimensional approximation of a three-dimensional shape such as a
polyhedron. The most well-known type of three-dimensional skeleton, the
medial axis, has found applications, for instance, in mesh
generation~\cite{PriArmSab-IJNME-95} and surface
reconstruction~\cite{BitTsiGas-CGF-95}.
Unlike its two-dimensional counterpart,
the 3D medial axis can be quite complex, both combinatorially
and geometrically. Thus, we would like an alternative way to characterize 
the shape of
three-dimensional polyhedra using a simpler type of two-dimensional skeleton.

\ifFullversion
Of particular interest is a mechanism for characterizing
the shape of three-dimensional data, especially data that are generated  
from digital sources, as would be come from CT, MRI, ultrasound, 
seismic imaging,
and similar sources, to be used, for example, to study 
the shapes of molecules, body parts, or oil fields.
In these cases, the data is specified as a union of (axis-aligned)
voxels or as orthogonal polyhedra.
\fi

\subsection{Related Prior Work}

Despite the large amount of work on 2D straight skeletons cited
above, we are not aware of any prior work on 3D straight skeletons, 
other than Demaine {\it et al.}~\cite{HingedPolyforms3D_WADS2005}, who
mention the existence and basic properties of 3D straight skeletons,
but do not study them in any detail with respect to
their algorithmic, combinatorial, or geometric properties.

Held~\cite{Held94} showed that in the worst case, the complexity of the
medial axis of a convex polyhedron of complexity $n$ is $\Omega(n^2)$,
which implies a similar bound for the 3D straight skeleton.
Perhaps the most relevant prior work is on shape characterization using
the 3D medial axis.
This structure is defined from a 3D polyhedron by
considering each face, edge, and vertex
as being a distinct object and then constructing the
3D Voronoi diagram of this set of objects.
Thus, the medial axis is the loci of points in $\R^3$ that are
equidistant to at least two objects. The best
known upper bound for its combinatorial complexity is
$O(n^{3+\epsilon})$~\cite{s-atubl-94}, 
for any fixed constant $\epsilon>0$, and even for the
special case of lines in space it is a well-known open problem in
computational geometry whether the Voronoi diagram (a space subdivision
having the medial axis as its boundary) has subcubic combinatorial
complexity~\cite{CheKedSha-Algs-98,KolSha-SJC-03}.\footnote{See also \url{http://maven.smith.edu/~orourke/TOPP/P3.html}.}
Additionally, the medial axis consists of intersecting 
pieces of planes and conic surfaces, presenting 
significant complications to algorithms that attempt to construct 3D
medial axes.

Because of these drawbacks, a number of researchers have studied
algorithms for computing approximate 3D medial axes.
Sherbrooke {\it et al.}~\cite{spb-amat3-96} take a numerical approach,
giving an algorithm that traces out the curved edges of the 3D medial skeleton.
Culver {\it et al.}~\cite{ckm-acmap-99} also design a curve-tracing
algorithm, but they use exact arithmetic to
compute an exact representation of a 3D medial axis.
In both cases, the running time depends on both the combinatorial and
geometric complexity of the medial axis.
Foskey {\it et al.}~\cite{flm-ecsma-03} study an approximation based
on relaxed distance calculations.
In particular, they construct an approximate medial axis using a
voxel-based approach that runs in time $O(nV)$, where $n$ is the
number of features of the input polyhedron
and $V$ is the volume of the voxel mesh that contains it.
Sheehy {\it et al.}~\cite{sar-sdmsc-96} instead take the approach of
using the 3D Delaunay triangulation of a cloud of points on the
surface of the input polyhedron to compute and approximate 3D medial
axis.
Likewise, Dey and Zhao~\cite{dz-amavc-04} study the 3D medial axis as a
subcomplex of the Voronoi diagram
of a sampling of points approximating the input 
polyhedron.

\subsection{Our Results}

In this paper we provide the following results.

\begin{itemize}
\item We study the straight skeleton of orthogonal polyhedra formed
      as unions of cubical voxels. We analyze the ways in
      which the skeleton may intersect each voxel of the polyhedron, and
      show that the skeleton may be constructed by a simple voxel sweeping
      algorithm taking constant time per voxel.

\item We describe a more complex algorithm for straight skeletons of
      voxel-based polyhedra, which, rather than taking time proportional to
      the volume of the polyhedron takes time proportional to the area of the
      straight skeleton or, equivalently, the number of voxels it intersects.

\item We consider more general polyhedra with axis-parallel edges and
      faces, and show that any $n$-vertex polyhedron of this type has a
      straight skeleton with $O(n^2)$ features. We provide two algorithms
      for constructing the straight skeleton, resulting in a combined 
      running time of $O(\min(n^2\log n,k\log^{O(1)} n))$, where $k$ is the
      output complexity.

\item We discuss the difficulties of unambiguously defining straight
      skeletons for non-axis-aligned polyhedra and suggest a consistent
      method for resolving these ambiguities.
      We show that a general polyhedron, the straight skeleton can,
      in the worst case, have superquadratic complexity.
      Thus, straight skeletons are strictly simpler for orthogonal polyhedra
      than they are for more general polyhedra.  
      We also describe a simple algorithm for computing the straight skeleton
      in the general case.
\end{itemize}

\section{Voxel Polyhedra}

In this section we consider the case in which the polyhedron is a polycube,
that is, a rectilinear polyhedron all of whose vertices have integer
coordinates.  The ``cubes'' making up the polyhedron are also called voxels.

For voxels, and more generally for orthogonal polyhedra, the straight 
skeleton is a superset of the $L_\infty$ Voronoi diagram; the added 
boundaries in the straight skeleton resolve any ambiguities concerning 
which cells of the diagram belong to which features of the input 
polyhedron. Due to this relationship with Voronoi diagrams, the straight 
skeleton is significantly easier to compute for orthogonal inputs than 
in the general case.

As in the general case, the straight skeleton of a polycube can be modeled
by offsetting the boundary of the polycube inward, and tracing the movement
of the boundary. During this sweep, the boundary forms a moving front (or
fronts) whose features are faces, edges, and vertices.  An edge can be
either convex or concave, while a vertex can be convex, concave,
or a saddle.
In the course of this process, features may disappear or appear.

The sweep starts at time~0; at this time the front is the boundary of the
polycube.  In the first time unit we process all the voxels adjacent to the
boundary.  In the $i$th round (for $i \geq 1$) we process all the voxels
adjacent to voxels processed in the $(i-1)$st round, that have never been
processed before.  Processing a voxel means the computation of the piece of
the skeleton lying within (or on the boundary) of the voxel.  During this
process, the polycube is shrunk, and may be broken into several components if
it is not convex.  The process continues for every piece separately until
it vanishes, that is, there are no more voxels to process.

\ifFullversion
We will now show that computing the straight skeleton can be done
efficiently, either in time proportional to the volume (number of voxels) of
the polycube or in an output-sensitive manner.
\fi

\subsection{A Volume Proportional-Time Algorithm}

\begin{theorem}
   The combinatorial complexity of the straight skeleton of a polycube of
   volume $V$ is $O(V)$.  The skeleton can be computed in $O(V)$ time.
\end{theorem}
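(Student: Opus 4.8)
The plan is to reduce the global statement to a purely local one: since a polycube of volume $V$ consists of $V$ unit voxels, it suffices to show that the portion of the straight skeleton lying inside any single voxel has $O(1)$ combinatorial complexity, and that this portion can be determined in $O(1)$ time from local information. Summing over all $V$ voxels then yields both the $O(V)$ complexity bound and, once a processing order is fixed, the $O(V)$ running time.

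First I would establish the key local lemma: within one voxel the skeleton has constant complexity. The offset process here keeps every face axis-parallel and moving at unit speed, so for each of the six axis directions the time at which the front advancing in that direction first reaches a point of the voxel is a linear function of position (a plane of unit gradient) on the cube. A point lies on the skeleton exactly when it is reached simultaneously by two or more such fronts, i.e.\ on a ridge of the lower envelope of these functions. Since there are only six axis directions and, within a unit cube, only boundedly many distinct fronts per direction can interact, at most $O(1)$ such linear functions are relevant inside a single cube; the lower envelope of $O(1)$ planes restricted to a cube---together with its ridges, which form the skeleton---has $O(1)$ complexity. Equivalently, one can argue by a finite case analysis over the local types of the front features (convex, concave, or saddle edges and vertices) that cross the cube.

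Given the local lemma, the algorithm is the breadth-first voxel sweep already outlined in the text: at round $0$ the front is the polycube boundary, and in round $i$ we process exactly the unprocessed voxels adjacent to those processed in round $i-1$. Because all vertices have integer coordinates and fronts move at unit speed, every event (a collision of fronts, or the appearance, disappearance, or splitting of a front) occurs at an integer or half-integer time, so within each voxel the configuration falls into one of finitely many cases. Processing a voxel then amounts to reading the states of its already-processed neighbors, looking up the corresponding case in a constant-size table, and emitting the $O(1)$ skeleton pieces it contributes; this takes $O(1)$ time per voxel and $O(V)$ time overall.

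I expect the main obstacle to be the local lemma---specifically, verifying that no single voxel can contain more than a bounded number of skeleton features even when the shrinking region splits into several components or when several fronts collide inside the same cube. The argument that saves us is the rigidity of the orthogonal, unit-speed process: within a unit cube only fronts arriving from the six axis directions can interact, each contributes a single linear arrival-time function, and the resulting events are confined to a bounded set of times, so the number of combinatorial types to enumerate is finite and independent of $V$.
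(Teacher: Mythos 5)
Your proposal follows essentially the same route as the paper: reduce everything to a local lemma that the portion of the skeleton inside each voxel has $O(1)$ complexity and can be read off a constant-size lookup table from the way the front enters the voxel, then run the breadth-first voxel sweep to obtain both the $O(V)$ complexity bound and the $O(V)$ running time. Your lower-envelope gloss on the local lemma rests on the assertion that only boundedly many front pieces per axis direction can interact inside one voxel---which you rightly flag as the crux---but this is exactly the point the paper itself settles only by a finite case enumeration (on eighths of voxels, including the ``multiple portions of the moving front'' cases), so your argument matches the paper's proof in both structure and level of rigor.
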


\begin{proof}
   We prove the two parts of the theorem simultaneously by analyzing the
   skeleton computation procedure, in which the boundary of the polycube is
   swept inward and the movement of its features
\ifFullversion%
   (vertices, edges, and faces)
\fi%
   is traced.  We analyze separately the time needed to find
   the voxels processed in each sweep step
   and the time needed to process each voxel.
\ifFullversion%
   The key observations are
\else%
   We show
\fi%
   that the entire sweep can be performed in time linear in the number of
   voxels, the complexity of the skeleton within every voxel is constant, and
   the portion of the skeleton within every voxel can be computed in constant
   time.

   The sweep starts at time~0 at the boundary of the polycube.
   In the first
   round we process all the voxels adjacent to the boundary.
\ifFullversion%
   These can be found in $O(V)$ time.
\fi%
   In the $i$th round (for $i \geq 1$) we process all
   the voxels adjacent to voxels processed in the $(i-1)$st round, that have
   never been processed before.
\ifFullversion%
   Since the total number of face adjacencies
   of voxels is $\Theta(V)$, the entire sweeping process takes $\Theta(V)$ steps,
   where each step is the processing of a single voxel.

\fi%
   When sweeping the boundary inward during one round of the process, each
   feature of the boundary (vertex, edge, or face) moves inward one
   L$_\infty$ unit.
   For clarity of exposition, we will analyze the process within eighths of voxels instead of full
   voxels.  This will reduce the number of possible cases, since we will have
   to consider all combinations of vertices/edges/facets hit by the moving
   front(s) only on three facets instead of the six facets that a full voxel
   can be hit simultaneously on.  Consider an eighth of the voxel that is about to be
   swept by the moving front(s).  This ``subvoxel'' can be hit in many
   combinations of its corner vertex, the three edges adjacent to this corner,
   and the three faces adjacent to this corner.  Moreover, it may be hit by
   multiple portions of the moving front in a single
   feature of the subvoxel, in two features, one containing the
   other, or multiple features with more complex containment relations.
   Nevertheless, the number of different cases is finite, and a preprocessed
   look-up table can be used to determine in constant time the structure of
   the piece of the straight skeleton within each subvoxel.
   The complexity of the skeletal piece (within a voxel) is also constant.
   Figures~\ref{F-voxel-skeletons}(a--h)
   \begin{figure}[t]
      \centering
      \begin{tabular}{cccccc}
         ~\scalebox{0.3}{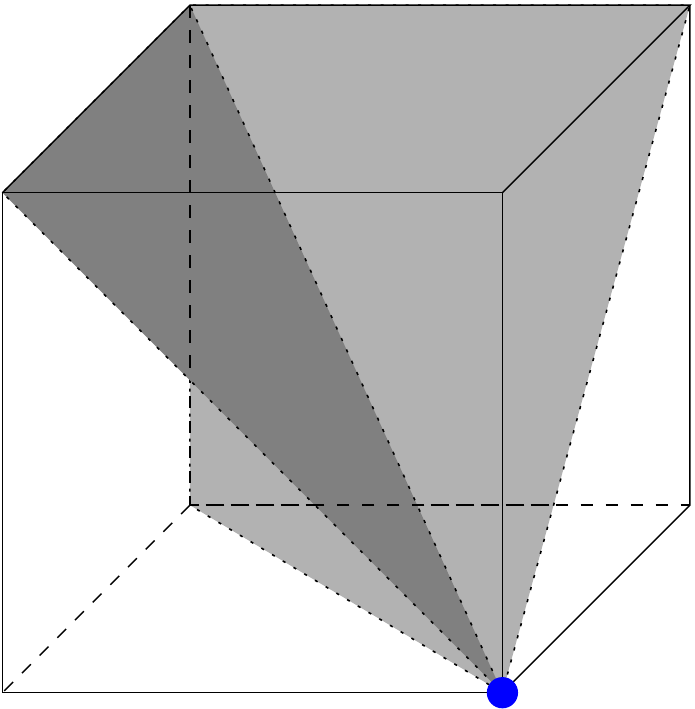}~ &
            ~\scalebox{0.3}{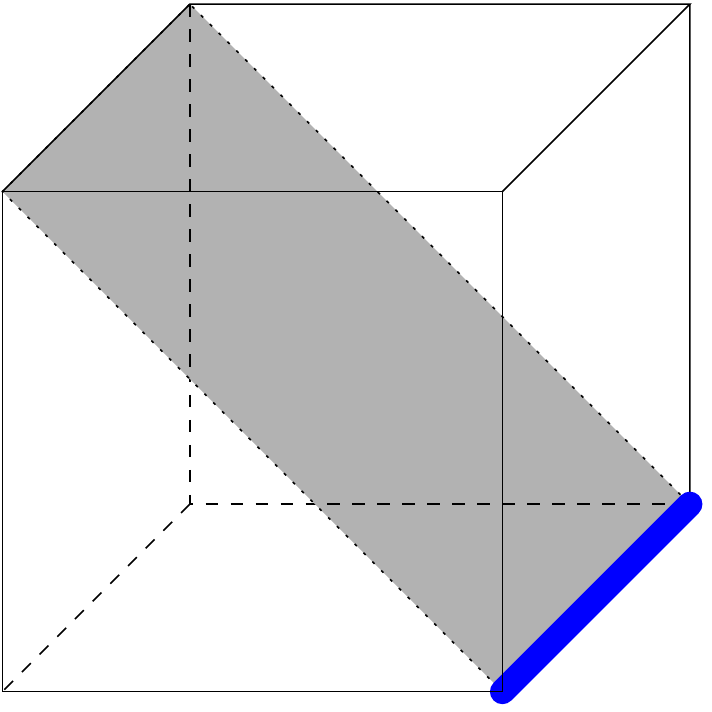}~ &
            ~\scalebox{0.3}{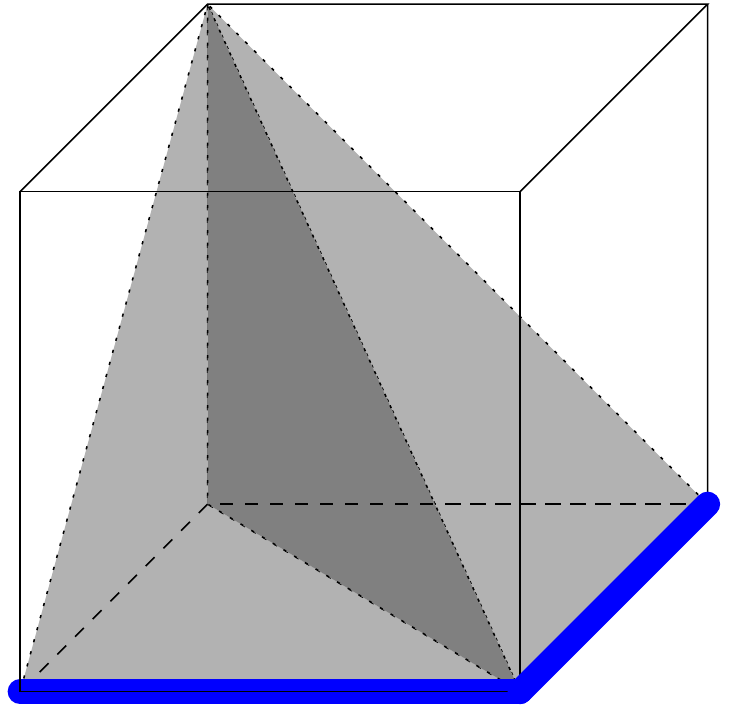}~ &
            ~\scalebox{0.3}{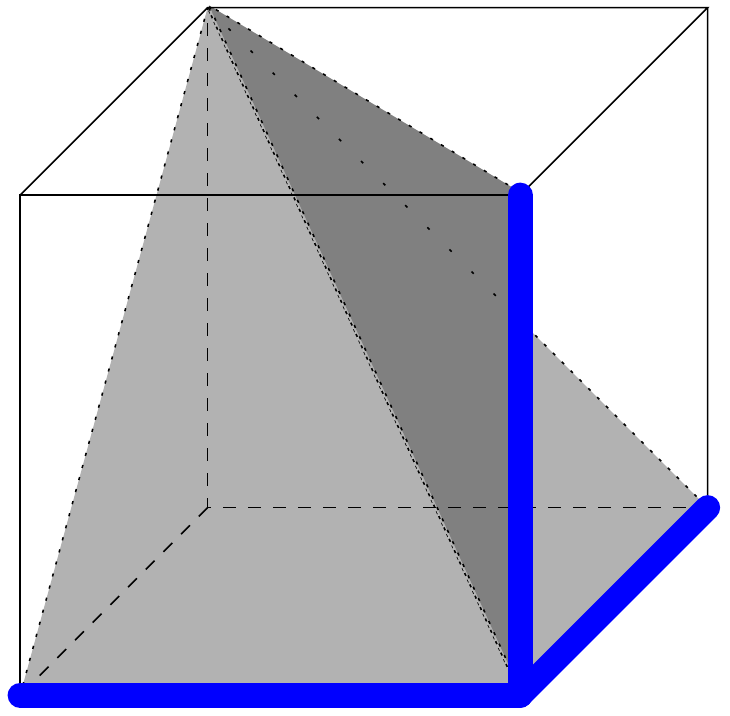}~ &
            ~\scalebox{0.3}{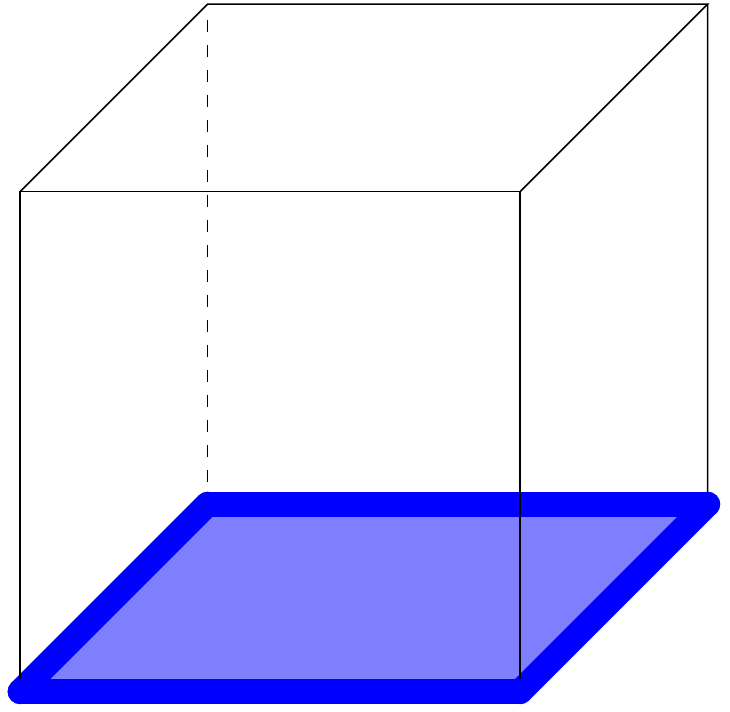}~ &
            ~\scalebox{0.3}{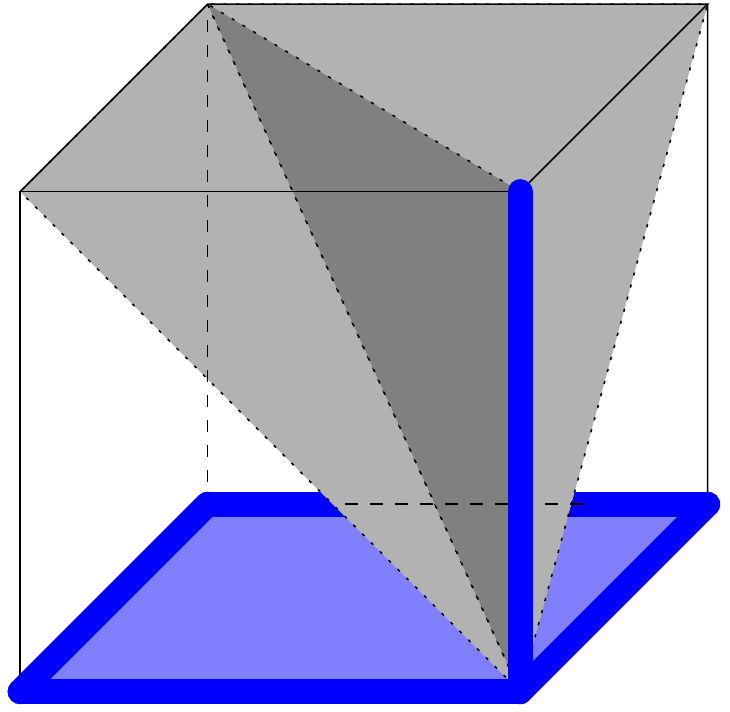}~ \\
         (a) Vertex & (b) Edge & (c) Two edges &
         (d) Three edges & (e) Face & (f) Face and edge \\
      \end{tabular}
      \begin{tabular}{ccccc}
            ~\scalebox{0.3}{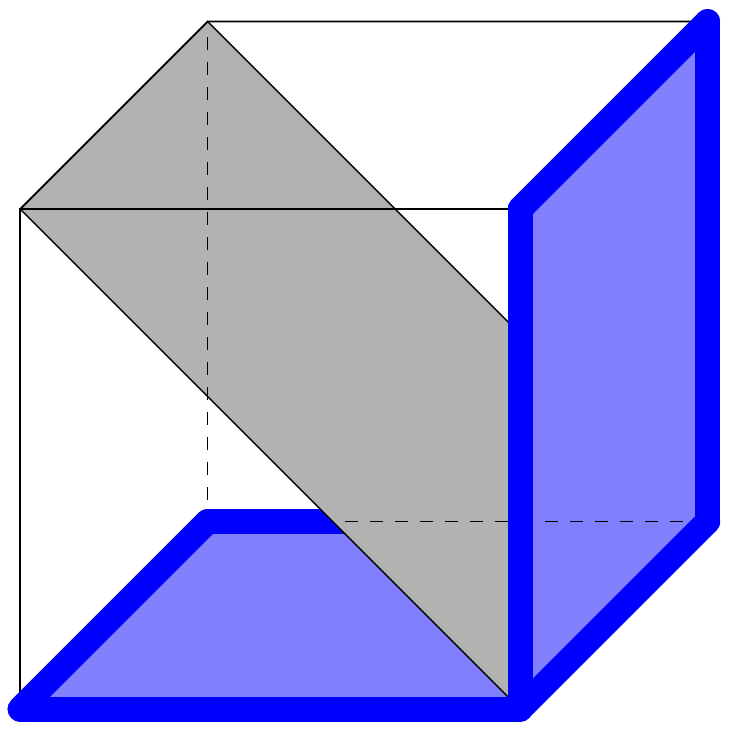}~ &
            ~\scalebox{0.3}{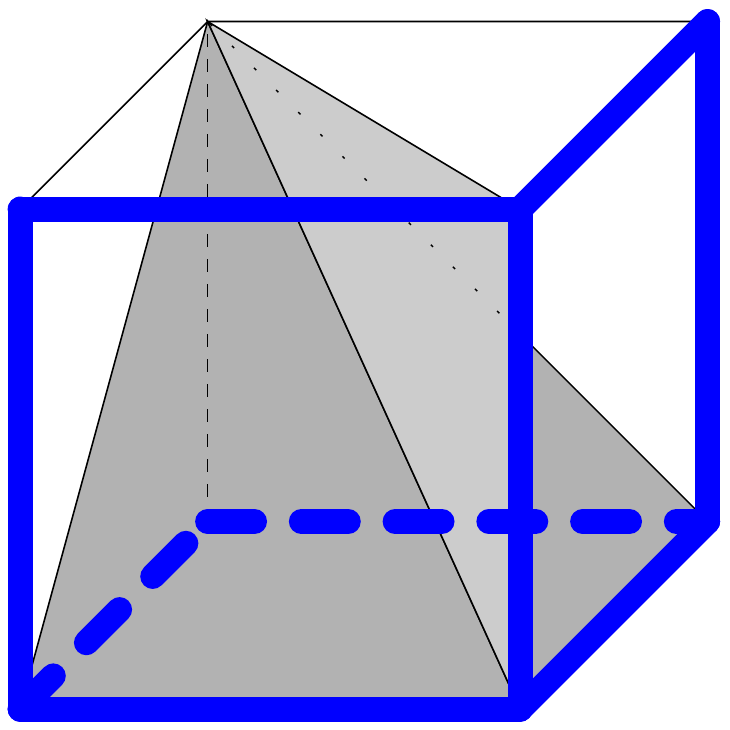}~ &
         \scalebox{0.3}{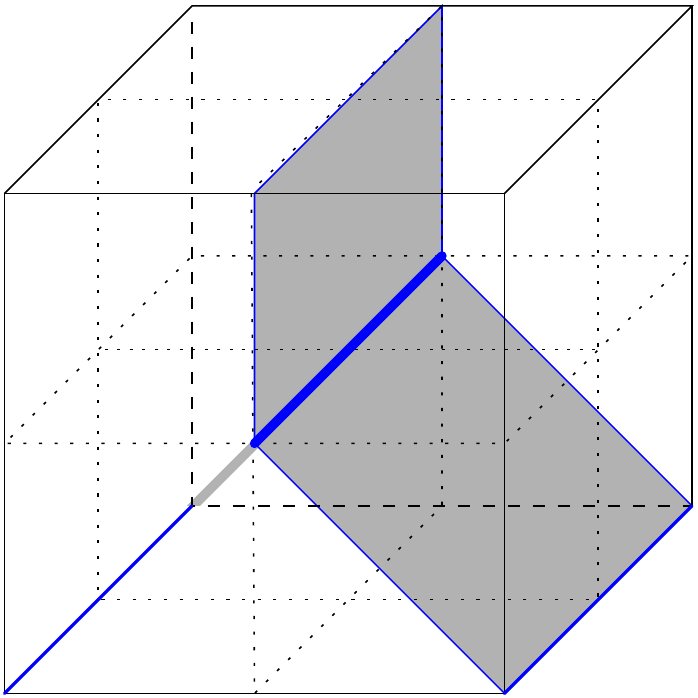} &
            \scalebox{0.3}{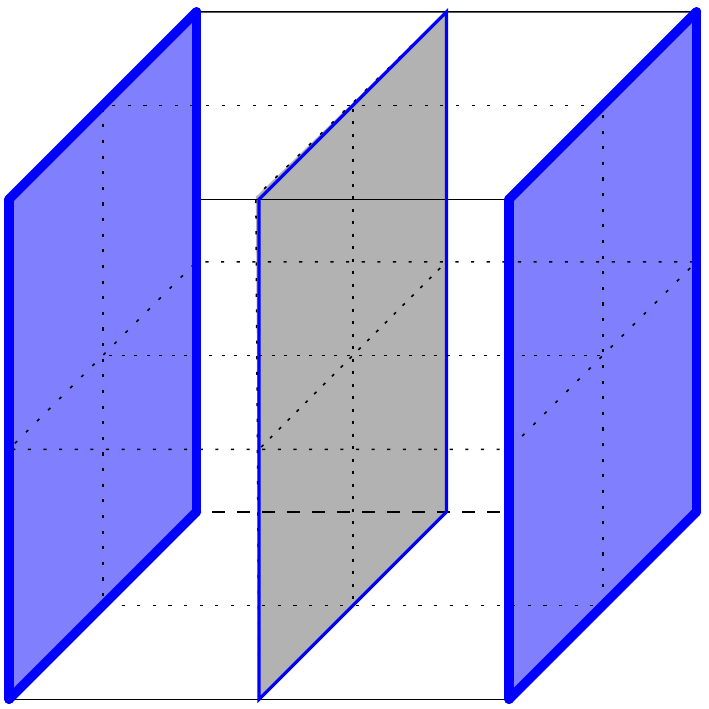} &
            \scalebox{0.3}{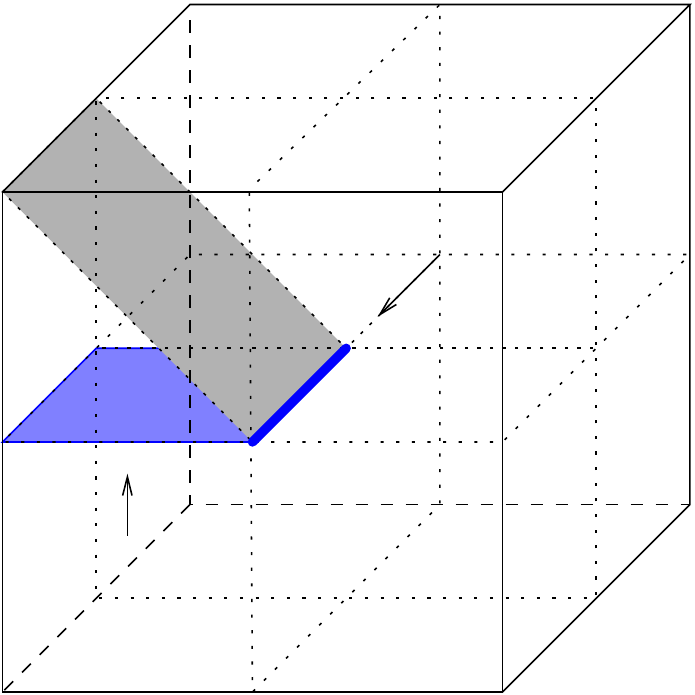} \\
          (g) Two faces &
            (h) Three faces &
         ~(i) Overlapping edges~ &
            ~(j) Overlapping faces~ &
            ~(k) Overlapping edge and face~ \\
      \end{tabular}
      \caption{Cases of straight skeleton within a subvoxel (a-h) or voxel (i-k).}
      \label{F-voxel-skeletons}
   \end{figure}
   show the creation of a skeletal piece in the interior of a subvoxel in
   simple cases.  The features through which the moving fronts enter the
   subvoxel are shown enlarged.
   Figures~\ref{F-voxel-skeletons}(i--k) show (in full voxels) a
   few cases of overlapping entry features.
\ifFullversion%
   In (i), two skeletal pieces ($a$ and $b$) emanate diagonally upward,
   meeting in one edge; the continuation of the skeleton is the piece $c$.
   This figure also models a different case, in which the skeletal pieces
   $a$ and $c$ meet at the thick edge.  In this case the continuation of the
   skeleton is the piece $b$.  In (j), two fronts move horizontally toward
   each other, and meet in one face which become a skeletal piece.
   In (k), a face moves upward vertically, meeting a concave edge which moves
   down diagonally; the continuation of the skeleton is as shown.
\fi

   To recap, the algorithm processes all voxels in layers, in a total
   of $\Theta(V)$ voxel operations, each of which takes constant time and
   contributes a constant amount of skeletal features.
   The algorithm terminates when there are no more voxels to process and the entire straight skeleton of the polycube has been computed.
\ifFullversion%

   One way to see why the skeletal pieces constructed within
   neighboring eighths of a voxel (belonging to the same original voxel) are
   always ``glued'' together consistently without leaving any
   improperly connected dangling skeletal pieces is by imagining that we handle whole voxels at a time,
   processing all eight subvoxels simultaneously by using a much larger
   look-up table.  An alternative argument is
   that different subvoxels of the same voxel are not independent---they are
   hit by the same moving front, either at the same time or with a delay of
   one half of a time unit.
\fi%
\end{proof}

\begin{figure}[t]
\begin{minipage}[b]{.35\textwidth}
   \centering\includegraphics[width=0.6\textwidth]{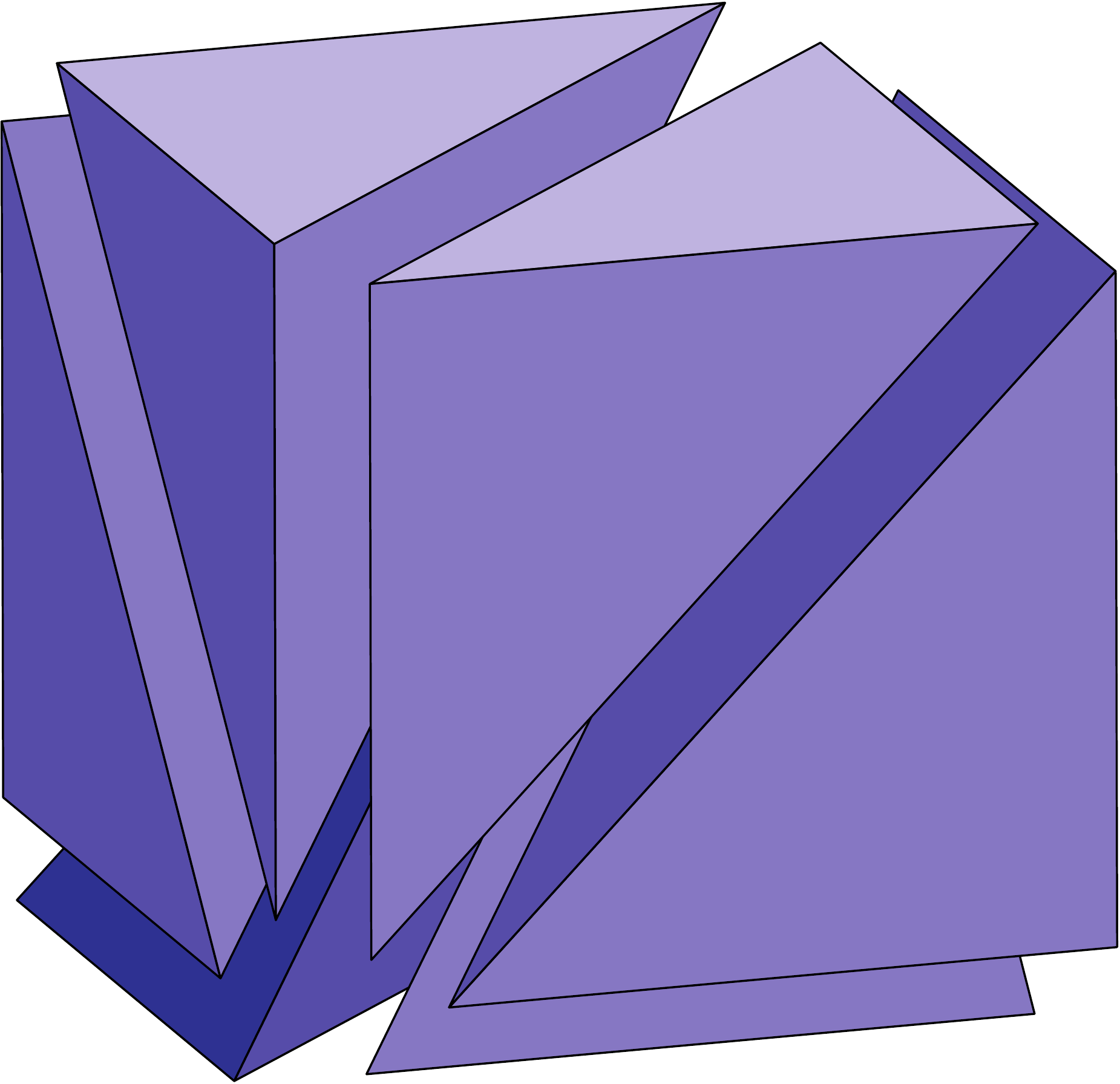}
   \caption{Partitioning a subvoxel into tetrahedra.}
   \label{F-sixcube}
\end{minipage}
\hfill
\begin{minipage}[b]{.645\textwidth}
   \centering
   \begin{tabular}{cc}
      \scalebox{0.55}{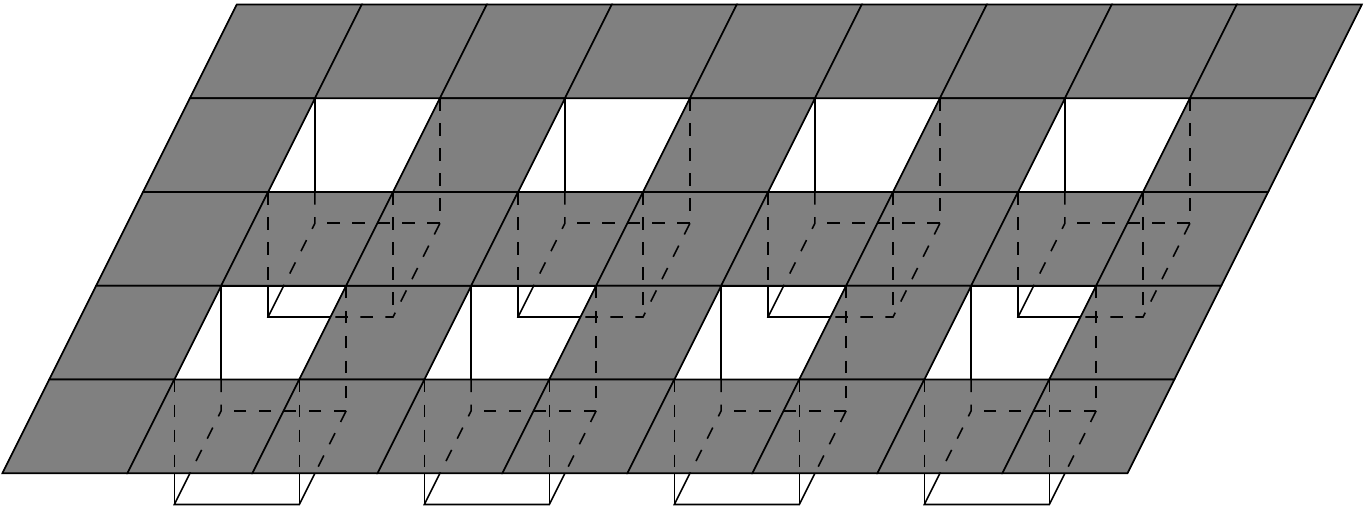} &
         \scalebox{0.3}{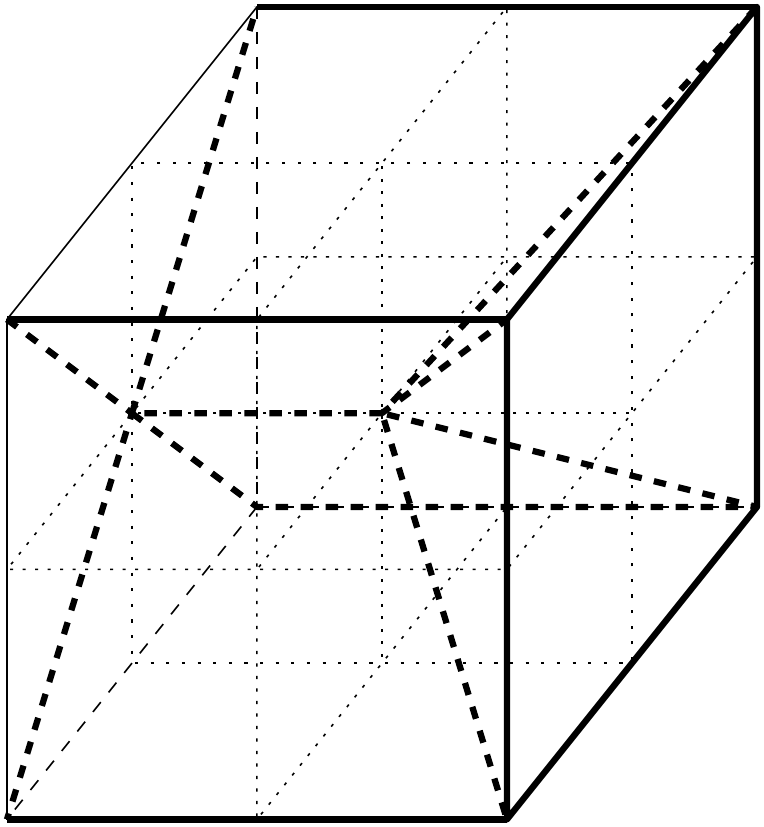} \\
      (a) & (b)
   \end{tabular}
   \caption{A polycube of volume $V$ whose skeleton has complexity $\Theta(V)$.}
   \label{F-theta-v-skeleton}
\end{minipage}
\end{figure}

A unified way to look at all cases above is to partition a voxel, as above, to
eight subvoxels, and then partition each subvoxel into six tetrahedra each of which
is the convex hull of one of the six three-edge paths connecting the
subvoxel's integer vertex with its half-integer vertex (Figure~\ref{F-sixcube}).
Thus, every voxel is partitioned into~48 tetrahedra.
All skeletal cells are unions of these tetrahedra, and the surface
of the skeleton is composed of their boundary triangles.
By maintaining ``visited'' marks on the tetrahedra and on the integer and
half-integer vertices, one can sweep the wavefront and compute the
revealed pieces of the skeleton.

Many simple examples show that the sweeping algorithm is worst-case
optimal up to constant factors, since in the worst case the complexity of a polycube made of $V$
voxels is $\Theta(V)$.  One such example, shown in
Figure~\ref{F-theta-v-skeleton}(a), is made of a flat layer of cubes (not shown),
with a grid of supporting ``legs,'' each a single cube.  Thus, the
number of legs is about one fifth of the total number of voxels.
The skeleton of this object has features within every leg, as shown
in Figure~\ref{F-theta-v-skeleton}(b) (the bottom of a leg corresponds to
the right side of the figure).

\subsection{Output-Sensitive Voxel Sweep}

\label{S-o-s-voxel}

The straight skeleton of a polycube, as constructed by the previous algorithm, contains features within some voxels, but other voxels may not participate in the skeleton; nevertheless, the algorithm must consider all voxels and pay in its running time for them. In this section we outline a more efficient algorithm that computes the straight skeleton in time proportional only to the number of voxels containing skeleton features, or equivalently, in time proportional to the \emph{surface area} of the straight skeleton rather than its
volume\ifFullversion
(since the surface area of the portion of the straight skeleton of an orthogonal
polygon contained in any voxel is $O(1)$)
\fi. 
Necessarily, we assume that the input polycube is provided as a space-efficient boundary representation rather than as a set of voxels, for otherwise simply scanning the input would take more time than we wish to spend.

Our algorithm consists of an outer loop, in which we advance the moving front
of the polycube boundary one time step at a time\ifFullversion (as in the
previous algorithm)\fi, and an inner loop, in which we capture all features of the straight skeleton formed in that time step. During the algorithm, we maintain at each step a representation of the moving front, as a collection of polygons having orthogonal and diagonal edges. As long as each operation performed in the inner and outer loops of the algorithm can be charged against straight skeleton output features, the total time will be proportional to the output size.

In order to avoid the randomization needed for hashing, several steps of our algorithm will use as a data structure a direct-addressed lookup table, which we summarize in the following lemma:

\begin{lemma}
\label{lem:direct-address}
In time proportional to the boundary of an input polycube, we may initialize a data structure that can repeatedly take as input a collection of objects, indexed by integers within the range of coordinate values of the polycube vertices, and produce as output a graph, the vertices of which are sets of objects that have equal indices and the edges of which are pairs of sets with index values that differ by one. The time per operation is proportional to the number of objects given as input.
\end{lemma}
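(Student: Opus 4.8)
The plan is to realize the structure as a single \emph{direct-address table} $A$ keyed by the raw integer coordinate value, spanning the interval $[c_{\min},c_{\max}]$ of coordinate values that occur on the polycube boundary. Direct addressing is deterministic, which is exactly what lets us avoid the randomization a hash table would require; the only real difficulty is that this interval can be far longer than the boundary description itself (consider a $1\times 1\times L$ bar, whose boundary has $O(1)$ complexity but whose coordinates range over $[0,L]$), so we cannot afford to allocate and explicitly zero an array of length $c_{\max}-c_{\min}+1$. First I would scan the boundary once, in time proportional to its size, to read off $c_{\min}$ and $c_{\max}$ and to reserve (but not initialize) the backing arrays; I assume, as is standard in this setting, that reserving uninitialized memory costs $O(1)$ and that only cells actually written incur a charge.

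To support this without paying for the whole range, I would equip $A$ with the classical \emph{initializable-array} technique, which simulates a logically empty array of length $c_{\max}-c_{\min}+1$ with $O(1)$ access and $O(1)$ logical clearing. Concretely, alongside $A$ we keep a running count together with a stack recording exactly those indices written since the last clear, plus a back-pointer array; an index $i$ counts as initialized precisely when its back-pointer lands inside the valid prefix of the stack and the corresponding stack entry points back to $i$. This lets us test in $O(1)$ whether a given coordinate value currently stores anything, write to it, and afterwards reset every touched cell by walking the stack, never touching an empty cell.

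Given an operation with $m$ input objects, I would process it as follows. For each object with index $v$, append the object to the set stored at $A[v]$, creating that set (and pushing $v$ onto the touched-stack) the first time $v$ is seen; once all $m$ objects are inserted, the touched-stack enumerates exactly the distinct indices present, and each corresponding set is emitted as one graph vertex. To produce the edges, I iterate over the touched indices again and, for each present value $v$, probe $A[v+1]$: because the table is keyed by the raw coordinate, this single $O(1)$ lookup decides whether the neighbouring index $v+1$ is also present, and if so I emit the edge joining their two sets. Probing only the larger of the two consecutive values reports each such pair once, and the whole scan is $O(m)$. Finally I clear the table by popping the touched-stack, restoring the structure for the next operation in $O(m)$ time.

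The main obstacle is precisely the mismatch between the boundary size and the coordinate range: a naive direct-address array would pay for the whole range both at initialization and again at every clear. The initializable-array technique removes both costs, so that initialization is charged only to the boundary scan while each operation is charged only to its own $m$ objects, as claimed. Note that the edge condition requires no sorting and no coordinate compression, since keying on the raw integer value makes the ``differ by one'' test a direct $v\pm 1$ probe.
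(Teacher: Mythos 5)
Your proposal is correct and matches the paper's proof in all essentials: a direct-address array keyed by the raw integer coordinate values, a list of touched (nonempty) indices built during insertion, probes at index $\pm 1$ to generate the graph edges, and a final cleanup pass that resets only the touched cells so each operation costs time proportional to its own input. The one difference is initialization: the paper simply zeroes the entire array up front, which already fits the stated budget because for a connected polycube every coordinate slice meets at least one boundary face, so the coordinate range is bounded by the boundary area (the paper's notion of ``boundary'' is the surface area, not a compressed polyhedral description); your initializable-array machinery, with its added assumption that uninitialized memory can be reserved in $O(1)$ time, is thus a harmless but unnecessary refinement.
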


\begin{proof}
We use an array, indexed by the given integer values, containing a list of objects in each array cell. Initially, we set all lists to empty. To handle a given collection of objects, we place each object in the list given by the object's index, and create a list $L$ of nonempty index values as we do so; each time we add an object to an empty list, we add that list's index to $L$. We then create a graph having as its vertices the lists indexed by $L$; for each vertex we search the array for the two adjacent indices and create the appropriate graph edges. Finally, we use $L$ to replace each nonempty list of the array with a new empty list.
\end{proof}

In more detail, in each step of the outer loop of the algorithm, we perform the following steps:
\begin{enumerate}
\item Advance each face of the wavefront one unit inward. In this advancement process, we may detect events in which a wavefront edge shrinks to a point, forming a straight skeleton vertex. However, events involving pairs of features that are near in space but far apart on the wavefront may remain undetected. Thus, after this step, the wavefront may include overlapping pairs of coplanar oppositely-moving faces.
\item For each plane containing faces of the new wavefront boundary, detect pairs of faces that overlap within that plane, and find the features in which two overlapping face edges intersect or in which a vertex of one face lies in the interior of another face. This step can be performed as a sequence of smaller steps:
\begin{itemize}
\item Group coplanar faces of the wavefront using the data structure of Lemma~\ref{lem:direct-address}.
\item Within each plane $P$, form a set $S_P$ of the wavefront edges intersected with each voxel. We assume that the plane is parallel to the $xy$ plane; the $xz$ and $yz$ cases are handled symmetrically.
\item For each plane $P$, use Lemma~\ref{lem:direct-address} to form a graph $G_P$; vertices in $G_P$ represent sets of edges in $S_P$ with the same left $x$-coordinate, and edges in $G_P$ connect sets with consecutive $x$-coordinates. The connected components of this graph are paths representing subsets of wavefront features that might possibly interact with each other, sorted by their $x$-coordinates.
\item Within each connected component of each graph $G_P$, use the sorted order to perform a plane sweep algorithm that finds segment intersections and locates the face containing each vertex. Report as straight skeleton events each intersection between edges of different boundary faces and each vertex that belongs to a boundary face other than the one on which it is a boundary vertex.
\end{itemize}
\item In the inner loop of the algorithm, propagate straight skeleton features within each face of the wavefront from the points detected in the previous step to the rest of the face. If two faces overlap in a single plane, the previous step will have found some of the points at which they form straight skeleton vertices, but the entire overlap region will form a face of the straight skeleton. We propagate outward from the detected intersection points using depth-first-search, voxel by voxel, to determine the straight skeleton features contained within the overlap region.
\end{enumerate}

\noindent In summary, we have:

\begin{theorem}
One can compute the straight skeleton of a polycube in time
proportional to its surface area.
\end{theorem}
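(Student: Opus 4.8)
The plan is to confirm that the three-step outer loop described above constructs the entire straight skeleton, and then to bound its running time by an amortized argument that charges every unit of work to a distinct voxel-incidence of the output, of which there are $\Theta(A)$, where $A$ is the surface area of the skeleton (each voxel carrying $O(1)$ skeleton complexity by the per-voxel analysis of the previous subsection). The correctness check and the charging scheme are largely independent, so I would treat them in turn.

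For the running-time bound I would proceed step by step. The one-time initialization of the structure of Lemma~\ref{lem:direct-address} costs time proportional to the boundary of the polycube, which I would need to show is $O(A)$, on the grounds that every boundary feature is eventually consumed into a skeleton face of comparable area. Within a single outer step, grouping the coplanar wavefront faces and building each graph $G_P$ costs, by Lemma~\ref{lem:direct-address}, time proportional to the number of wavefront edges supplied, and --- crucially --- returns those edges already sorted along the sweep axis inside each connected component, so that the plane sweep of step~2 runs in time linear in its edges plus the number of reported crossings and containments. Each such report is a genuine straight-skeleton vertex, hence a distinct output feature. The depth-first propagation of step~3 visits only voxels that carry freshly revealed skeleton surface, so its cost is charged directly to the output. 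The remaining cost is that of advancing the front in step~1: I would assign to each wavefront edge, at each step it is advanced, the strip of skeleton surface it sweeps during that step, so that an edge meeting $\ell$ voxels pays $O(\ell)$ against the $\ell$ fresh skeleton voxels it reveals; summed over edges and time steps this telescopes to $O(A)$.

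Correctness reduces to checking that every skeleton vertex is caught: it arises either when a wavefront edge collapses to a point, detected during the advancement of step~1, or when two wavefront features that are far apart along the front become spatially coincident, in which case the colliding faces are coplanar and the collision manifests as a crossing of two boundary edges in a common plane or as a vertex of one face entering another --- exactly what the plane sweep detects --- with step~3 then filling in the two-dimensional overlap region that forms a single skeleton face. The main obstacle is precisely the amortization in step~1: a wavefront that is large and persists for many time steps would naively cost the front size times the number of steps, which can vastly exceed $A$ (indeed it is the volume-proportional bound we are trying to beat). The resolution is to represent each wavefront edge compactly as an orthogonal or diagonal segment advanced in $O(1)$ per step, and to argue that the $O(\ell)$ voxels an edge reveals in a step are exactly the fresh output paying for that advance; establishing this charging cleanly, together with confirming that the boundary size is itself $O(A)$, is where the real work lies.
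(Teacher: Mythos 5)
Your proposal follows essentially the same route as the paper, which states this theorem as a summary of the three-step output-sensitive sweep and relies on exactly the charging principle you articulate (``each operation performed in the inner and outer loops\ldots can be charged against straight skeleton output features''), with Lemma~\ref{lem:direct-address} supplying the sorted components for the plane sweep and depth-first search filling in overlap regions. Your identification of the step-1 amortization (charging each edge advance to the strip of skeleton face it sweeps, and noting that the input boundary size must itself be $O(A)$) makes explicit the points the paper leaves implicit, but it is the same argument, not a different one.
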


\section{Orthogonal Polyhedra}

We consider here a more general class of inputs than voxels: \emph{orthogonal polyhedra} in which all faces are parallel to two of the coordinate axes.

\subsection{Definition}

As in the two-dimensional case, we define the straight skeleton of an orthogonal polyhedron $P$ by a continuous shrinking process in which a sequence of nested ``offset surfaces'' are formed, starting from the boundary of the given polyhedron, with each face moving inward at a constant speed.
At time $t$ in this shrinking process, the offset surface $P_t$ for $P$ consists of the set of points at $L_\infty$ distance exactly $t$ from the boundary of $P$. For almost all values of $t$, $P_t$ will itself be a polyhedron, but at some time steps $P_t$ may have a non-manifold topology, possibly including flat sheets of surface that do not bound any interior region. When this happens, the evolution of the surface undergoes sudden discontinuous changes, as these surfaces vanish at time steps after $t$ in a discontinuous way. To make this notion of discontinuity more precise, we define a \emph{degenerate point} of $P_t$ to be a point $p$ that is on the boundary of $P_t$, such that, for some $\delta$, and all $\epsilon>0$, $P_{t+\epsilon}$ does not contain any point within distance $\delta$ of $p$. Equivalently, a degenerate point is a point of $P_t$ that does not belong to the closure of the interior of $P_t$.

At each step in the shrinking process, we imagine the surface of $P_t$ as \emph{decorated} with \emph{seams} left over when sheets of degenerate points occur. To be more specific, suppose that $P$ contains two disjoint faces, both parallel to the $xy$ plane at the same $z$-height; then, as we shrink $P$, the corresponding faces of $P_t$ may grow toward each other, eventually meeting. When they do meet, they leave a seam between them. Seams can also occur when two parts of the same nonconvex face grow toward and meet each other. After a seam forms, it remains on the face of $P_t$ on which it formed, orthogonal to the position at which it originally formed.

We may also describe these seams in a more intrinsic, static way. Let $\Pi$ be any axis-aligned plane containing a face or faces of $P$, and let $S_\Pi$ be the two-dimensional straight skeleton in $\Pi$ of the exterior of these faces, not including the straight skeleton edges that touch the vertices of $P$. Then the decoration on any face $f$ of $P_t$, corresponding to a face of $P$ belonging to plane $\Pi$, is formed by translating $S_\Pi$ orthogonally into the plane of $f$ and intersecting it with $f$.

We define the \emph{straight skeleton} of $P$ to be the union of three sets:
\begin{enumerate}
\item The points that, for some time step $t$, belong to an edge or vertex of $P_t$.
\item The degenerate points for $P_t$ for some time step $t$.
\item The points that, for some time step $t$, belong to a seam of $P_t$.
\end{enumerate}

The straight skeleton may be viewed as a cell complex in $\R^3$, consisting
of \emph{faces} (maximal subsets of points that have a 2D neighborhood in the straight skeleton), \emph{edges} (maximal line segments of points that either do not lie in a face, lie on the boundary of a face, or lie in the intersection of two or more faces), and \emph{vertices} (endpoints of edges).

\subsection{Complexity Bounds}

As each face has at least one boundary edge, and each edge has at least one vertex, we may bound the complexity of the straight skeleton by bounding the number of its vertices. Each vertex corresponds to an \emph{event}, that is, a point $p$ in space (the location of the vertex), the time $t$ for which $p$ belongs to the boundary of $P_t$, and the set of features of $P_{t-\epsilon}$ near $p$ for small values of $\epsilon$ that contribute to the event.
We may classify events into six types.

\begin{description}
\item[Concave-vertex events] describe the situation in which one of the features of $P_{t-\epsilon}$ involved in the event is a \emph{concave vertex}: that is, a vertex of $P_{t-\epsilon}$ such that seven of the eight quadrants surrounding that vertex lie within $P_{t-\epsilon}$. In such an event, this vertex must collide against some oppositely-moving feature of $P_t$.
\item[Reflex-reflex events] describe events that are not concave-vertex events, but in which the event involves the collision between two components of boundary of $P_{t-\epsilon}$ that prior to the event are far from each other as measured in geodesic distance around the boundary, both of which include a reflex edge. These components may either be themselves a reflex edge, or a vertex that has a reflex edge within its neighborhood.
\item[Reflex-seam events] describe events that are not either of the above two types, but in which the event involves the collision between two different components of boundary of $P_{t-\epsilon}$, one of which includes a reflex edge. The other boundary component must necessarily be a seam edge or vertex, because it is not possible for a reflex edge to collide with a convex edge of $P_{t-\epsilon}$ unless both edges are part of a single boundary component.
\item[Seam-seam events] in which vertices or edges on two seams, on oppositely oriented parallel faces of $P_{t-\epsilon}$, collide with each other.
\item[Seam-face events] in which a seam vertex on one face of $P_{t-\epsilon}$ collides with a point on an oppositely oriented face that does not belong to a seam.
\item[Single-component events] in which the boundary points near $p$ in $P_{t-\epsilon}$ form a single connected subset.
\end{description}

\begin{theorem}
The straight skeleton of an $n$-vertex orthogonal polyhedron has complexity $O(n^2)$.
\end{theorem}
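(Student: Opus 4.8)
The plan is to bound the combinatorial complexity by counting the vertices of the straight skeleton, since (as already observed) each face carries a bounding edge and each edge an endpoint, so the numbers of faces and edges are both proportional to the number of vertices, i.e.\ to the number of \emph{events}. I would therefore treat the six event types one at a time and show that each contributes $O(n^2)$ events, with only the reflex-involving types actually reaching this bound. As preliminaries I would record the linear-size counts that drive everything: an $n$-vertex orthogonal polyhedron has $O(n)$ faces, edges, vertices, reflex edges, and concave vertices. For the seams I would use the intrinsic description given above: seams are obtained by translating the two-dimensional skeletons $S_\Pi$ of the $O(n)$ axis-aligned planes $\Pi$ into the faces of $P_t$. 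Since the total edge complexity summed over all these planar subproblems is $O(n)$ and a planar straight skeleton has linear complexity, the total number of seam edges and seam vertices is also $O(n)$.

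For the ``cheap'' event types I would argue as follows. Single-component events are local to one connected sheet of the evolving boundary and behave exactly like events in the planar straight skeleton of a single face; by a charging argument analogous to the two-dimensional case I expect these to number only $O(n)$, though even $O(n^2)$ would suffice. Seam--seam and seam--face events I would charge to seam features: each of the $O(n)$ seam edges or vertices can meet at most the $O(n)$ oppositely-oriented seams or faces it can ever reach, and because the inward offset motion is monotone each such pair can meet only $O(1)$ times, giving $O(n^2)$ in total. Concave-vertex events I would charge to the concave vertex involved: a concave vertex translates rigidly into its single empty octant, so along this monotone trajectory it meets each other feature $O(1)$ times, and the number of distinct concave vertices arising over the whole evolution is $O(n)$, again yielding $O(n^2)$.

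The heart of the argument, and the main obstacle, is bounding the reflex--reflex and reflex--seam events. Here I would first prove the key structural fact that only $O(n)$ distinct reflex edges ever appear during the entire shrinking process; this is the delicate point, since one must control the creation of new reflex edges at events, using that in the orthogonal setting a reflex edge moves rigidly in a fixed axis-parallel bisector direction, so its history traces a single well-defined feature rather than a proliferating family. Granting this, I would charge each reflex--reflex event to the unordered pair of geodesically-far colliding reflex edges, and each reflex--seam event to the pair consisting of a reflex edge and a seam feature; with $O(n)$ reflex edges and $O(n)$ seams there are $O(n^2)$ such pairs. The remaining difficulty is to show that each pair interacts only $O(1)$ times, which I would deduce from the monotonicity of the axis-parallel offset: two features approaching from geodesically distant parts of the boundary close the gap between them monotonically and hence meet a constant number of times, so no pair is overcharged. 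Summing the six bounds gives $O(n^2)$ events and therefore an $O(n^2)$ straight skeleton.
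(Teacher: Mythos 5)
Your proposal follows the paper's proof almost exactly: the same reduction to counting events, the same treatment of the six event types, and the same crucial charging of reflex--reflex events to pairs of reflex edges of $P$ and of reflex--seam and seam--seam events to pairs drawn from the $O(n)$ reflex edges and $O(n)$ seam edges of the planar skeletons $S_\Pi$. The ``delicate point'' you isolate --- that no new reflex edges are created during the evolution, so every reflex edge of $P_t$ corresponds to a reflex edge of $P$ --- is precisely the fact the paper relies on (and, for what it is worth, asserts without further proof). Your bounds for concave-vertex and seam--face events are $O(n^2)$ where the paper gets $O(n)$ by observing that each such event is the \emph{final} event for its concave vertex (resp.\ its seam vertex) and that no event creates new ones, but this only affects constants in the exponent-$2$ bound.

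The one place where you genuinely diverge, and where your argument does not hold up, is the single-component events. These are not planar phenomena confined to one face's two-dimensional skeleton: the prototypical example is an edge of $P_{t-\epsilon}$ bounded by two oppositely-oriented face planes shrinking to a point, which is an intrinsically three-dimensional collapse, and there is no reason such events should number only $O(n)$ --- new edges of this kind are created throughout the evolution by the other event types. The paper's argument is a charging scheme of a different flavor: each single-component event destroys at least one such edge of $P_t$, so these events strictly decrease the edge count and can be charged against the (already counted, $O(n^2)$ many) events of the other five types that created those edges. You should replace your appeal to the two-dimensional analogy with this destruction/creation charging; without it, neither your claimed $O(n)$ nor your fallback $O(n^2)$ for this event type is justified.
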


\begin{proof}
We count the events of each different type.
Each concave-vertex event is the final event involving its concave vertex,
and no event creates any new concave vertex; therefore, there are $O(n)$ such
events. Each reflex edge of $P_t$ corresponds to a reflex edge of $P$, so
each reflex-reflex event of $P_t$ can be charged against a pair of reflex
edges of $P$; each such pair yields at most one reflex-reflex event, so the
total number of such events of this type is $O(n^2)$. Similarly, we may charge reflex-seam events to a pair of a reflex edge of $P$ and an edge of some $S_\Pi$, and each seam-seam event to a pair of two edges of $S_\Pi$ and $S_{\Pi'}$; there are $O(n^2)$ such pairs. Each seam-face event is the final event involving a vertex of $S_\Pi$, so there are $O(n)$ such events. Finally, each single-component event involves at least one edge of $P_{t-\epsilon}$ that is bounded by two oppositely-oriented face planes, and shrinks to nothing in $P_t$; these events reduce the total number of edges in $P_t$, and can be charged against the events of other types that created those edges.
\end{proof}

\subsection{Algorithms}

The view of straight skeletons as generated by a moving surface that changes combinatorially at a sequence of discrete events may also be used as the basis of an algorithm for constructing the skeleton of a given orthogonal polyhedron. It is straightforward to determine in constant time the changes to $P_t$ resulting from an event at time $t$, and to construct the corresponding straight skeleton features, so the problem reduces to determining efficiently the sequence of events that happen at different times in the evolution of this moving surface, and distinguishing actual events from combinations of features that could generate events but don't.

We provide two algorithms for solving this event generation problem, and,
therefore, for constructing straight skeletons, of incomparable complexities.

\begin{theorem}
There is a constant $c$ such that the straight skeleton of an orthogonal polyhedron with $n$ vertices and $k$ straight skeleton features may be constructed in time $O(k\log^c n)$.
\end{theorem}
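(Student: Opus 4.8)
The plan is to simulate the shrinking process as an event-driven kinetic computation. We maintain the current offset polyhedron $P_t$ as a combinatorial cell complex together with, for each of the six event types, a dynamic data structure that reports the earliest candidate event of that type under the current wavefront. At each step we take the global minimum over the six structures, apply the resulting constant-time combinatorial change to $P_t$, emit the associated straight skeleton features, and update the $O(1)$ wavefront features that the event creates or destroys in each affected structure. Because every event produces at least one straight skeleton vertex and the skeleton has complexity $O(k)$, we process only $O(k)$ events; so it suffices that each event, including the re-queries it triggers, costs $O(\log^c n)$ time. A key point is that we never enumerate candidate pairs explicitly: each structure maintains its earliest candidate implicitly, so the ``genuine vs.\ spurious'' distinction is resolved automatically by always processing the global minimum---any candidate that would be blocked by intervening surface is preceded in time by the event (necessarily of one of our six types, which are exhaustive) that does the blocking, and that event is itself the minimum of some structure and is handled first.

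The three local event types are easy to maintain. A single-component event destroys an edge of $P_t$ bounded by two oppositely oriented face planes; as in the proof of the $O(n^2)$ complexity bound we predict it in constant time when the edge is born and charge it to the event that created the edge. A concave-vertex event is the unique final event of a concave vertex, and a seam-face event is the unique final event of a seam vertex; we predict each by shooting a ray in the feature's direction of motion against the oppositely oriented part of $P_t$, using a dynamic orthogonal point-location/ray-shooting structure with $O(\log^c n)$ query and update time, and we keep the earliest such prediction in a heap. Since $P$ has only $O(n)$ concave and seam vertices, the total cost charged to these structures is $O(k\log^c n)$.

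The substance of the proof is the pairwise event types---reflex-reflex, reflex-seam, and seam-seam---each a collision between two features that are geodesically far apart on the boundary, so that there may be $\Theta(n^2)$ candidate pairs and we cannot test them one at a time. Here we exploit the orthogonal structure: every reflex edge and every seam edge is axis-parallel, and two oppositely moving such features, one in a plane at coordinate $a$ advancing toward larger values and one at coordinate $b>a$ advancing toward smaller values, meet at time $(b-a)/2$ provided their orthogonal projections into the shared plane overlap. Finding the globally earliest pairwise event of a given type thus reduces to a query---over the current features, grouped by orientation---for the pair of minimum coordinate gap among those with overlapping projections. We group features by orientation and supporting-plane coordinate (only $O(n)$ distinct values) using the direct-addressed table of Lemma~\ref{lem:direct-address}, and maintain a dynamic multi-level range-searching structure in which this minimum-gap-with-overlap query becomes a closest-pair-type query answerable, along with insertions and deletions, in $O(\log^c n)$ time. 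Each event performs only $O(1)$ feature updates into these structures, so the earliest pairwise candidate of each type is maintained at $O(\log^c n)$ amortized cost per event.

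The hard part is precisely the design of these pairwise structures, so that the earliest genuine collision can be extracted in polylogarithmic time without ever materializing the quadratically many candidate pairs; the per-feature and local event types, by contrast, are routine once the kinetic framework is in place. Assembling the pieces, each of the $O(k)$ events costs $O(\log^c n)$ to process and to re-predict from, and maintaining $P_t$ and emitting its features is constant time per event, yielding an overall running time of $O(k\log^c n)$.
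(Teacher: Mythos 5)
Your overall strategy---an event-driven simulation in which only genuine events are ever popped, each charged to a skeleton feature, with per-event cost $O(\log^c n)$ obtained from geometric data structures that exploit the axis-parallel structure---is the same as the paper's. But you gloss over exactly the step that carries the technical weight, and as written that step has a genuine gap. The problem you need to solve for the pairwise event types is \emph{dynamic closest-pair maintenance} under a nonstandard ``distance'' (earliest collision time, $+\infty$ for non-interacting pairs): after each event you insert and delete $O(1)$ features and must recover the new globally earliest pair without touching the $\Theta(n^2)$ candidates. A ``dynamic multi-level range-searching structure in which this \ldots becomes a closest-pair-type query'' is not a known primitive; multi-level range searching answers queries \emph{relative to a given query object}, not global minimum-over-all-pairs queries under insertions and deletions. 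The paper's proof closes this gap by invoking Eppstein's dynamic closest-pair framework, which reduces maintaining $\min_{x,y} f(x,y)$ (with polylogarithmic overhead) to the unilateral problem ``given a query item $y$, find the first item $x$ in the current set that interacts with it''---and \emph{that} is what gets implemented by constant-dimensional orthogonal range searching after partitioning features into finitely many orientation/velocity classes. The same omission undermines your treatment of concave-vertex and seam-face events: a ray-shooting prediction made for a vertex can be invalidated when the target faces later change shape, and a single face update can stale the predictions of many vertices, so ``keep the earliest such prediction in a heap'' does not bound the re-prediction cost. The paper avoids this by maintaining \emph{both} vertex-face and face-vertex first-interaction structures inside the same closest-pair framework, and by decomposing each face (via the planar skeletons $S_\Pi$ and a trapezoidal decomposition) into moving pieces of bounded description complexity.

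A secondary inaccuracy: your collision-time formula $(b-a)/2$ with a static ``projections overlap'' test models two parallel faces approaching each other, but a reflex edge is the intersection of two moving face planes and translates diagonally, so a reflex-reflex collision is a concurrence of four offset planes, and whether the edges' extents overlap at collision time depends on how their endpoints move. This is repairable in the paper's style---within each orientation-and-velocity equivalence class the collision time is a fixed linear function of the two edges' coordinates and the validity condition is a conjunction of linear inequalities, so each edge can be encoded by four numbers and the first-interaction query becomes four-dimensional orthogonal range searching---but your reduction as stated is incorrect for this event type.
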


\begin{proof}
Each event in our classification (except for the single-component events, which may be handled by a simple event queue) is generated by the interaction of two features of the moving surface $P_t$: pairs of edges or seams in most of the events, pairs of a vertex and a face in some of them.
To generate these events, ordered by the time at which they occur, we use a data structure of Eppstein~\cite{e-demst-95,Epp-JEA-00} for maintaining a set of items and finding the pair of items minimizing some binary function $f(x,y)$; in our application, $f(x,y)$ is the time at which an event is generated by the interaction of items $x$ and $y$, or $+\infty$ for two items that do not interact. The data structure reduces this problem (with polylogarithmic overhead) to a simpler problem: maintain a dynamic set $X$ of items, and answer queries asking for the first interaction between an item $x$ in $X$ and a query item $y$. We need separate first-interaction data structures of this type for edge-edge, vertex-face, and face-vertex interactions. \ifFullversion(Note that, although vertex-face and face-vertex problems are equivalent in terms of the problem of finding the first interacting pair, they are different in terms of the problem of finding the first interaction for a query item $y$, and the reduction involves both versions of the problem.)\fi

To handle the edge-edge interactions, we first partition the edges into
finitely-many equivalence classes by their orientations and by the velocities
at which their endpoints move as $P_t$ evolves, and treat each equivalence
class separately. Within an equivalence class, each edge can be described by
four coordinates\ifFullversion (the position of one of its endpoints and its
length)\fi, so the first-interaction problem can be handled as an appropriate four-dimensional orthogonal range searching problem.

To handle the vertex-face and face-vertex interactions, we need to reduce the faces (which may be complicated planar objects with holes) to regions with bounded description complexity, so that we may again employ orthogonal range searching techniques. To do so, we first partition each planar straight skeleton $S_\Pi$ into regions, where each region is either a face of the input that lies within plane $\Pi$ or the straight skeleton region belonging to one of the input edges. Next, we further partition $S_\Pi$ into trapezoids using a vertical visibility decomposition. As $t$ changes and $P_t$ evolves, each of these trapezoids will move perpendicular to plane $\Pi$, and in addition, its edges may move linearly outward or inward depending on the face structure of $P$ near that face. Additionally, some of these trapezoids may become partially or completely blocked from participating in the boundary of $P_t$, due to other faces that interact with them; however, in our vertex-face and face-vertex interaction data structures, we ignore this blocking effect, as whenever some trapezoid is blocked it is due to some other boundary feature being closer to any objects that might interact with the trapezoid. With this decomposition, and a partition of the input objects into finitely many subclasses according to their shape and velocity, we have a set of objects that can be specified with finitely many dimensions (three for each vertex, five for each trapezoid) to which we may apply an appropriate orthogonal range searching data structure.
\end{proof}

Although within a polylogarithmic factor of optimal, this algorithm may be complex and difficult to implement. If we wish to achieve worst-case optimality rather than output-sensitive optimality, a much simpler algorithm is possible.

\begin{theorem}
The straight skeleton of an orthogonal polyhedron with $n$ vertices and $k$ straight skeleton features may be constructed in time $O(n^2\log n)$.
\end{theorem}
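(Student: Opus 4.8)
The plan is to exploit the complexity bound $k = O(n^2)$ from the earlier theorem, which frees us from output sensitivity and lets us replace the range-searching machinery of the previous proof with a single global priority queue. The central observation I would establish first is that every feature capable of generating an event has a \emph{predetermined, piecewise-linear trajectory that can be read off the input before any simulation is run}. Reflex edges and concave vertices of $P_t$ are offsets of reflex edges and concave vertices of $P$; since the two axis-parallel face planes bounding a reflex edge are fixed, the edge slides along a fixed line at constant velocity until it is destroyed. Seams are captured by the static characterization in terms of the two-dimensional skeletons $S_\Pi$: a seam is a fixed segment inside its plane that merely translates orthogonally at unit speed as the carrying face of $P_t$ moves. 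Hence all event-generating features, together with their motions, are known in advance, and each pair of them interacts (if at all) at a single time obtained by solving a constant-size linear system.

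Given this, I would enumerate all candidate events up front. Computing the skeletons $S_\Pi$ for every axis-aligned plane takes $O(n \log n)$ total and yields $O(n)$ seam edges and vertices (since $\sum_\Pi |S_\Pi| = O(n)$), alongside the $O(n)$ reflex edges and concave vertices. I would then form every potentially-interacting pair --- reflex-reflex, reflex-seam, seam-seam, concave-vertex against feature, and seam-vertex against face --- of which there are $O(n^2)$, compute each candidate interaction time in constant time, and insert the finite ones into a priority queue. The single-component events, which merely collapse an edge bounded by two oppositely-oriented face planes, are detected from the combinatorial structure of $P_t$ and inserted separately; since the total number of edges ever created is $O(n^2)$, these contribute $O(n^2)$ further queue operations. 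Each of the $O(n^2)$ operations on a queue of size $O(n^2)$ costs $O(\log n)$, giving the claimed $O(n^2 \log n)$ bound.

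The main loop repeatedly extracts the minimum-time candidate, and here lies the crux. Because candidate times were computed while ignoring obstruction and feature destruction, each extracted candidate must be \emph{validated} against the current combinatorial state of $P_t$: the two features must still be present and must actually be facing each other (oppositely moving and mutually unobstructed) at the event time. A valid event is processed in $O(1)$ time, updating $P_t$ and emitting the corresponding $O(1)$ straight-skeleton features, and any newly created edge has its collapse time inserted into the queue. The correctness argument I would give rests on time-ordered processing: any interaction that would block the popped candidate occurs strictly earlier, has therefore already been processed, and has already altered the structure so that validation correctly rejects the stale candidate. The delicate part of the proof is thus not the running-time accounting but verifying that maintaining only the combinatorial incidence structure of $P_t$ suffices to perform this validation within the allotted time, and that no genuine event escapes the candidate set; this is exactly where the static trajectory characterization of the first paragraph does the heavy lifting.
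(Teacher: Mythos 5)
Your overall strategy is the same as the paper's: enumerate all $O(n^2)$ potentially interacting pairs (reflex edges, concave vertices, and features of the planar skeletons $S_\Pi$), compute each candidate time by a constant-size computation, process candidates in time order, and validate each against the current state of $P_t$. The running-time accounting and the handling of single-component/edge-collapse events are fine. However, you explicitly defer the crux --- ``verifying that maintaining only the combinatorial incidence structure of $P_t$ suffices to perform this validation within the allotted time'' --- and that is precisely the content of the paper's proof, so as written there is a genuine gap. Validation is not just ``are both features still present'': by the time an edge-edge candidate is popped, each of the two supporting lines may have been carved by earlier events into several surviving subsegments of the wavefront, and you must decide whether the specific point where the collision would occur still lies on a live subsegment of \emph{both} lines. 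The paper resolves this by maintaining, for each edge line of $P_t$, a dynamic balanced search tree over its current subdivision into segments, so the membership test costs $O(\log n)$ per pair; your ``combinatorial incidence structure'' alone does not obviously support this query in logarithmic time, and a naive unobstructedness check could cost far more than $O(\log n)$.

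The vertex-face case hides a second subtlety that your time-ordering argument does not by itself dispatch. One must worry that a portion of $S_\Pi$ (or of a face in plane $\Pi$) has been blocked from the boundary of $P_t$ by some unrelated feature, so that the popped candidate is stale even though both the vertex and the face nominally survive. The paper's observation is that this blocking never needs to be tested explicitly: a static point-location query in $S_\Pi$ plus a comparison of the in-plane $L_\infty$ distance against the candidate time suffices, because any feature that would block the collision would have produced an \emph{earlier} vertex-face event destroying the vertex, after which the ``vertex still exists'' check already rejects the candidate. Your general assertion that earlier processing ``has already altered the structure so that validation correctly rejects the stale candidate'' is the right instinct, but it only becomes a proof once you specify data structures (dynamic search trees per edge line, static point location per $S_\Pi$) in which that alteration is recorded and queryable in $O(\log n)$ time. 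Supplying those two mechanisms would close the gap and make your argument coincide with the paper's.
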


\begin{proof}
For each pair of objects that may interact (features of the input polyhedron $P$ or of the two-dimensional straight skeletons $S_\Pi$ in each face plane $\Pi$), we compute the time at which that interaction would happen. We sort the set of pairs of objects by this time parameter, and process pairs in order; whenever we process a pair $(x,y)$, we consult an additional data structure to determine whether the pair causes an event or whether the event that they might have caused has been blocked by some other features of the straight skeleton.

To test whether an edge-edge pair causes an event, we maintain a binary search tree for each edge, representing the family of
\ifFullversion line \fi
segments into which the line containing that edge (translated according to the motion of the surface $P_t$) has been subdivided in the current state of the surface $P_t$. An edge-edge pair causes an event if the point at which the event would occur currently belongs to line segments from the lines of both edges, which may be tested in logarithmic time.

To test whether a vertex-face pair causes an event, we first check whether the vertex still exists at the time of the event, and then perform a point location query to locate the point in $S_\Pi$ at which it would collide with a face of the input belonging to plane $\Pi$. The collision occurs if the orthogonal distance within plane $\Pi$ from this point to the nearest input face is smaller than the time parameter at which the collision would occur. We do not need to check whether some other features of the straight skeleton might have blocked features of $S_\Pi$ from belonging to the boundary of $P_t$, for if they did they would also have led to some earlier vertex-face event causing the vertex to be removed from $P_t$.

Thus, each object pair may be tested using either a dynamic binary search tree or a static point location data structure, in logarithmic time per pair.
\end{proof}

\section{General Polyhedra}

\label{Sec:general}

\ifFullversion
In this section we consider the straight skeleton of a general polyhedron.
\fi

\subsection{Ambiguity}

Defining the straight skeleton of a general 3D polyhedron is inherently ambiguous, unlike the cases for convex and orthogonal polyhedra.
The ambiguity stems from the fact that, whereas convex polyheda are
defined uniquely by the planes supporting their faces, nonconvex polyhedra
are defined by \emph{both} the supporting planes and a given topology, which
is not necessarily unique.  Thus,
\ifFullversion during the offsetting process, \else while being offset, \fi
a polyhedron can propagate from a given
\ifFullversion intermediate (or initial) \fi
state into multiple equally valid topological configurations.
This issue was alluded to
\ifFullversion in a paper \fi
by Demaine {\it et al.}~\cite{HingedPolyforms3D_WADS2005}\ifFullversion,
in fact, in a reference to a private communication by Jeff Erickson\fi.

We make the nature of this ambiguity more precise in
Figure~\ref{F-ambiguity}(a).
\begin{figure}
   \centering
   \begin{tabular}{cccccc}
      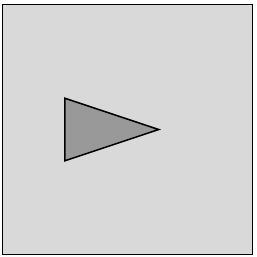 &
         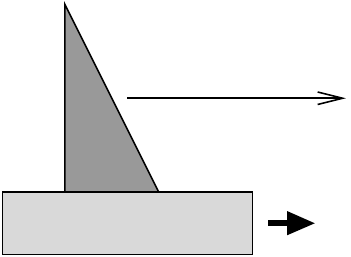 & ~ &
         \includegraphics[width=1.2In]{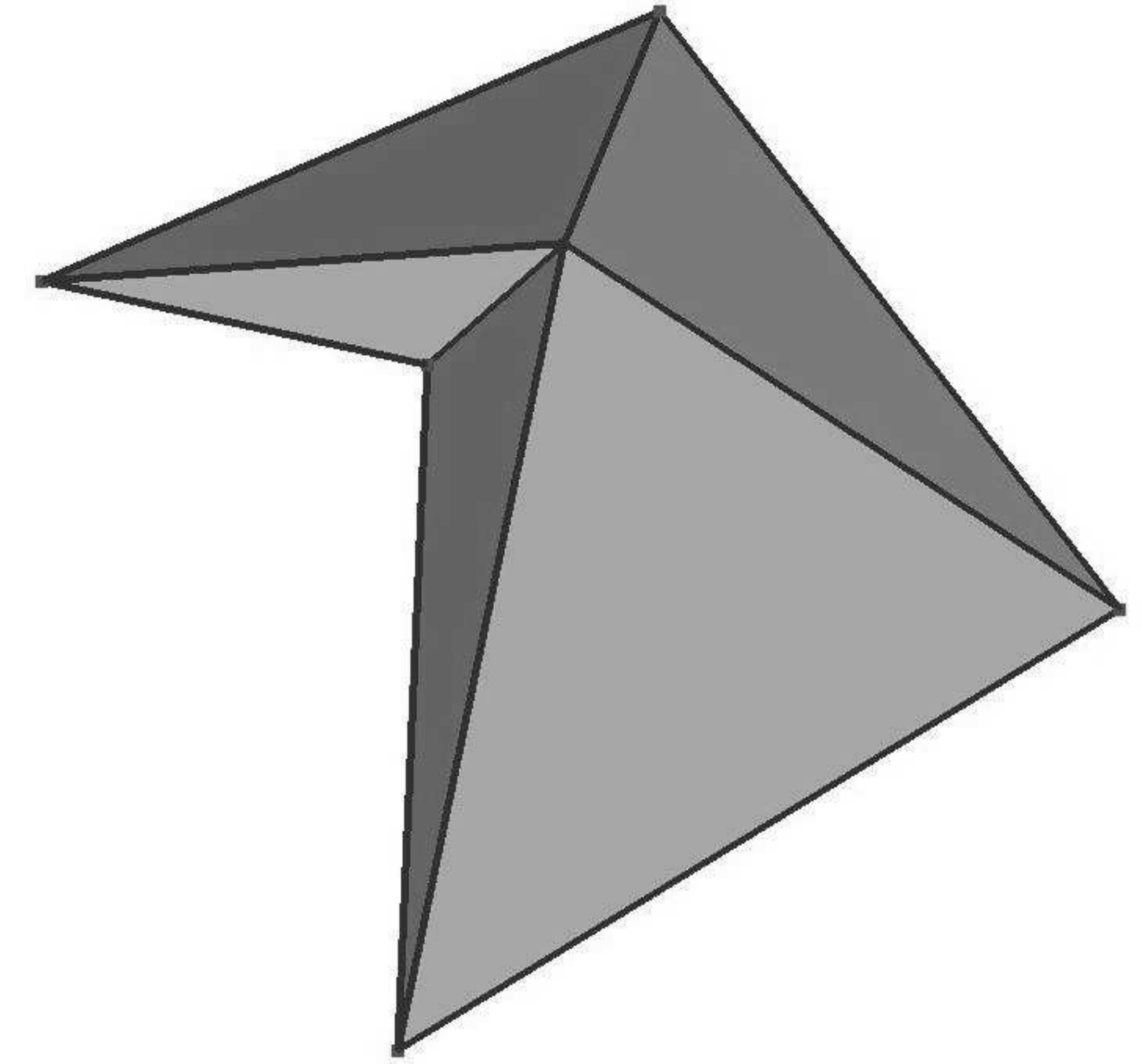} &
         \includegraphics[width=1.2In]{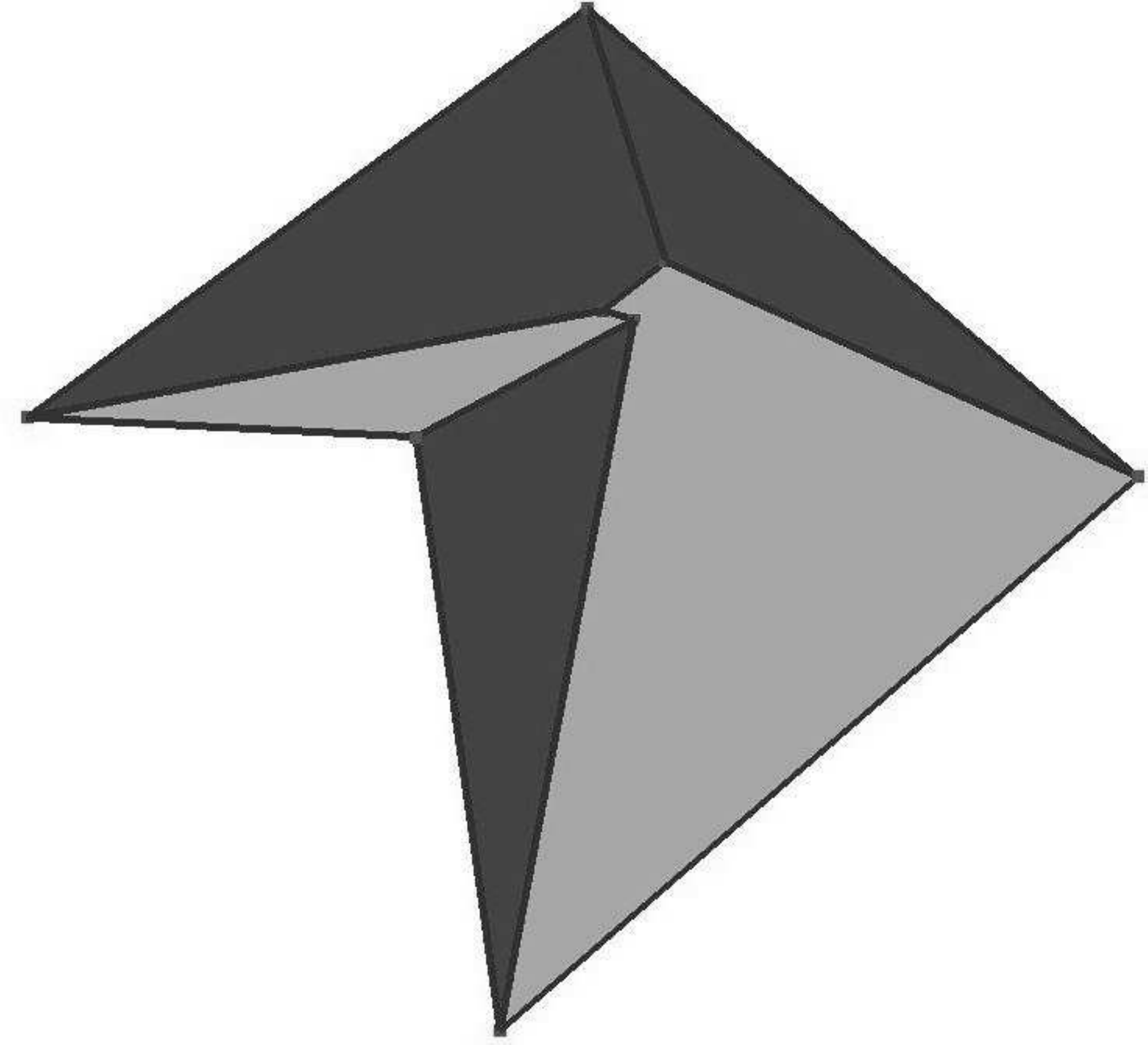} &
         \includegraphics[width=1.2In]{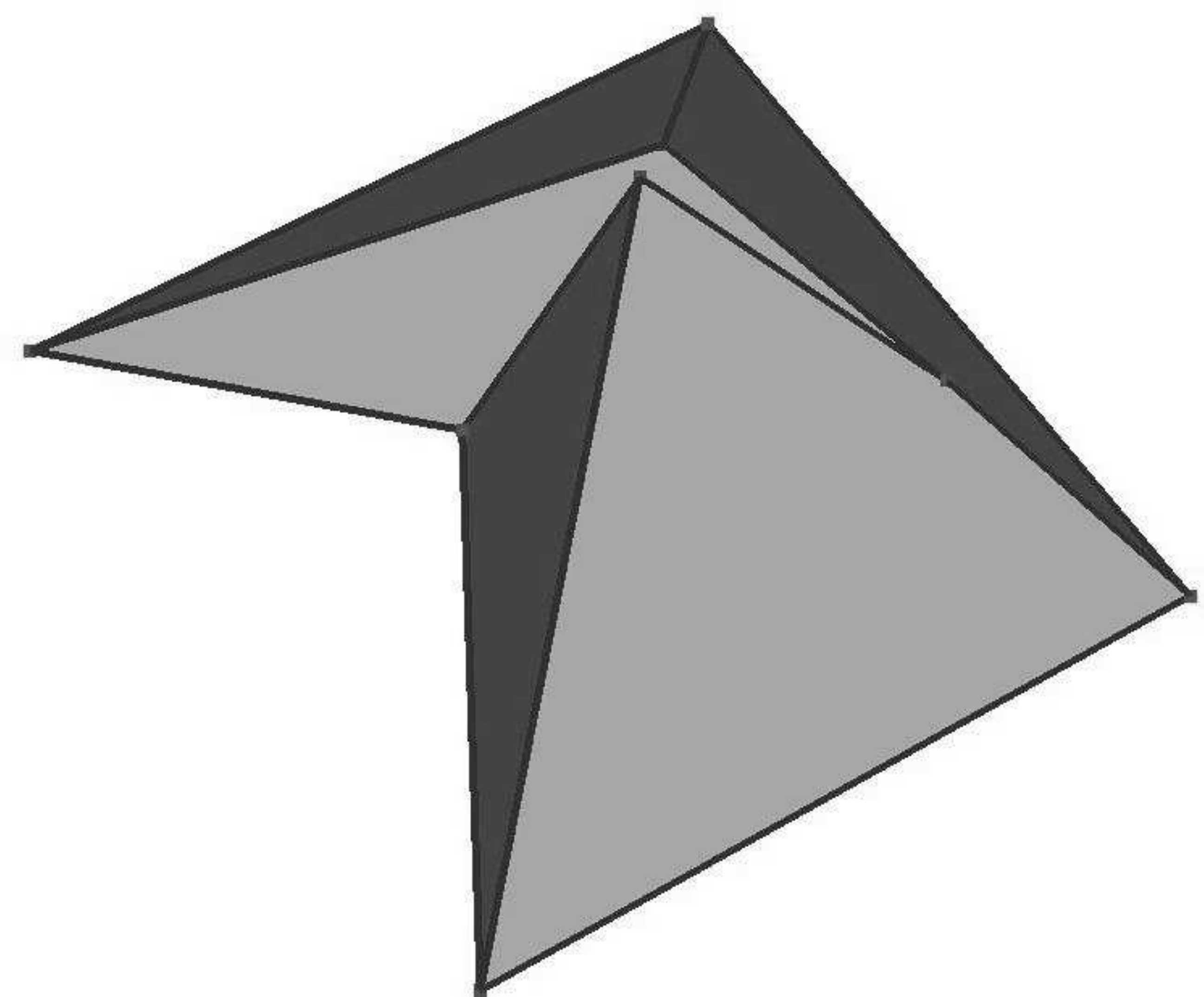} \\
      Top view & Side view ~~~~~ & &
         Initial topology & Our method & Another solution \\
      \multicolumn{2}{c}{(a) A Simple example} & &
         \multicolumn{3}{c}{(b) A more complex example}
   \end{tabular}
   \caption{3D skeleton ambiguity.}
   \label{F-ambiguity}
\end{figure}
The problem is illustrated
with respect to two pieces of skeleton---a wedge, $A$, and a tabletop,
$B$---that are growing relative to each other.
Because of the angle of the two front planes of the wedge, the growing wedge
\ifFullversion
is on a trajectory to eventually grow
\else
eventually grows
\fi
past the tabletop.
The issue that arises at this point is 
to determine how the wavefronts should continue growing.
\ifFullversion
There are several choices (in fact, an infinite number of choices).
For example, the first three coauthors on this paper (in no
particular order) respectively advocated the three following
resolutions (with the fourth coauthor being too wise to agree with any
of them, suggesting a fourth solution described below): 
\else
There are several choices, for example:
\begin{itemize}
\item The front end of the wedge $A$ is blunted by clipping it with
      the plane defined by the side of the tabletop.
\item The wedge continues growing forward, but is blocked from growing
      downward by clipping it with the plane defined by the top of the
      tabletop.
\item The wedge suddenly projects into the empty space in front of the
      table and continues growing out from there.
\end{itemize}

There are other possibilities, as well.
In fact, all three suggestions listed above cause a contradiction or
a noncontinuous propagation of the wavefront in certain cases.
One poor choice, however, not listed above, is to allow the wedge
$A$ to grow through to the other side of $B$ in the case that $A$ reaches
the edge of $B$ and moves past the edge.
With this 3D skeleton definition it is possible to construct
self-contradictory examples
with three wedges, $A_1$, $A_2$, and $A_3$, such that $A_1$ and $A_3$
are on opposite sides of the tabletop and oriented in a way that
if $A_1$ breaks through $B$, then it blocks $A_2$, which in turn does
not block $A_3$, which in turn breaks through $B$ and prevents $A_1$
from breaking through $B$.
Likewise, if $A_1$ doesn't break through $B$, then it doesn't block
$A_2$, which blocks $A_3$, which, in turn doesn't block $A_1$, which
breaks through $B$.
Thus, we can at least conclude that this rule is an
inappropriate choice\ifFullversion for resolving ambiguities in
the definition of general 3D straight skeletons\fi.

A more general example of the inherent ambiguity of the propagation of the
straight skeleton is shown in Figure~\ref{F-ambiguity}(b).  The figure
shows a vertex of degree~5, and two possible topologies during the
propagation.  This is the so-called weighted-rooftop problem:
Given a base polygon and slopes of walls, all sharing one vertex, determine
the topology of the rooftop of the polygon, which does not always have a
unique solution. In our definition of the skeleton, we define a consistent
method for the initial topology and for establishing topological changes
while processing the algorithm's events, based on the two-dimensional
weighted straight skeleton.
This method is described in Section~\ref{SS-gen-algorithm}.

\subsection{A Combinatorial Lower Bound}

\ifFullversion
We first address the convex case.  As mentioned in the introduction,
Held~\cite{Held94} showed that the complexity of the straight skeleton
of an $n$-vertex convex polyhedron is
$\Omega(n^2)$ (see Figure~\ref{fig:QuadraticConvex}).
\begin{figure}
   \subfigure[A convex double ``shell'']{
   \label{fig:ConvexNoSkeleton}
   \includegraphics[width=2In]{convex_no_skeleton}}
   \subfigure[The quadratic-size straight skeleton in black lines]{
   \label{fig:QuadraticConvex1}
   \includegraphics[width=2In]{quadratic_convex1}}
   \subfigure[An upper view of the skeleton]{
   \label{fig:QuadraticConvex2}
   \includegraphics[width=2In]{quadratic_convex2}}
   \caption{A worst-case-complexity skeleton in the convex case.}
   \label{fig:QuadraticConvex}
\end{figure}
The following theorem shows a matching upper bound.  This, we establish a
tight bound of $\Theta(n^2)$ on the complexity of the straight skeleton in
the worst case.

\begin{theorem}
   Let $n$ be the number of faces of a polyhedron. Then, the straight
   skeleton of the polyhedron contains exactly $n$ 3-cells and $O(n^2)$
   faces, edges, and vertices.
\end{theorem}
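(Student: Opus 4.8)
The plan is to reinterpret the skeleton of a convex polyhedron as a \emph{minimization diagram} and then bound its complexity by lifting to one higher dimension. Let $f_1,\dots,f_n$ be the faces of the convex polyhedron $P$, and for each $i$ let $t_i(x)$ be the inward signed distance from $x$ to the plane supporting $f_i$, so that $t_i$ is an affine function of $x\in\R^3$ and $t_i(x)$ is exactly the time at which the inward-moving offset of $f_i$ reaches $x$. The first step is to verify that, because $P$ is convex, the shrinking process is entirely described by these planes: each offset $P_t$ is the convex polyhedron $\{x: t_i(x)\ge t\ \forall i\}$, so the evolution is monotone, introduces no seams, degenerate sheets, or topological changes, and a point $x\in P$ is first reached at time $\tau(x)=\min_i t_i(x)$. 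Consequently $x$ lies on the straight skeleton precisely when the minimum defining $\tau(x)$ is attained by at least two indices, and the $3$-cells are the regions $C_i=\{x\in P: t_i(x)=\tau(x)\}$.

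The second step handles the count of $3$-cells. Each $C_i$ is an intersection of the half-spaces $\{t_i\le t_j\}$ with $P$, hence convex and connected; it is also full-dimensional, since a point a small distance inward from the relative interior of $f_i$ has $f_i$ as its unique nearest face. Thus every face contributes exactly one $3$-cell and distinct faces contribute distinct cells, giving exactly $n$ $3$-cells.

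For the remaining bound I would lift to $\R^4$. Consider the convex polyhedron $Q=\{(x,y)\in\R^3\times\R : y\le t_i(x)\ \text{for all }i,\ x\in P\}$, made bounded by an auxiliary constraint $y\ge -M$ for large $M$; it is defined by $O(n)$ half-spaces. Its upper boundary is the graph of $\tau$, namely the lower envelope of the hyperplanes $y=t_i(x)$, and the orthogonal projection forgetting $y$ carries the faces of this upper boundary bijectively onto the cells of the skeleton: top facets map to the $3$-cells $C_i$, and faces of the upper boundary on which two, three, or four of the constraints $y\le t_i$ are simultaneously active map to the skeleton's $2$-faces, edges, and vertices, respectively. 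Since $Q$ is a $4$-polytope with $O(n)$ facets, the Upper Bound Theorem yields $O(n^{\lfloor 4/2\rfloor})=O(n^2)$ faces of every dimension (perturbing the defining half-spaces into general position if necessary, which only increases the face counts), and hence $O(n^2)$ skeleton faces, edges, and vertices.

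The step I expect to require the most care is the first one: justifying that for a convex polyhedron the straight skeleton, defined by the offsetting process that traces edges and vertices of $P_t$, really coincides with the tie locus of the minimization diagram, so that the combinatorial structure of the skeleton is exactly that of the lower envelope. One must check that convexity rules out the seam and degenerate-point phenomena present in the general definition, and that each skeleton face, edge, or vertex corresponds to the simultaneous arrival of two, three, or four offset planes rather than to any exotic interaction; once this identification is in place, the Upper Bound Theorem does all of the combinatorial work.
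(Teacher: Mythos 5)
Your proposal is correct, but it proves the bound by a genuinely different route than the paper. The paper argues directly and combinatorially: each of the $n$ faces sweeps exactly one $3$-cell that is convex and monotone with respect to that face; because the polyhedron is convex no face is ever split during the propagation, so any two $3$-cells can share at most one skeletal face, giving $O(n^2)$ faces; each $3$-cell, being convex and bounded by $O(n)$ faces, then has $O(n)$ edges and vertices by Euler's formula, for $O(n^2)$ in total. You instead identify the skeleton with the minimization diagram of the $n$ affine arrival-time functions $t_i$, lift to the boundary of a $4$-polytope with $O(n)$ facets, and invoke the (asymptotic, dual) Upper Bound Theorem. Both arguments rest on the same essential fact -- that for a convex input the offset process is just intersection of inward-translated half-spaces, so no splits, seams, or degenerate sheets occur, and the cell of $f_i$ is $\{x\in P: t_i(x)\le t_j(x)\ \forall j\}$ -- and your flagged ``step requiring the most care'' is exactly the convexity observation the paper also leans on. What your route buys is rigor and explicitness: the paper's key claim that two cells share at most one skeletal face is asserted with only a parenthetical justification, whereas your reduction offloads all the counting to a classical theorem and makes each $3$-cell manifestly a convex polytope with $O(n)$ facets (which, as you note in passing, already yields $O(n^2)$ without the lift). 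What the paper's route buys is self-containedness and a geometric picture (face-monotone cells) that it reuses informally elsewhere. One small caution about your lift: the faces of $Q$ where a constraint of $P$ or the auxiliary constraint $y\ge -M$ is active project to features on $\partial P$ or to artifacts rather than to skeleton features, so the correspondence is an injection from skeleton features into faces of $Q$ rather than a bijection; this only over-counts and does not affect the $O(n^2)$ bound.
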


\begin{proof}
   The straight skeleton of the polyhedron contains $n$ 3-cells because
   each face $f$ of the polyhedron sweeps exactly one 3-cell of the skeleton,
   which is monotone with respect to $f$.  Since the polyhedron is convex,
   no face of it is split during the propagation of the polyhedron's boundary.
   In addition, every 3-cell is convex.
   Therefore, every 3-cell of the polyhedron, during the entire offsetting
   process, shares at most one skeletal face with any other 3-cell (otherwise
   faces of the polyhedron would be split, or 3-cells would not be convex).
   Thus, there are $O(n^2)$ skeletal faces, where each 3-cell is bounded by
   $O(n)$ of them.  Hence, every 3-cell is bounded by $O(n)$ skeletal edges
   and vertices, for a total of $O(n^2)$.
\end{proof}
\fi

We now show that the 3D straight skeleton of an $n$-vertex
general simple polyhedron can have
asymptotic combinatorial complexity strictly greater than the
complexity of the 3D straight skeleton of an orthogonal polyhedron.

\begin{theorem}
   The combinatorial complexity of a 3D skeleton for a simple
   polyhedron is $\Omega(n^2\alpha^2(n))$ in the worst case,
   where $\alpha(n)$ is the inverse of the Ackermann function.
\end{theorem}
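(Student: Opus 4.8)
The plan is to construct an explicit family of simple polyhedra, parameterized by $n$, whose straight skeleton exhibits $\Omega(n^2\alpha^2(n))$ combinatorial features. The natural strategy is to realize a lower-bound configuration whose skeleton records, as a two-dimensional arrangement on some face or sweep plane, an object already known to have superquadratic complexity. The inverse-Ackermann factor $\alpha^2(n)$ is the tell-tale signature of \emph{lower envelopes of segments} (or of Davenport--Schinzel sequences of order $3$): the lower envelope of $n$ line segments in the plane has complexity $\Theta(n\alpha(n))$, and superimposing two such families in orthogonal directions yields a grid-like interaction of size $\Theta(n^2\alpha^2(n))$. So first I would recall the Wiernik--Sharir construction of $n$ segments whose lower envelope has $\Omega(n\alpha(n))$ breakpoints, and treat it as the combinatorial engine to be embedded geometrically.

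Next I would engineer a polyhedron in which the offsetting process literally traces out such a lower envelope. Recall from the wedge-versus-tabletop discussion that a slanted wedge face, as it moves inward, sweeps a plane whose height over a base region is a \emph{linear} function of the horizontal coordinates; this is exactly what a line segment in a lower-envelope picture represents. I would take two orthogonal batches of roughly $n/2$ thin slanted wedges each: one batch producing, in its sweep, a family of segments realizing the Wiernik--Sharir envelope in the $x$-parameter, and a second batch doing the same in the $y$-parameter. Each wedge contributes a segment to a one-parameter lower envelope, and the arrival times at which wedges overtake one another correspond to the $\Omega(n\alpha(n))$ breakpoints of that envelope. The crossing interaction between the two orthogonal families then forces $\Omega(n\alpha(n))\cdot\Omega(n\alpha(n)) = \Omega(n^2\alpha^2(n))$ distinct skeleton events, each of which creates a vertex, edge, or face of the skeleton.

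The key steps, in order, are: (1) fix the planar lower-envelope lower-bound construction and isolate its $\Omega(n\alpha(n))$ breakpoints; (2) translate each segment of the envelope into a slanted wedge face of a polyhedron, so that the wedge's inward sweep realizes that segment as a moving front whose height tracks the segment's value; (3) verify that the combinatorial structure of the planar envelope is faithfully recorded by the sequence of collision events among the wedge fronts, so each breakpoint of the envelope yields a distinct skeleton feature; (4) superimpose a second orthogonal copy so the two envelopes interact in a product fashion, multiplying the counts; and (5) confirm that the resulting body is a \emph{simple} (genus-zero, non-self-intersecting) polyhedron with $O(n)$ vertices, so that the $\Omega(n^2\alpha^2(n))$ feature count is genuinely superquadratic in the input size.

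\textbf{The main obstacle} I expect is step (3) together with the geometric realizability in step (2): I must ensure that the subtle overtaking pattern of the Davenport--Schinzel construction survives the passage from an abstract planar envelope to a genuine three-dimensional shrinking process, with no two wedges accidentally merging into a single boundary component (which would collapse events), and with the wedges arranged so that the breakpoints actually become \emph{isolated} skeleton vertices rather than being absorbed into larger faces. This requires choosing the wedge slopes and offsets delicately so that collisions occur at distinct times and at distinct points, and checking that the product structure in step (4) does not create interference between the two orthogonal families that destroys the multiplicativity. Verifying simplicity of the polyhedron in step (5) is routine by comparison but must be stated; the genuine difficulty lies in faithfully transporting the inverse-Ackermann lower bound through the offset dynamics.
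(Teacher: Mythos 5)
Your proposal matches the paper's own argument essentially step for step: both realize the Wiernik--Sharir $\Omega(n\alpha(n))$ segment-envelope lower bound by a row of slanted wedges/prisms whose inward-moving fronts overtake one another so that a cross-section of the wavefront attains the envelope's complexity, and both then place a second, orthogonally oriented copy of this family (the paper uses the ``floor'' and ``ceiling'' of the polyhedron) so that the two wavefront systems collide and multiply their complexities to $\Omega(n^2\alpha^2(n))$. The obstacles you flag (geometric realizability of the overtaking pattern and non-interference of the two families) are exactly the details the paper's sketch also leaves implicit, so your plan is on target.
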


\begin{proof} (Sketch)
   We begin by showing that the cross-section
   of a set of growing wavefronts can have the same complexity as the
   upper envelope of a set of line segments in the plane.
   The construction is illustrated in Figure~\ref{fig:lower-bound}.

   \begin{figure}[hbt]
      \vspace*{-12pt}
      \centering\includegraphics[width=3in]{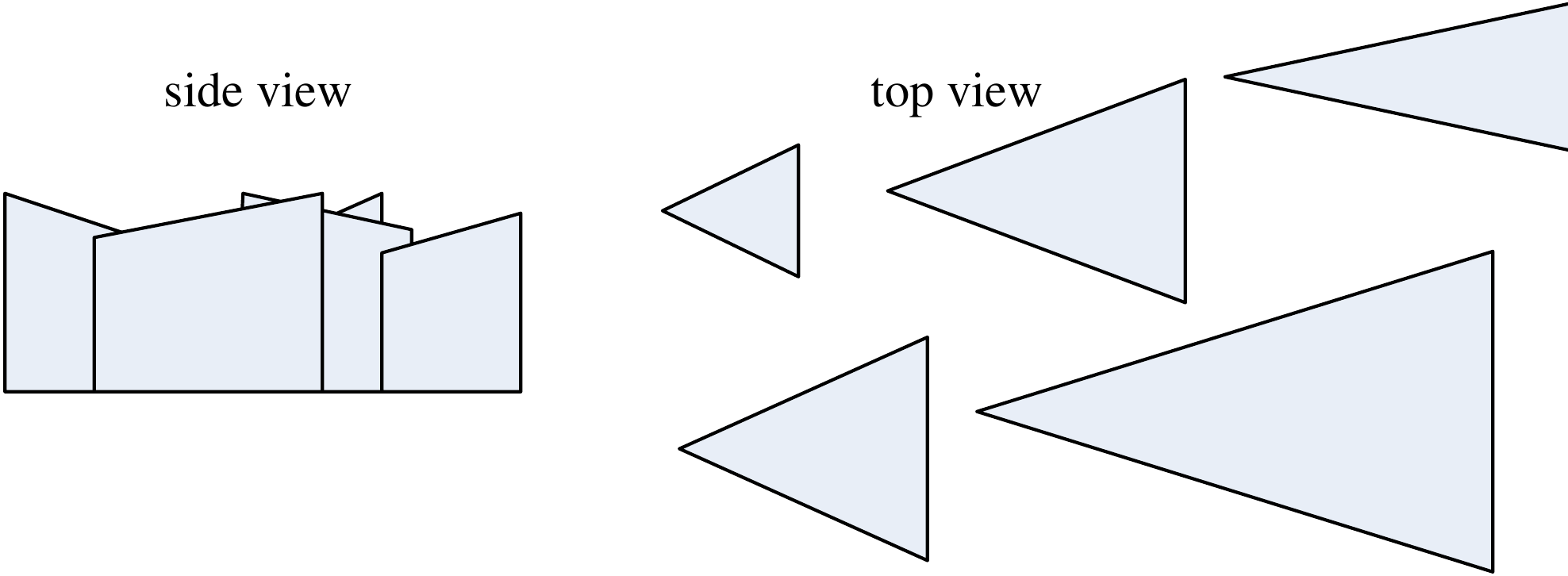}
      \caption{Illustrating 3D skeleton complexity.}
      \label{fig:lower-bound}
   \end{figure}

   The main idea to produce such a cross-section is to set up a sequence
   of triangular prisms sticking up out of a side of the polyhedron.
   Construct the set so that the slopes of their top faces matches those
   of a set of specified line segments and their sides are defined by
   vertical edges corresponding to the segment endpoints.
   Define the wedges in sequence with ever sharper points, so that as
   their wavefronts grow to define the straight skeleton the
   slower-growing wavefronts in the front are overtaken by the faster
   ones in the back, until eventually the complexity of
   the cross-section of the set of growing wavefronts
   matches that of an upper envelope of line segments.
   In this case, we can orient the tip of each wedge so that it will be
   in the visible part of the upper-envelope, which guarantees that the
   cross-section of the latter portion of the set of growing wavefronts
   will have the same complexity as the upper envelope of a set of line
   segments (no matter how the wavefronts are growing in the leading
   portion of this set of growing wavefronts).

   Wiernik and Sharir~\cite{ws-prnds-88} show that the upper envelope
   of a set of line segments \ifFullversion in the plane \fi can have
   $\Omega(n\alpha(n))$ complexity in the worst case.
   Thus, the complexity of the cross-section of the set of growing
   wavefronts in our construction is $\Omega(n\alpha(n))$ in the worst case.
   Our lower bound for the 3D skeleton follows, then, by having such a set
   of growing wavefronts attached to the ``floor'' of a simple polyhedron
   interact with an orthogonal set of similar growing wavefronts attached
   to the ``ceiling'' of a simple polyhedron.
   This is done by making the direction of growing wavefronts much
   longer than their cross-sectional length, which implies that
   as the two sets of wavefronts grow into each other, they produce a
   number of pieces of straight skeleton that is quadratic in the
   complexities of the two sets of wavefronts.
\end{proof}

\comment{

On the other hand, we show the complexity of the straight skeleton is at most
cubic in the complexity of the polyhedron.
\begin{theorem}
   The complexity of the straight skeleton of a general polyhedron is $O(n^3)$.
\end{theorem}

\begin{proof}
   The proof is based on observing the behavior of the event-driven
   algorithm, described in the next section, which computes the skeleton.
   The observation is that during the propagation, every face can be split
   by at most $O(n)$ other faces. More precisely, a face $f$ is actually
   split by edges coinciding to these faces, but, naturally, $f$ cannot be
   split by two different edges adjacent to the same face.
   [[[WHY???]]]
   Therefore, every
   3-cell (of the partition induced by the skeleton) may be adjacent to
   another to another cell by $O(n)$ connected skeletal faces. That gives
   the bound $O(n^2)$ on the total complexity of faces adjacent to a single
   3-cell, for a total complexity of $O(n^3)$.
\end{proof}

}

\subsection{The Algorithm}

\label{SS-gen-algorithm}

\begin{figure}
   \vspace{-6mm}
   \centering
   \subfigure[]{
      \label{fig:InitialTopology}
      \includegraphics[width=1.0In]{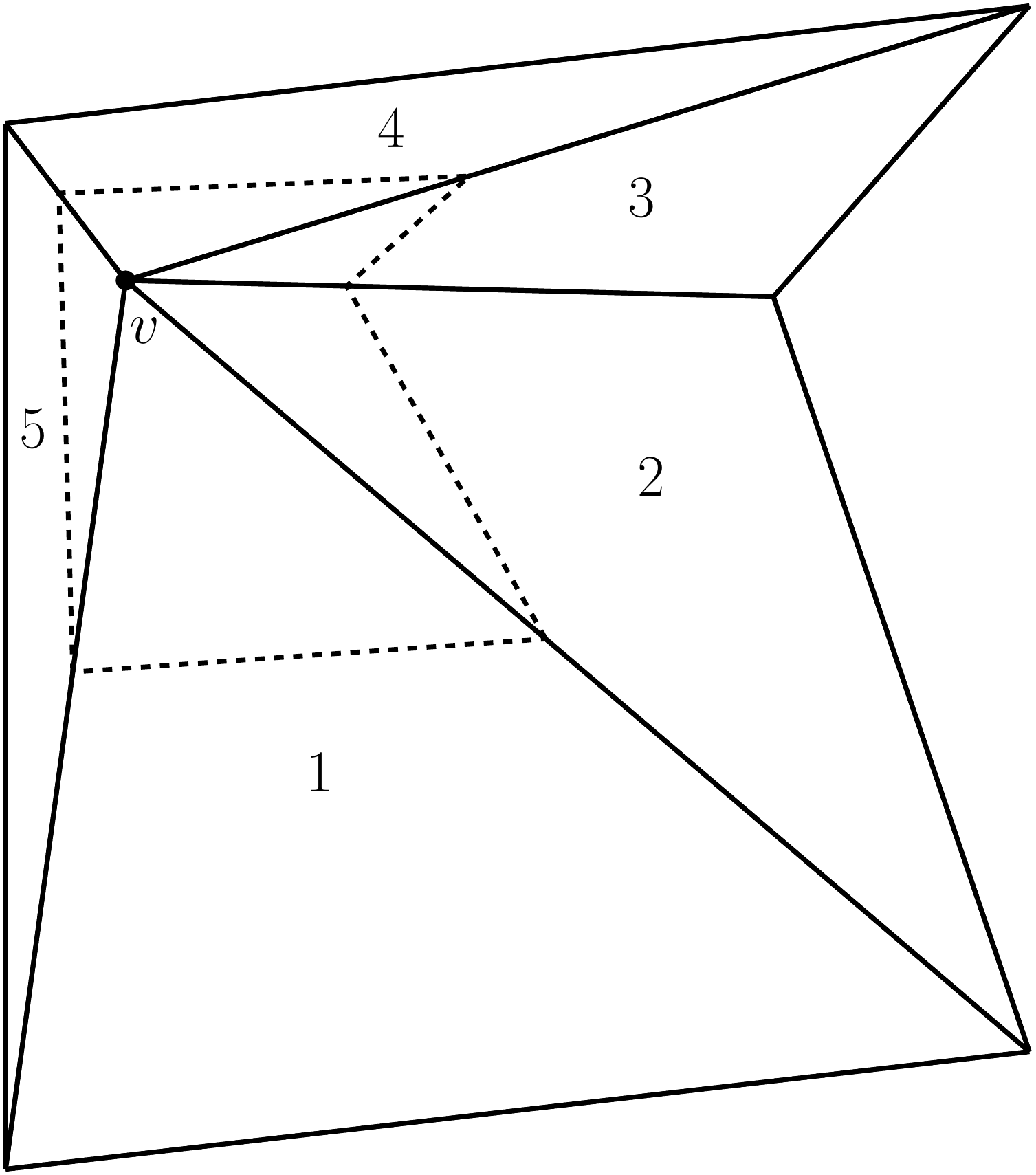}
   }
   \hfill
   \subfigure[]{
      \label{fig:InitialProjection}
      \includegraphics[width=1.0In]{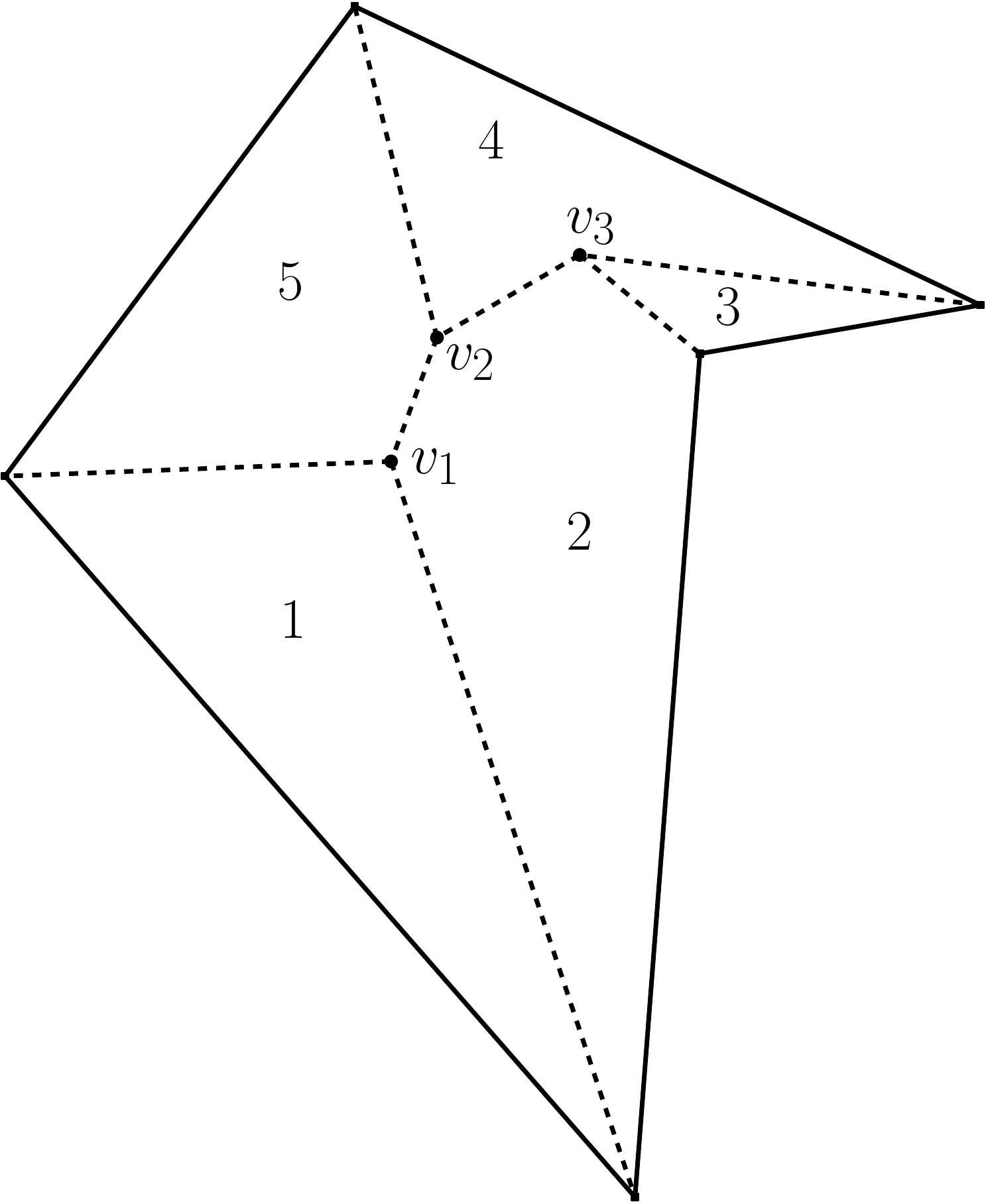}
   }
   \hfill
   \subfigure[]{
      \label{fig:InitialSkeleton}
      \includegraphics[width=1.0In]{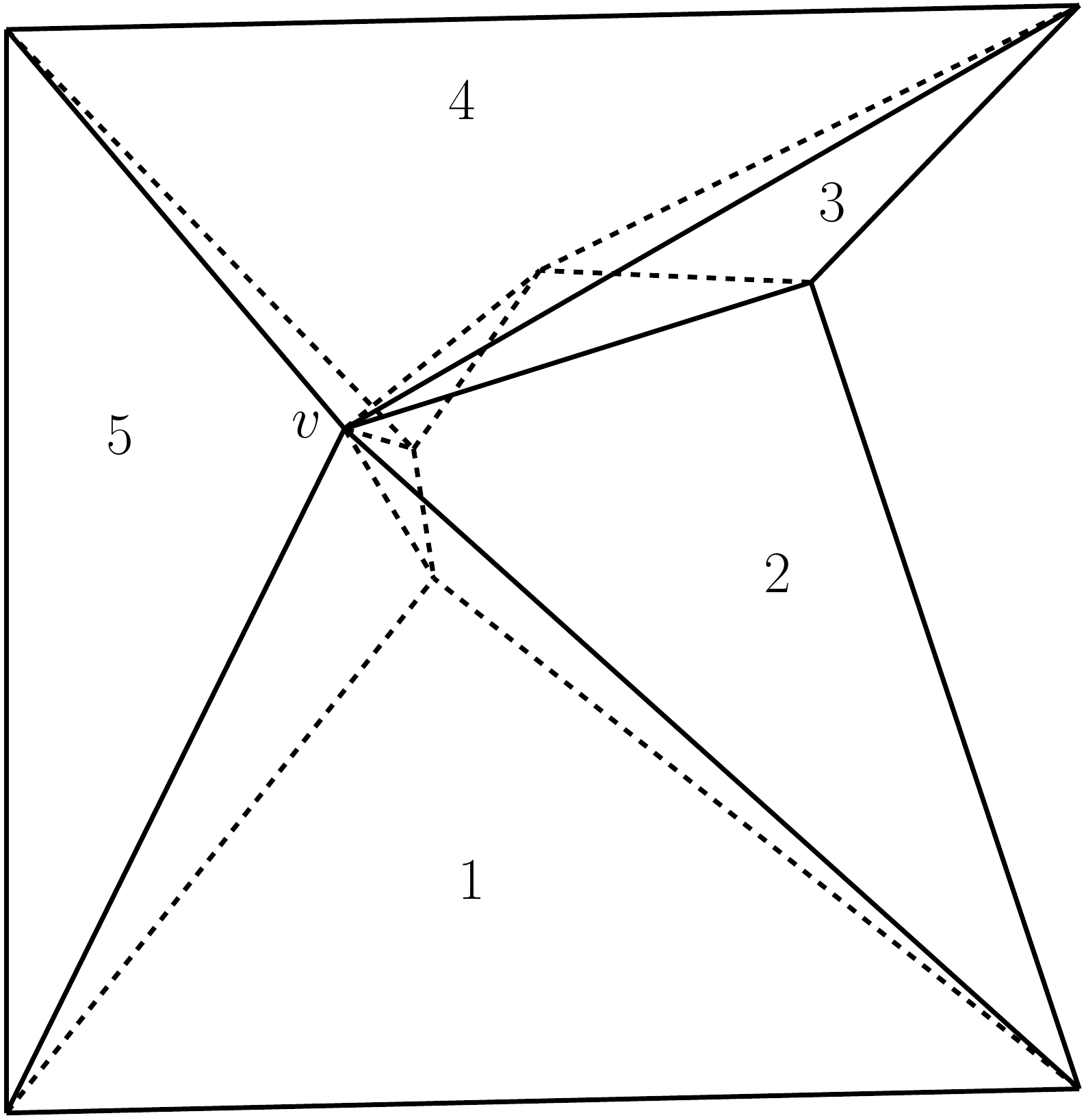}
   }
   \hfill
   \subfigure[]{
      \label{fig:InitialPropagated}
      \includegraphics[width=1.0In]{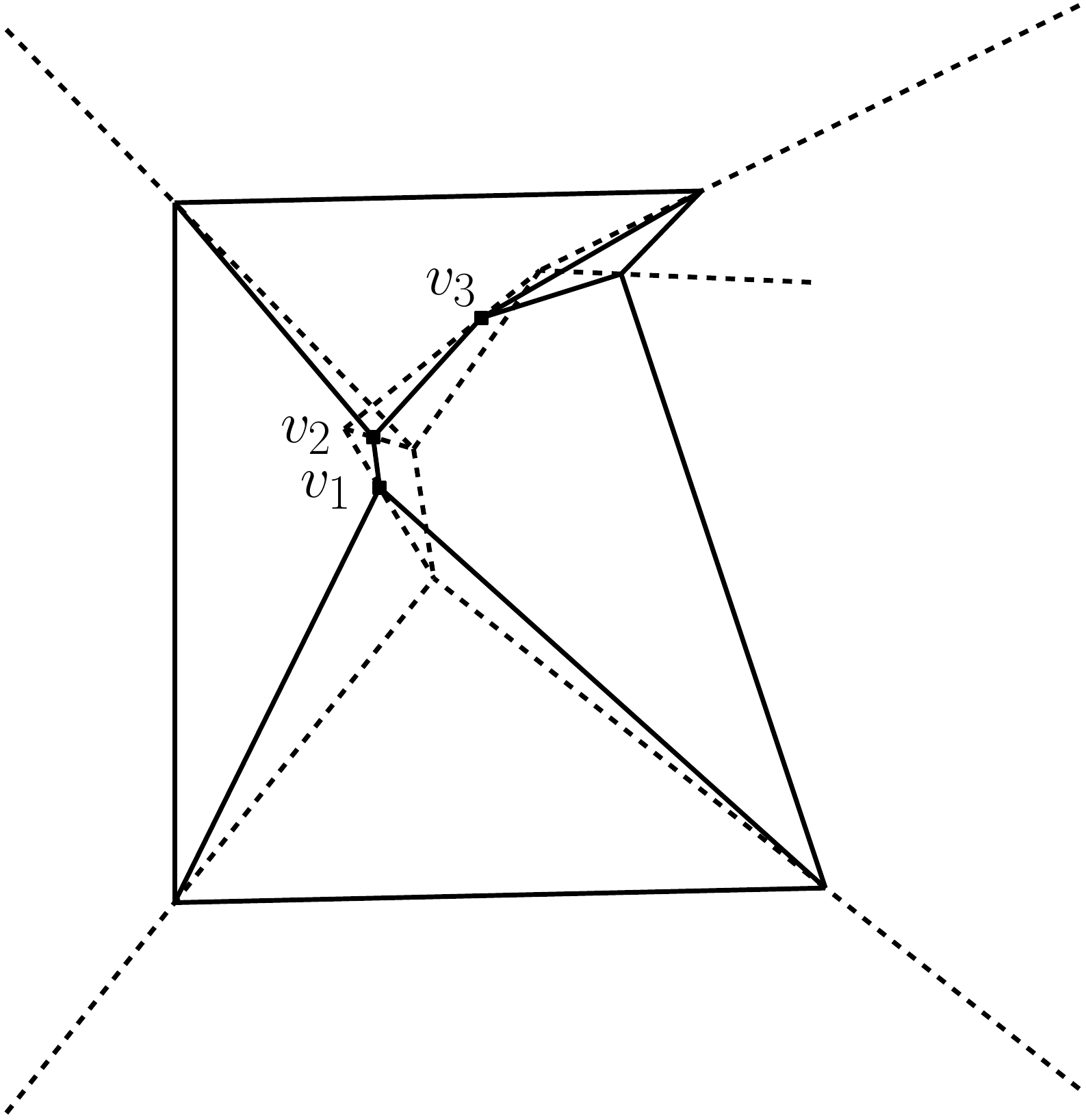}
   }
   \vspace{-3mm}
   \caption{Changing the initial topology of a vertex of degree greater than~3
            (the skeleton is shown in dashed lines):
            (a)~The original polyhedron. Vertex $v$ has degree~5;
            (b)~The cross-section and its weighted straight skeleton.  Vertex
                $v$ becomes three new vertices: $v_1$, $v_2$, and $v_3$;
            (c)~The straight skeleton of the polyhedron. Vertex $v$ spawned
                three skeletal edges;
            (d)~The propagated polyhedron. Vertices $v_1,v_2,v_3$ trace their
                skeletal edges.
           }
   \label{fig:Initial}
\end{figure}

Our algorithm for the general case is an event-based simulation of the propagation of the boundary
of the polyhedron\ifFullversion during the offsetting process\fi.
Events \ifFullversion of the algorithm \fi occur whenever four planes,
supporting faces of the polyhedron, meet at the same point.  At these points
the propagating boundary undergoes topological events.
The algorithm for the general case consists of the following steps:

\begin{enumerate}
\item Collect all possible initial events.
\item While the event queue is not empty:
      \begin{enumerate}
      \item Retrieve the next event and check its validity.
            If the event is not valid, go to Step~2.
      \item Create a vertex at the location of the event and connect to
            it the vertices participating in the event.
      \item Change the topology of the propagating polyhedron according
            to the actions taken in Step~2(c). Set the location of the
            event to the newly-created vertices.
      \item Create new events for newly-created vertices, edges, and faces
            and their neighbors, if needed.
      \end{enumerate}
\end{enumerate}

We next describe the different events and how each type is
dealt with.  The procedure always terminates since the number of all
possible events is bounded from above by the number of combinations of
four propagating faces.

\ifFullversion
\subsubsection{Initial Topology}
\else
\paragraph{Initial Topology.}
\fi
At the start of the propagation, we need to split each vertex of degree
greater than~3 into several vertices of degree~3 (see
Figure~\ref{fig:Initial}).
This is the ambiguous situation discussed earlier; it can have
several valid solutions. Our approach is based on cutting the faces
surrounding the vertex with one or more planes (any
cutting plane intersecting all faces and parallel to none suffices),
and finding the weighted straight skeleton of the intersection of these
faces with the cutting plane, with the weights determined by the dihedral
angles of these faces with the cutting plane, after an infinitesimally-small
propagation. The topology of this two-dimensional straight skeleton tells us the connectivity to use subsequently, and always yields a unique valid solution.
We establish this method for all types of vertices:
\begin{itemize}
\item \textbf{Convex vertices and spikes.}
      In a convex vertex, all of the edges are on the negative side of its
      osculating plane. Edges adjacent to convex vertices do not have to be
      convex. A spike is the opposite of a convex vertex, as all edges are
      on the positive side of the osculating planes. The topologies of both
      types can be determined by sectioning, propagating, and finding the
      weighted straight skeleton.
\item \textbf{Saddle Vertices.}
      In saddle vertices, some of the edges lie on the negative side of the
      osculating plane, denoted ``up'' edges, and some on the positive side, 
     denoted ``down'' edges. We use two sectioning planes. First, we section 
     the edges in the negative side of the plane. Then, we construct their 
     straight skeleton. The section is not a closed polygon, as the 
     intersections of the faces that lead from a ``down'' edge to an ``up'' 
     edge, and vice versa, are infinite rays. Every pair of such infinite 
     rays creates a
     wedge\ifFullversion (see Figure~\ref{fig:SaddleDownSection})\fi. When 
     computing the weighted straight skeleton of this planar shape, one 
     skeletal vertex will be adjacent to an infinite ray for each such wedge. 
     We call these vertices ``portals.''
     Next, we take each  ``up''-streak (i.e., a set of ``up'' edges that are
     consecutive in a conuterclockwise order around the saddle vertex) and
     section it by a second plane above the osculating plane. We get a
     segment-chain, beginning and ending in infinite rays.
\ifFullversion%
     (See Figure~\ref{fig:Saddles}.)
     \begin{figure}
        \centering
        \subfigure[Saddle Vertex]{
        \label{fig:SaddleVertex}
        \includegraphics[width=1.8In]{saddle-original}}
        \hfill
        \subfigure[Weighted Straight Skeleton of the ``down'' section]{
        \label{fig:SaddleDownSection}
        \includegraphics[width=1.8In]{saddle-down-section}}
        \caption{Treating a saddle vertex. The vertex~0 is the saddle, and vertices~1-7, marked as ``up'' or ``down'' vertices are shown in (a). In (b), the section of the down-edges is shown, and the portal vertices are marked by circles. These portals are connected to the respective up vertices (since there are three up streaks, each containing only one vertex).}
        \label{fig:Saddles}
     \end{figure}
\fi%
     We calculate the straight skeleton of the segment-chain, resulting in a
     single vertex of degree~2.\ifFullversion\footnote{In practice, we extend the last two
     segments by a large-enough distance, and connect them, so this vertex is
     the last edge event of this faraway edge of a closed polygon. We do a
     similar process in the ``down'' step in order to avoid computing the
     straight skeleton of unbounded shapes.}\fi{}
     Every ``up''-streak corresponds
     to a ``down''-wedge, and we connect the last ``up'' vertex with its
     corresponding ``portal'' vertex by an edge. Thus, we get a new
     fully-connected topology for saddle vertices.
\end{itemize}

\ifFullversion
\subsubsection{Collecting Events}
\else
\paragraph{Collecting Events.}
\fi
\ifFullversion
The heart of the computation of events is finding the intersection of four
planes. Let $P_i:a_ix+b_iy+c_iz+d_i=0,\:0\leq i \leq 4$, be the plane
equations, and let $N_i=\sqrt{a^2_i+b^2_i+c^2_i}$. By solving the linear
system 
\begin{equation}
\left(
  \begin{array}{cccc}
     a_1 & b_1 & c_1 & N_1 \\
     a_2 & b_2 & c_2 & N_2 \\
     a_3 & b_3 & c_3 & N_3 \\
     a_4 & b_4 & c_4 & N_4 \\
  \end{array} \right)
\left(
  \begin{array}{c}
     x \\ y \\ z \\ t
  \end{array} \right) = 
\left(
  \begin{array}{c}
     -d_1 \\ -d_2 \\ -d_3 \\ -d_4
  \end{array} \right), 
\end{equation}
we obtain the coordinates and time of the event. In general position, this
system has a unique solution. However, the time can be less than zero (or
less than the current time if the event is computed within Step~2 of the
algorithm), in which case the event is not valid. In addition, the
locations of the events must be within the area the participating edges
sweep, i.e., bounded between the edge and the trisectors (the rays
traced by vertices during propagation) of its endpoints.

\noindent \textbf{Edge events.}
There are three types of edge-event outcomes (see
Figure~\ref{fig:EdgeEvents}):
\begin{figure}
   \subfigure[Before edge event]{
   \label{fig:EdgeEventBefore}
   \includegraphics[width=1.2In]{edge_event_before}}
   \hfill
   \subfigure[After edge event]{
   \label{fig:EdgeEventAfter}
   \includegraphics[width=1In]{edge_event_after}}
   \hfill
   \subfigure[Before an edge event which closes a face]{
   \label{fig:FaceEventBefore}
   \includegraphics[width=1.6In]{face_event_before}}
   \hfill
   \subfigure[After closing a face]{
   \label{fig:FaceEventAfter}
   \includegraphics[width=1.2In]{face_event_after}}
   \caption{The situation before and after edge events: in (a,b), edge $e$ vanishes and edge $e'$ is created, and in (c,d) face~4 vanishes.}
   \label{fig:EdgeEvents}
\end{figure}
\begin{itemize}
\item An edge vanishes as its four surrounding planes meet. In this event the
      two planes that were not adjacent become adjacent, sharing a
      newly-created edge. Therefore, an edge vanishes, and a new edge appears.
\item A face vanishes when its last three edges vanish simultaneously. The
      three planes adjacent to the vanishing face now share a vertex. No new
      edges are created.
\item A simplex vanishes when its four faces (and four edges) vanish
      simultaneously. This is the last event for each connected component in
      the propagation of a polyhedron.\footnote{A convex polyhedron
      propagates into a single connected component, as it does not split.}
\end{itemize}

\noindent \textbf{Hole events.}
The reflex vertex establishes a new triangular hole within the face it hits
(see Figure~\ref{fig:HoleEventBeforeAfter}).
\begin{figure}
   \subfigure[Hole event]{
   \label{fig:HoleEventBeforeAfter}
   \includegraphics[width=1.8In]{hole_event_before_after}}
   \subfigure[Split event]{
   \label{fig:SplitEventBeforeAfter}
   \includegraphics[width=1.8In]{split_event_before_after}}
   \subfigure[Edge-split event]{
   \label{fig:EdgeSplitEventBeforeAfter}
   \includegraphics[width=1.8In]{edge_split_event_before_after}}
   \hfill
   \subfigure[Vertex event]{
   \label{fig:VertexEventBeforeAfter}
   \includegraphics[width=1.8In]{vertex_event_before_after}}
   \caption{Topology changes in hole, edge and, edge-split events. The dashed lines denote the topology before the event (in (d), dotted lines show the second split event done by the same edge. The endpoints of the propagating split edge here are marked by squares. The location of the immediate vertex event is marked by a circle), and the solid lines denote the topology after the change.}
   \label{fig:EventBeforeAfter}
\end{figure}
The face is then adjacent to the three faces defining the reflex vertex,
sharing the edges of the hole with them. The three vertices of the hole are
new ridges in the hit face.

\noindent \textbf{Split events.}
The ridge vertex splits through the edge and creates a new ridge in the
other face adjacent to the split edge (see
Figure~\ref{fig:SplitEventBeforeAfter}).  The edge splits into two parts.

\noindent \textbf{Edge-split events.}
New Ridges are created, in the same manner as in the split event, on all
four faces adjacent to the edges (see
Figure~\ref{fig:EdgeSplitEventBeforeAfter}).

\noindent \textbf{Vertex events.}
Two ridges sharing a reflex edge unite. the face which both ridges share is
split (see Figure~\ref{fig:VertexEventBeforeAfter}).
\else
In the full version of the paper we describe how events are
collected, classified as valid or invalid, and handled by the algorithm.
In a nutshell, each processed event arises from interactions of features of the wavefront, and gives rise to potential future events, 
all of which may be specified by the interaction of four planes.
However, a potential event may be
found invalid (not corresponding to an actual event in the propagation of the wavefront) already when it is created (since its time stamp is less than
that of the current event, or because its geometric location is outside
its ``region of influence''), or later when it is fetched for processing
(if another already-processed event has annulled it).  Each valid event
results in the creation of features of the skeleton, and in a topological 
change in the structure of the propagating polyhedron. This part of the algorithm requires a careful case analysis but is conceptually straightforward.
\fi

\ifFullversion
\subsubsection{Handling Events}
\else
\paragraph{Handling Events.}
\fi
Propagating vertices are defined as the intersection of propagating planes.
Such a vertex is uniquely defined by exactly three planes, which also define
the three propagating edges adjacent to the vertex. (When an event creates a
vertex of degree greater than~3, we handle it as as in the initial
topology---see above.)
The topology of the polyhedron remains unchanged during the propagation
between events. Figure~\ref{fig:Events} depicts all possible events:
\begin{figure}
   \vspace{-9mm}
   \centering
   \subfigure[Edge]{
      \label{fig:EdgeEvent}
      \includegraphics[width=1.1In]{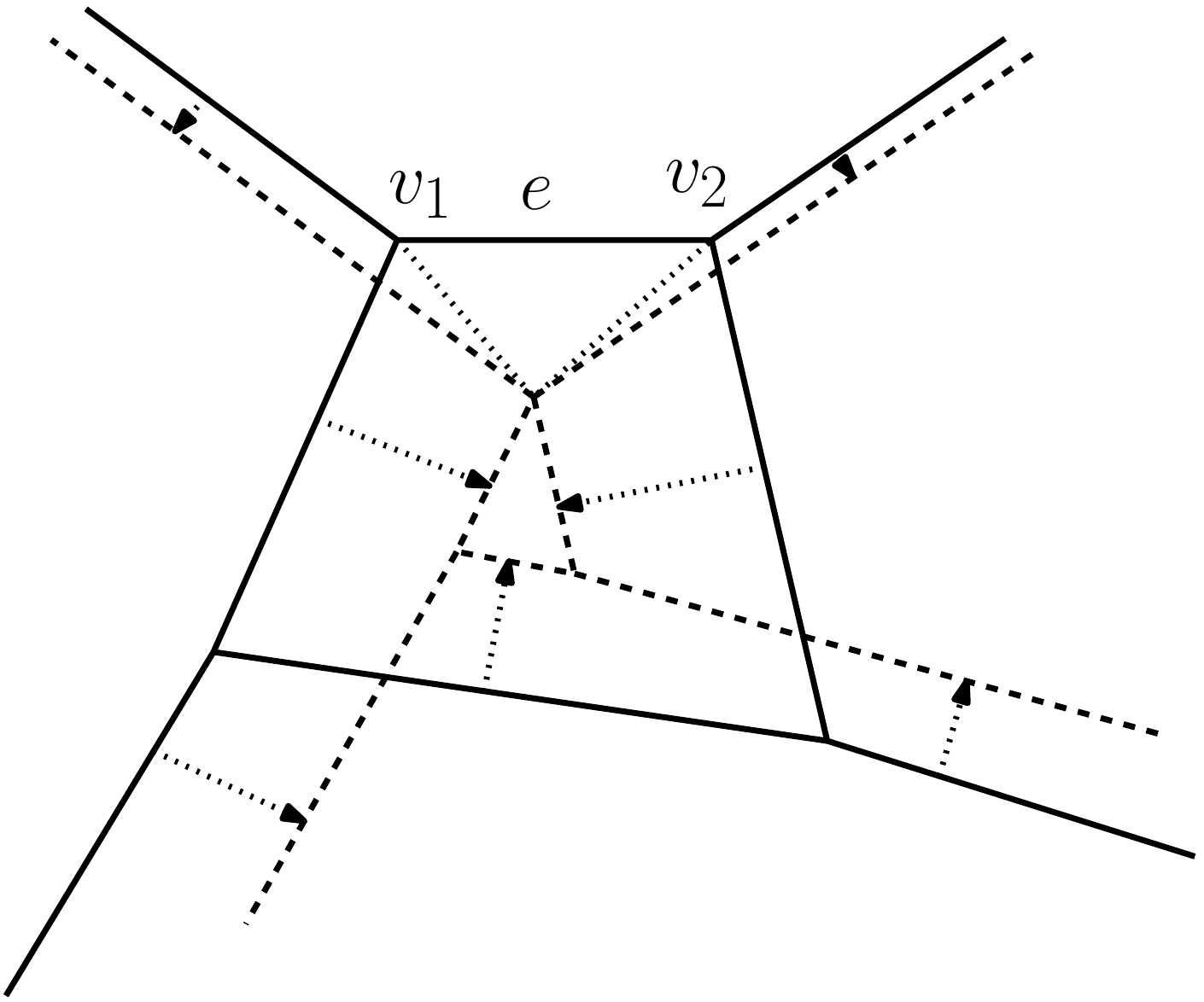}
   }
   \hfill
   \subfigure[Hole]{
      \label{fig:HoleEvent}
      \includegraphics[width=1.1In]{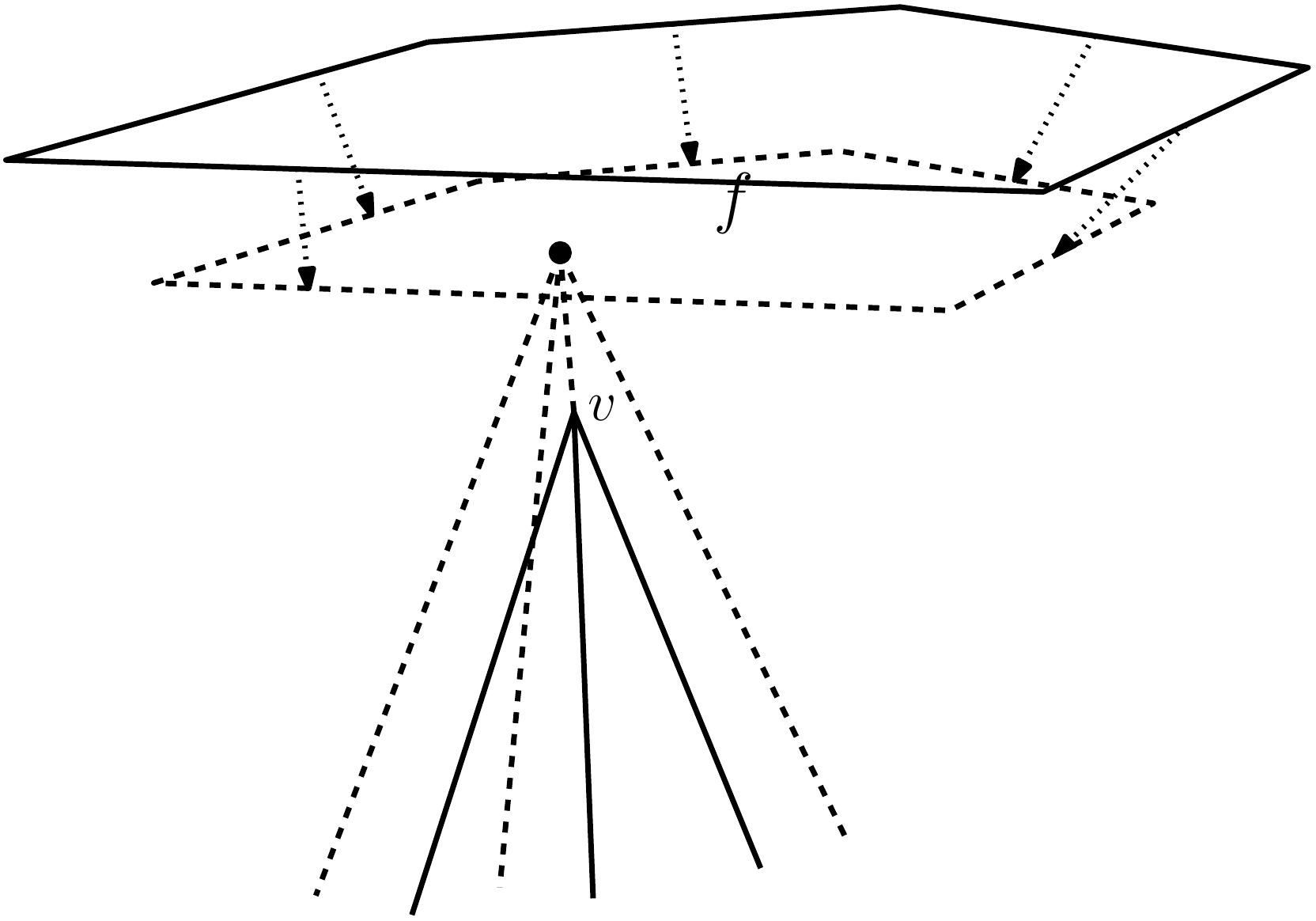}
   }
   \hfill
   \subfigure[Split]{
      \label{fig:SplitEvent}
      \includegraphics[width=0.65In]{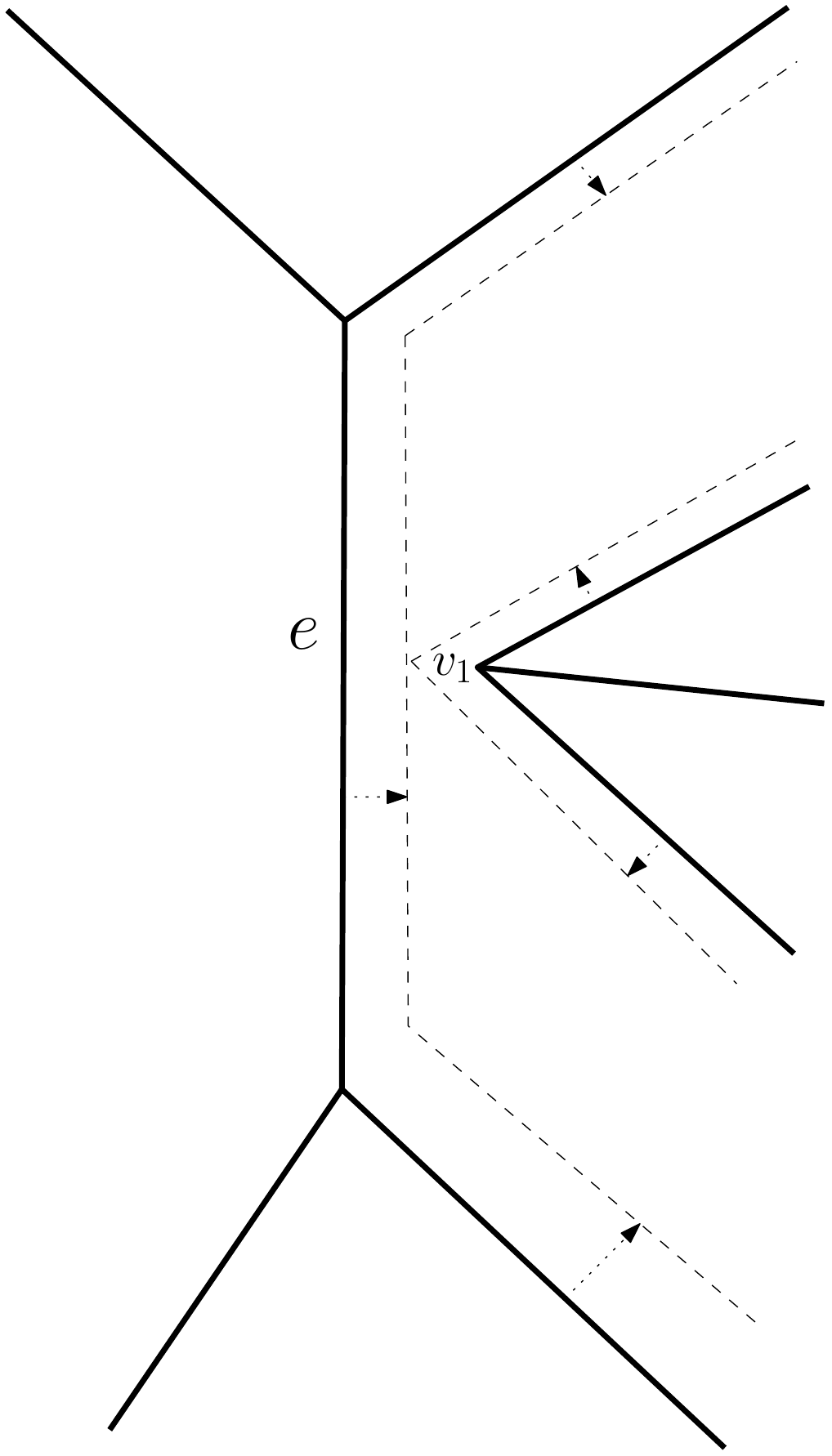}
   }
   \hfill
   \subfigure[Edge-Split]{
      \label{fig:EdgeSplitEvent}
      \includegraphics[width=1.1In]{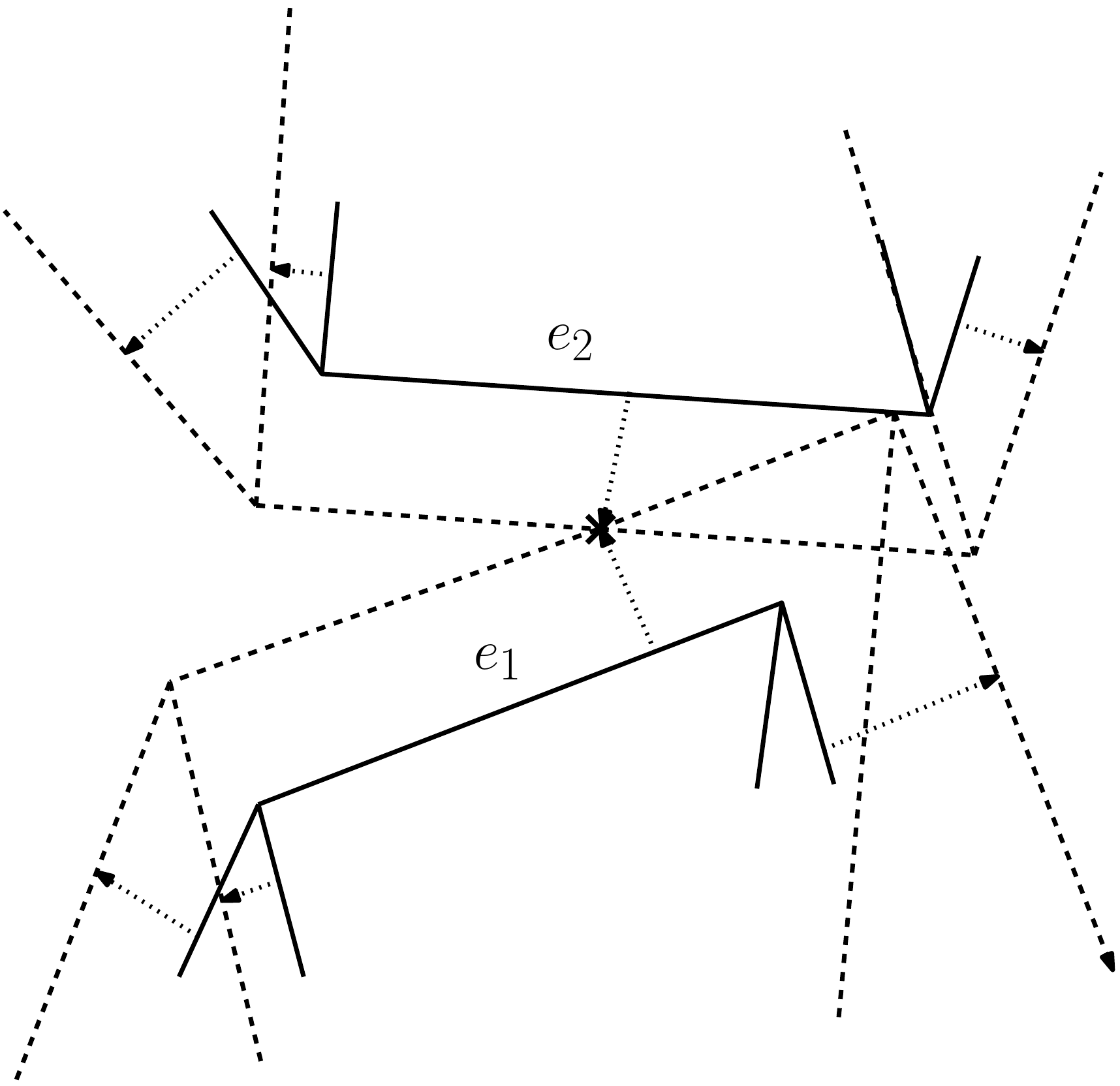}
   }
   \hfill
   \subfigure[Vertex]{
      \label{fig:VertexEvent}
      \includegraphics[width=1.0In]{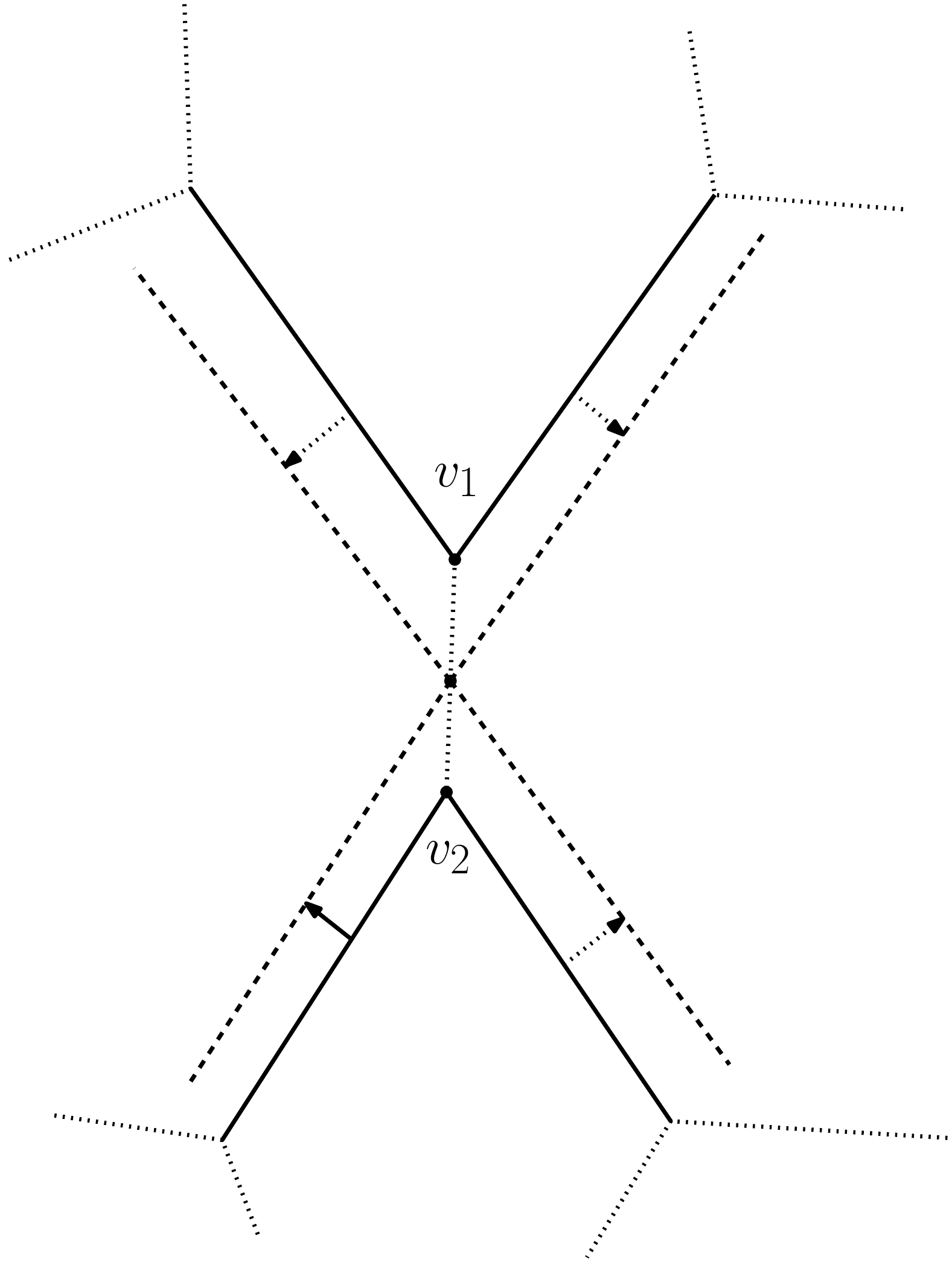}
   }
   \caption{The five types of events. The solid lines are the original edges,
            and the dashed lines are their locations after the propagation.
            The dotted arrows show the progression of these edges, up to the
            time of the event.}
\label{fig:Events}
\end{figure}
\begin{enumerate}
\item \textbf{Edge Event.}
      An edge vanishes as its two endpoints meet, at the meeting point of the four planes around the edge.
\item \textbf{Hole Event.}
      A reflex vertex (adjacent to three reflex edges, called a ``spike'')
      runs into a face. The three planes adjacent to this vertex meet the
      plane of the face. After the event, the spike meets the face in a small triangle.
\item \textbf{Split Event.}
      A ridge vertex (adjacent to one or two reflex edges) runs into an
      opposite edge. The faces adjacent to the ridge meet the face adjacent
      to the twin of the split edge. This creates a vertex of degree greater than 3, handled as in the initial topology.
\item \textbf{Edge-Split event.}
      Two reflex edges cross each other. Every edge is adjacent to two planes.
\item \textbf{Vertex event.}
      Two ridges sharing a common reflex edge meet. This is a special case of
      the edge event, as it is the meeting of the endpoints of the reflex
      edge, but it has different effects, and so it is considered a different
      event. Vertex events occur when a reflex edge runs twice into a face,
      and the two endpoints of this edge meet.
\end{enumerate}

A convex polyhedron induces only edge events during propagation, and reduces
to a single tetrahedron before vanishing at the simultaneous edge
events of the last four edges. A general (nonconvex) polyhedron may split into
several connected components, which will be reduced into tetrahedra and
similarly vanish. All these events are meeting points of four planes, and
other types of events are not accounted for, as they do not occur in general
position (e.g., two reflex vertices running into each other), which are
meeting points of more than four planes at a location.
\ifFullversion
\else
Note that the propagation of the boundary is ``memoryless'' in the sense
that handling an event does not depend on the history of propagation.
Therefore, degenerate events are treated exactly the same as initial
vertices of degree greater than~3.
\fi

\ifFullversion
\subsubsection{Data Structures}
\else
\paragraph{Data Structures.}
\fi
We use an event queue which holds all possible events sorted by time, and a
set of propagating polyhedra, initialized to the input polyhedron (or
polyhedra), after the initialization of topology.
\ifFullversion
The propagating polyhedra
(the generalization of the SLAV structure in two dimensions) contain only
topological information and locations of vertices at their creation; no
intermediate offset polyhedra are constructed.\footnote{
   Technically, it is easy to construct such offset polyhedra by computing
   the intersections of adjacent planes, offset by the given time.
} When an event occurs, we connect the vertices that participate in the
event with skeletal edges, with one end at the vertices in their former
location, and the other at the location of the current event, thus
connecting the locations of last and current events. The faces and 3-cells
are formed in a post-processing step, although they can easily be formed
on-line.
\else
The used structure is a generalization of the SLAV structure in two
dimensions.  We provide the details in the full version of the paper.
\fi

\ifFullversion
\subsubsection{Degeneracies}
A degeneracy is usually created where the polyhedron has some symmetry. The
most common degeneracy involves parallel features (mostly faces, like in a
box) collapsing one to the other, or concurrent propagation of more than
three planes (such as in a vertex which cross-section is a regular polygon).
These can be easily handled with a slight perturbation and the usage of
zero-length edges to get correct results.  Uncommon (but existing)
degeneracies are reflex events that involve more than four planes---such as
events that involve two ``spikes'' or two ridges not sharing a common reflex
edge (i.e., unlike our vertex event). All of these events can be treated as
a composition of the five basic events (e.g., two meeting ridges can be
treated as a split event following edge events), but they must be identified
precisely beforehand, as regular perturbation can result, since in the
two-dimensional case, in wrong results.

\fi

\ifFullversion
\subsubsection{Running Time}
\else
\paragraph{Running Time.}
\fi
Let $n$ be the total complexity of the polyhedron, and $k$ the number
of events processed by the algorithm\ifFullversion; $k$ is also the
complexity of the straight skeleton\fi.
Let $r$ be the number of reflex vertices (or edges) of the polyhedron.
\ifFullversion
Naturally, $k=\Omega(n)$ and $r=O(n)$.
\fi
In order to collect all the initial events, we iterate over all
vertices, faces, and edges of the input polyhedron.  Edge events require
only looking at each edge's neighborhood, which can be done in $O(n)$
time.
However, finding all hole events requires considering all pairs of a
reflex vertex and a face.  This takes $O(rn)$ time.
Computing a split event is bounded within the edges of the common face,
but this can still take $O(rn)$ time, and computing Edge-Split
events takes $O(r^2)$ time.

In the course of the algorithm, we compute and process future events.  For a convex polyhedron, only edge events are created, and are easily computed locally in $O(1)$ time per event. However, for a
general polyhedron, every edge might be split by any ridge and stabbed by
any spike. In addition, new spikes and ridges can be created when events
are processed, and they have to be tested against all other vertices, edges,
and faces of their propagating connected component. Since $O(1)$ vertices and
edges are created in every event, every event can take $O(n)$ time to handle.
(The time needed to perform queue operations per a single event,
$O(\log{n})$, is comparatively negligible.)
The total time needed for processing the events is, thus, $O(kn)$.
This is also the total running time of the algorithm.

\begin{figure}
   \centering
   \ifTitlepage {\LARGE\bf Appendix} \\[1.5in] \fi
   \begin{tabular}{cccccccc}
      \includegraphics[width=0.75in]{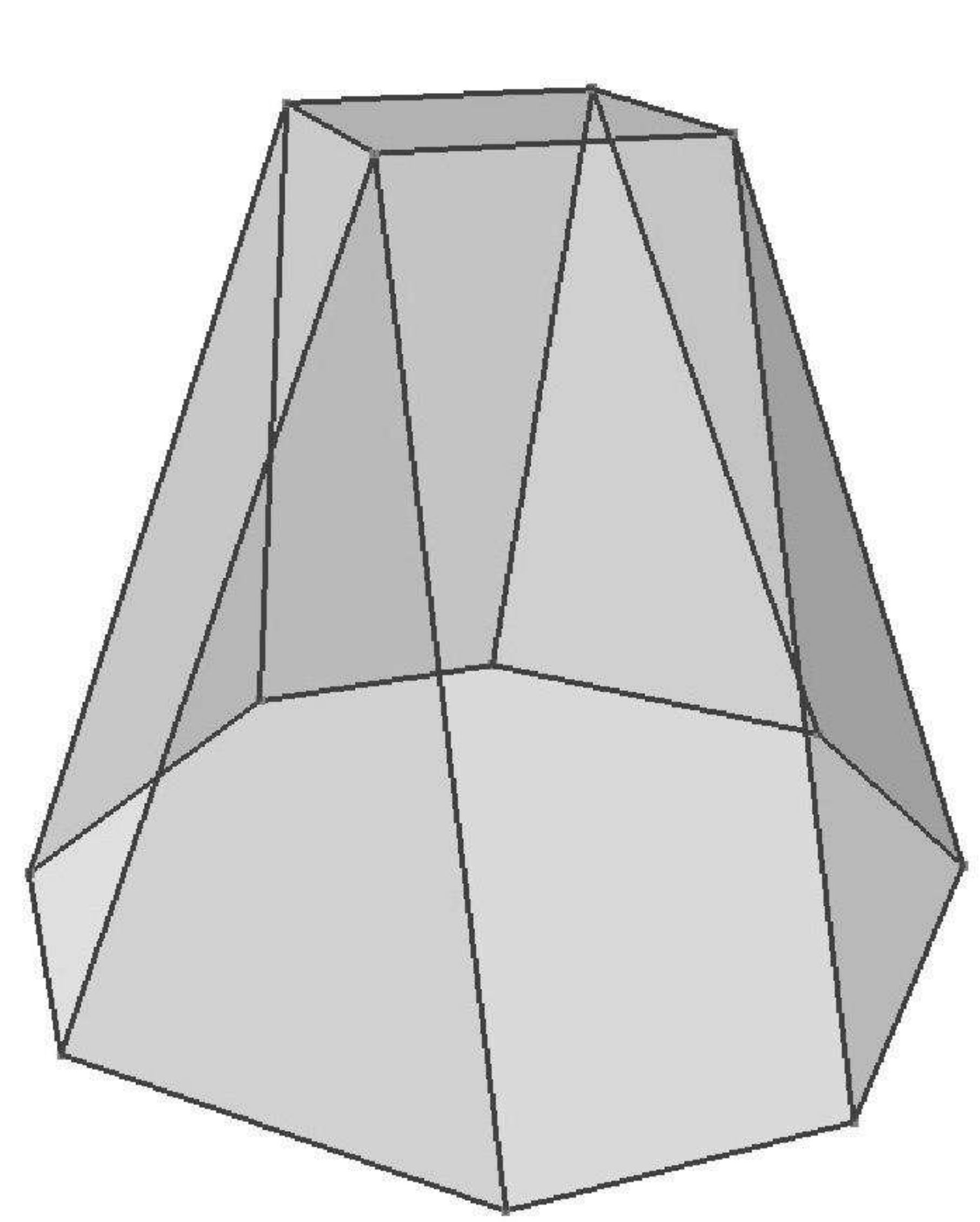} &
         \includegraphics[width=0.80in]{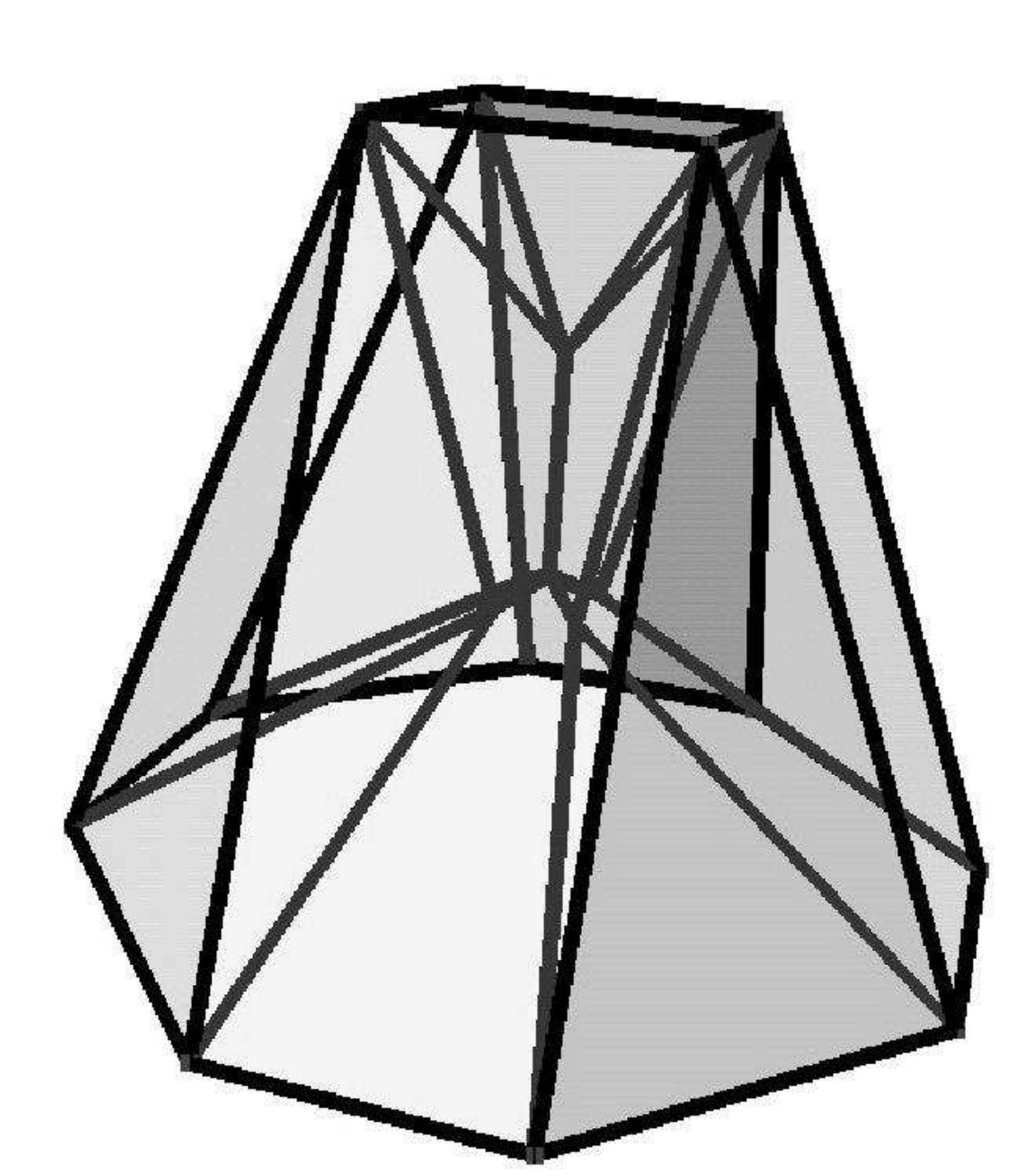} & ~~~~~~~ &
         \includegraphics[width=0.80in]{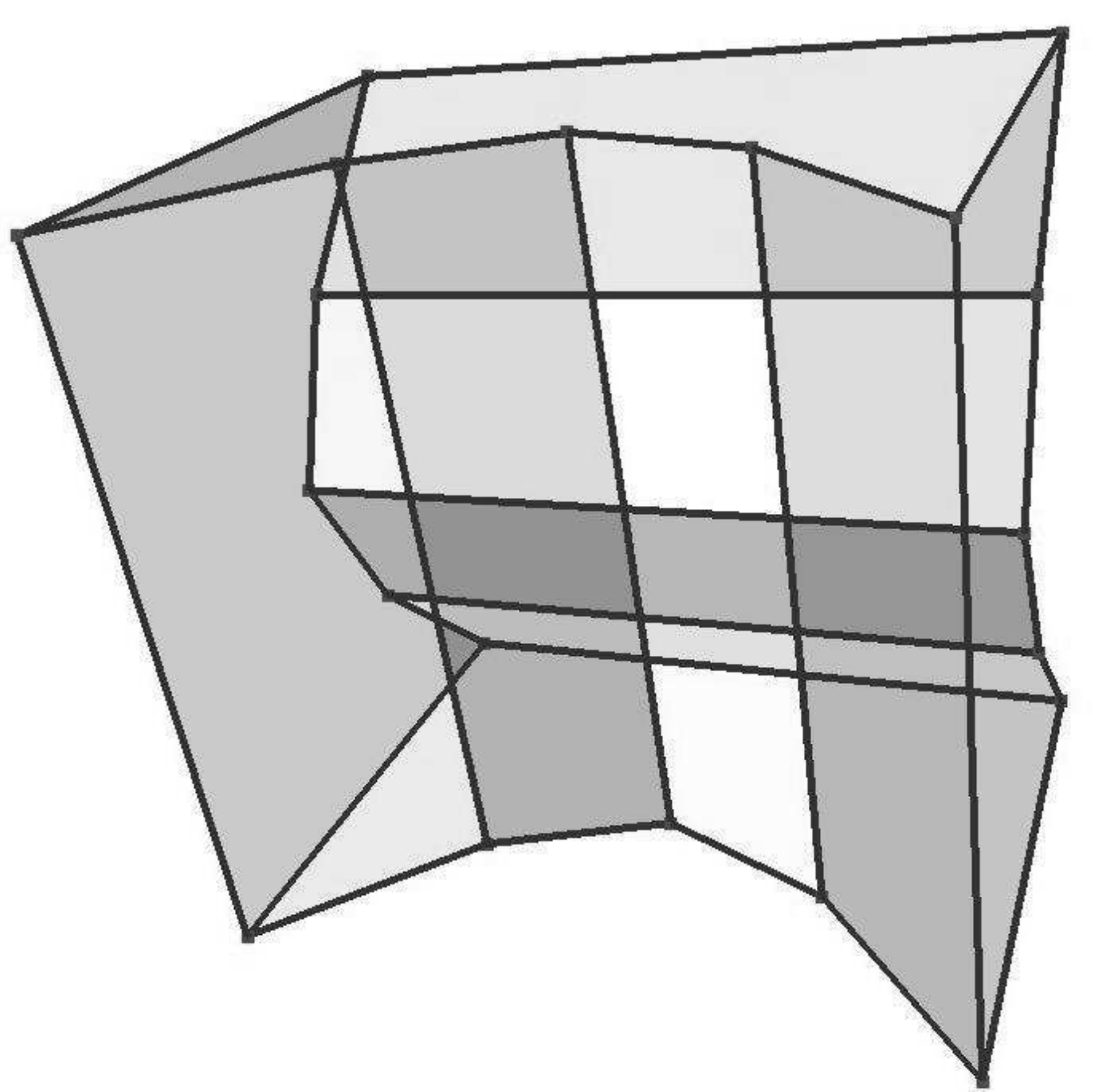} &
         \includegraphics[width=0.80in]{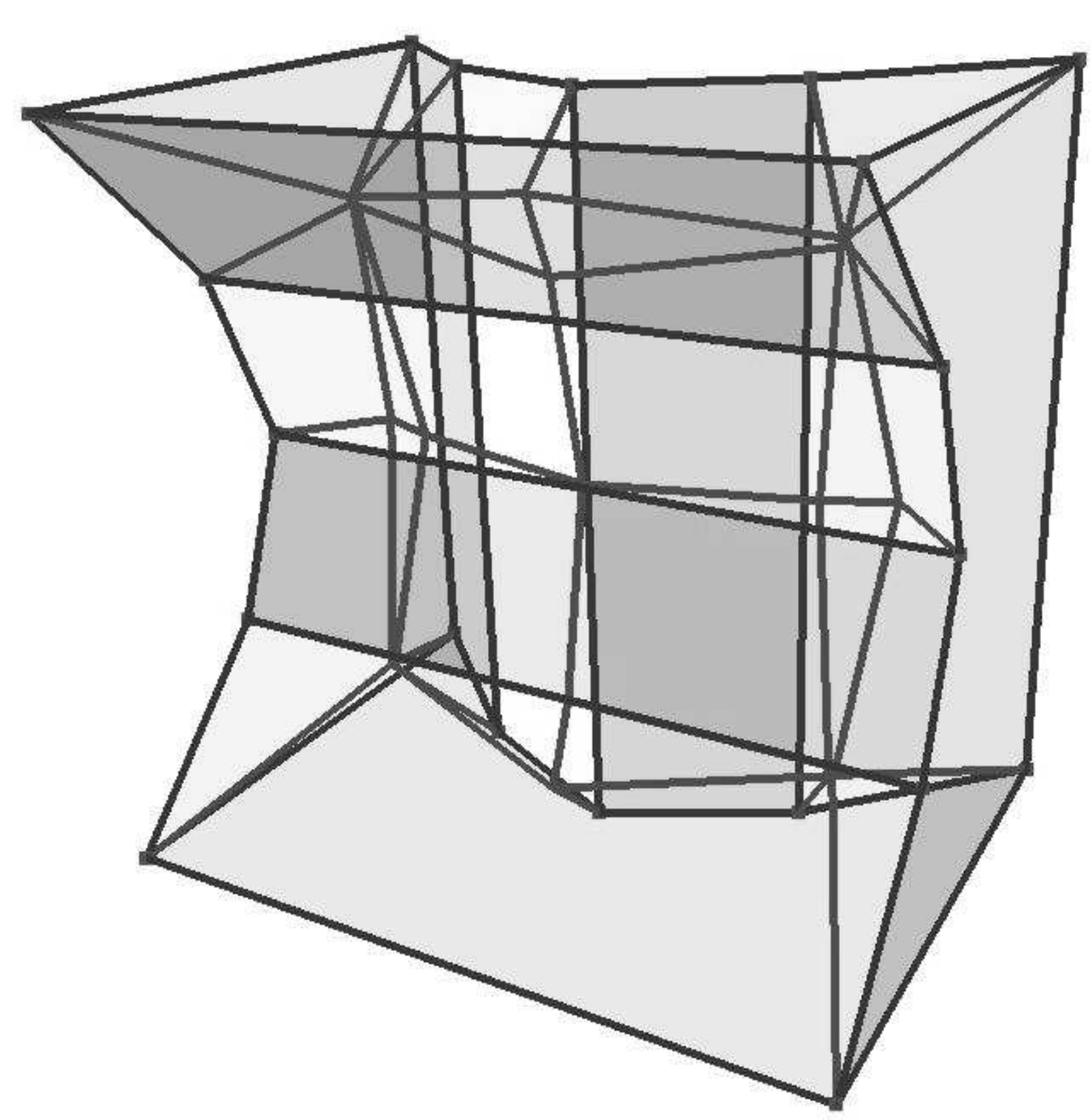} & ~~~~~~~ &
         \includegraphics[width=0.75in]{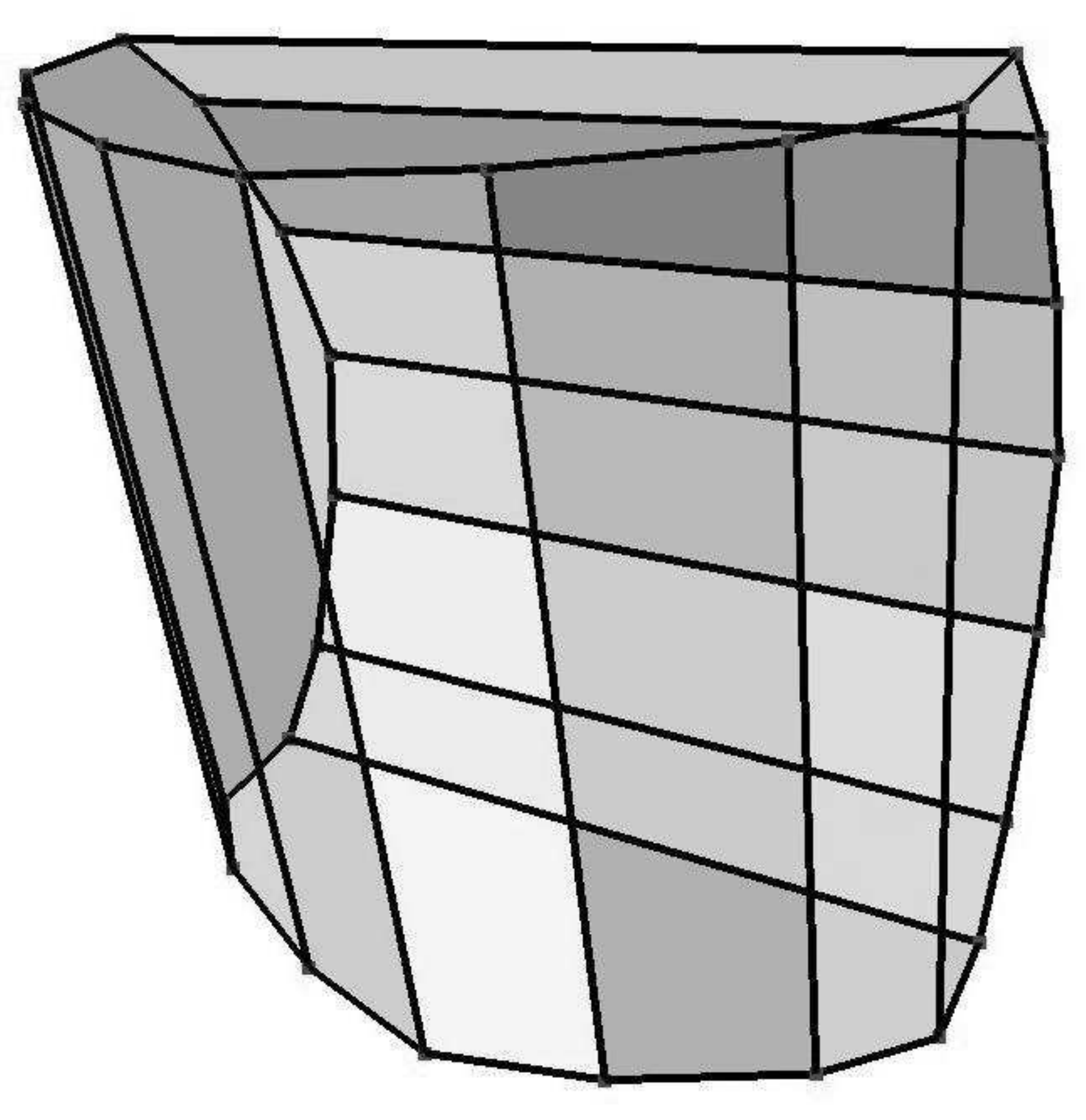} &
         \includegraphics[width=0.80in]{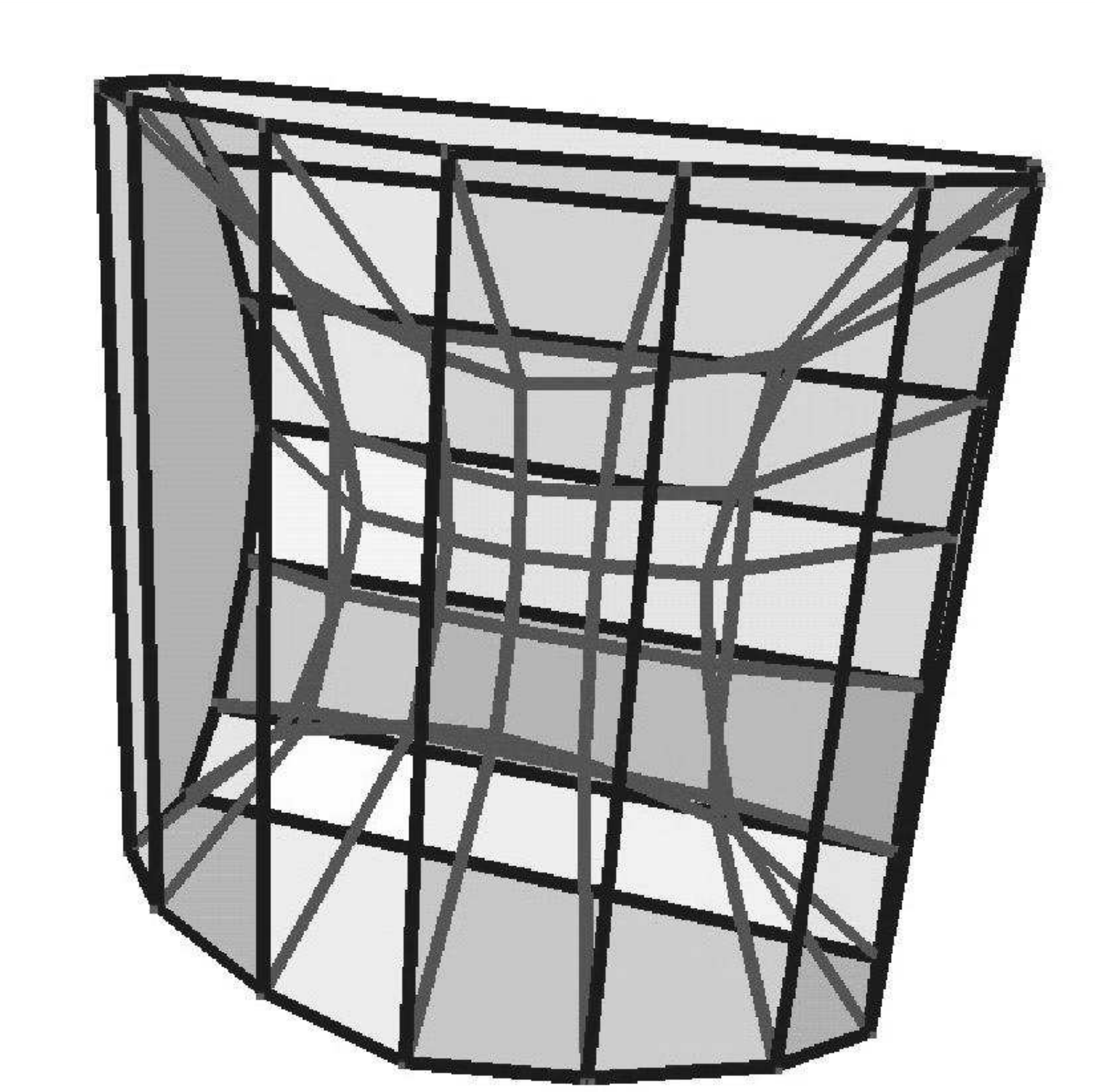} \\
      \multicolumn{2}{c}{(a)} & & \multicolumn{2}{c}{(b)} & &
         \multicolumn{2}{c}{(c)} \medskip \\
      \includegraphics[width=0.75in]{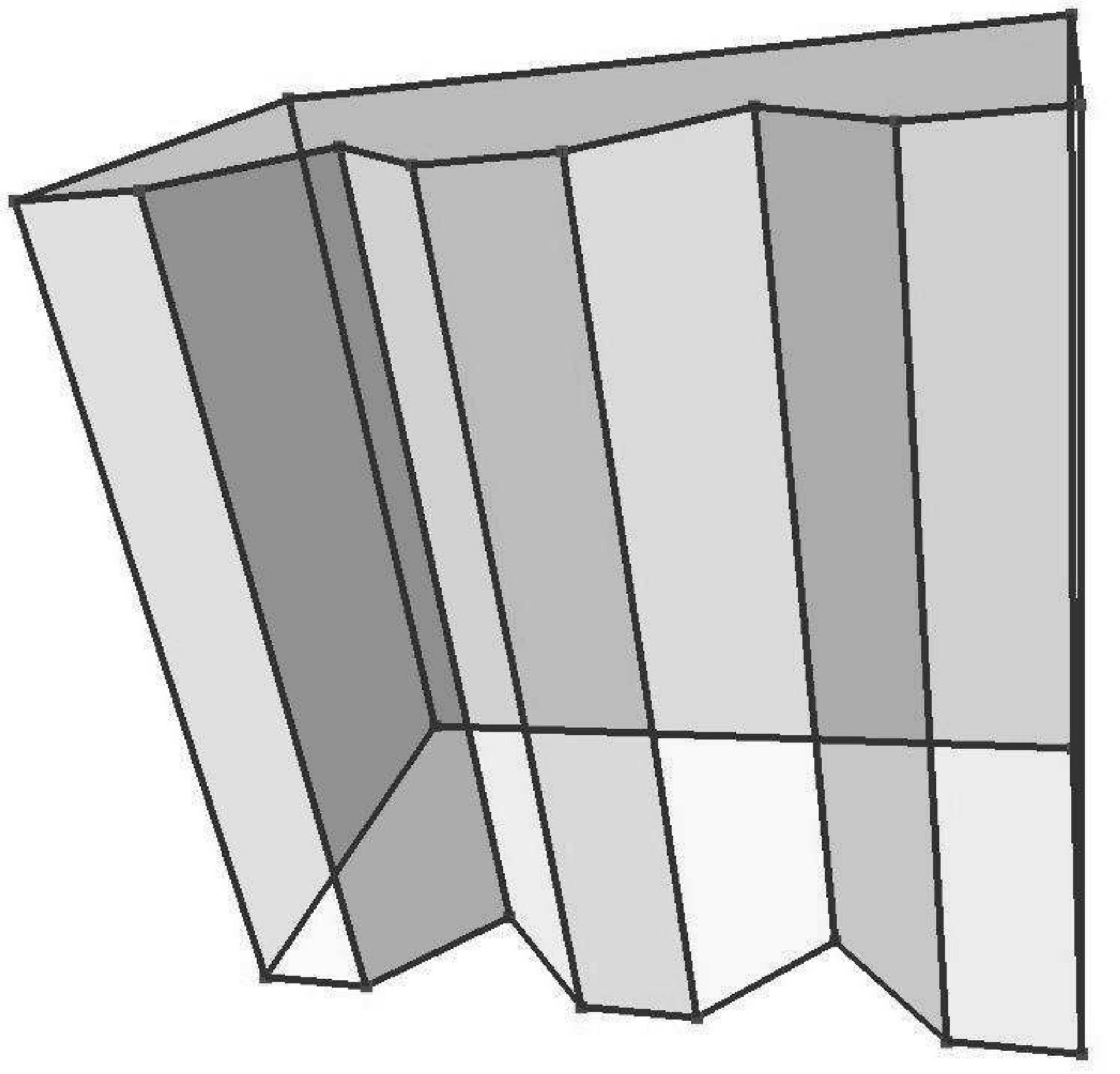} &
         \includegraphics[width=0.75in]{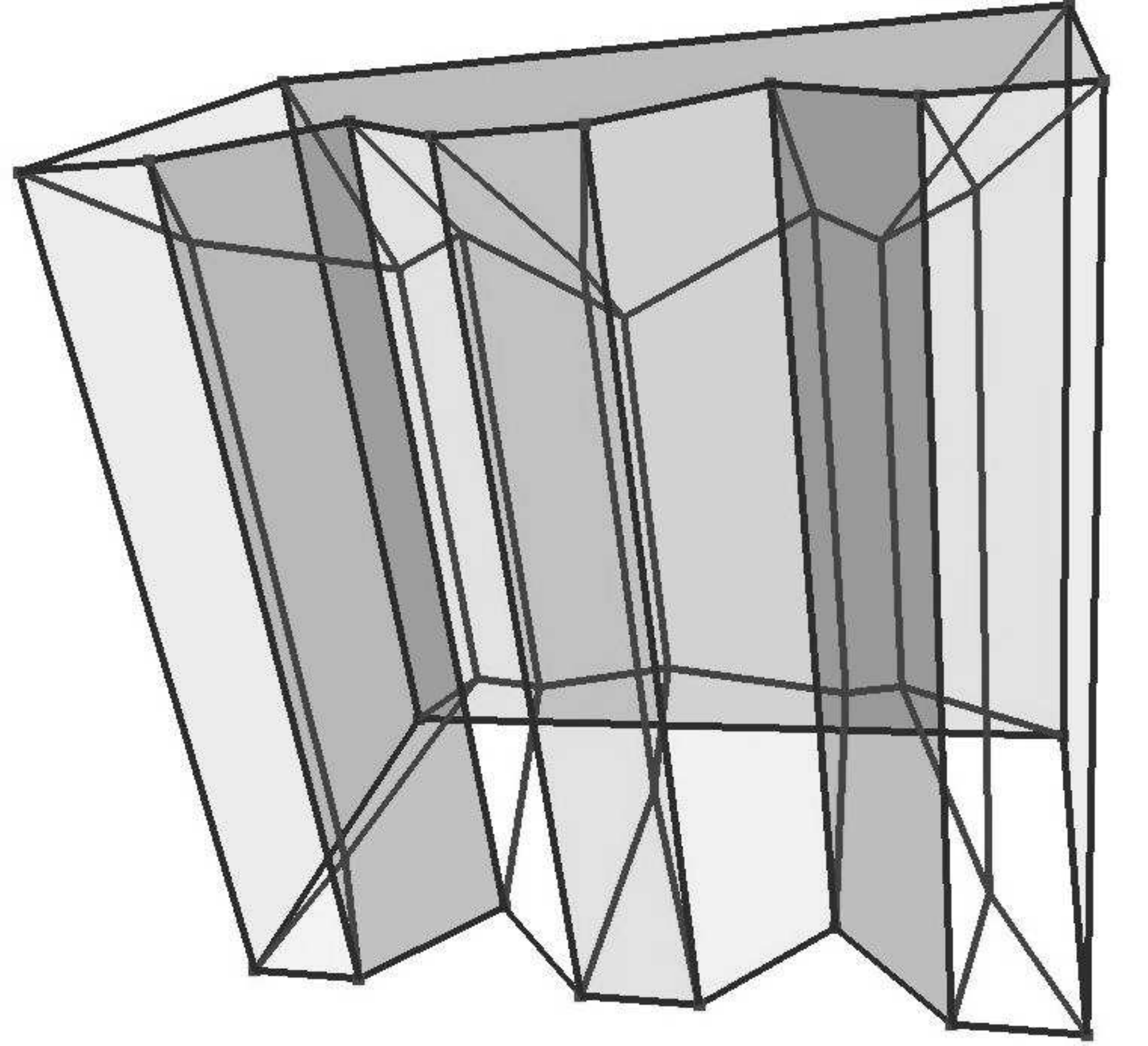} & &
         \includegraphics[width=1.00in]{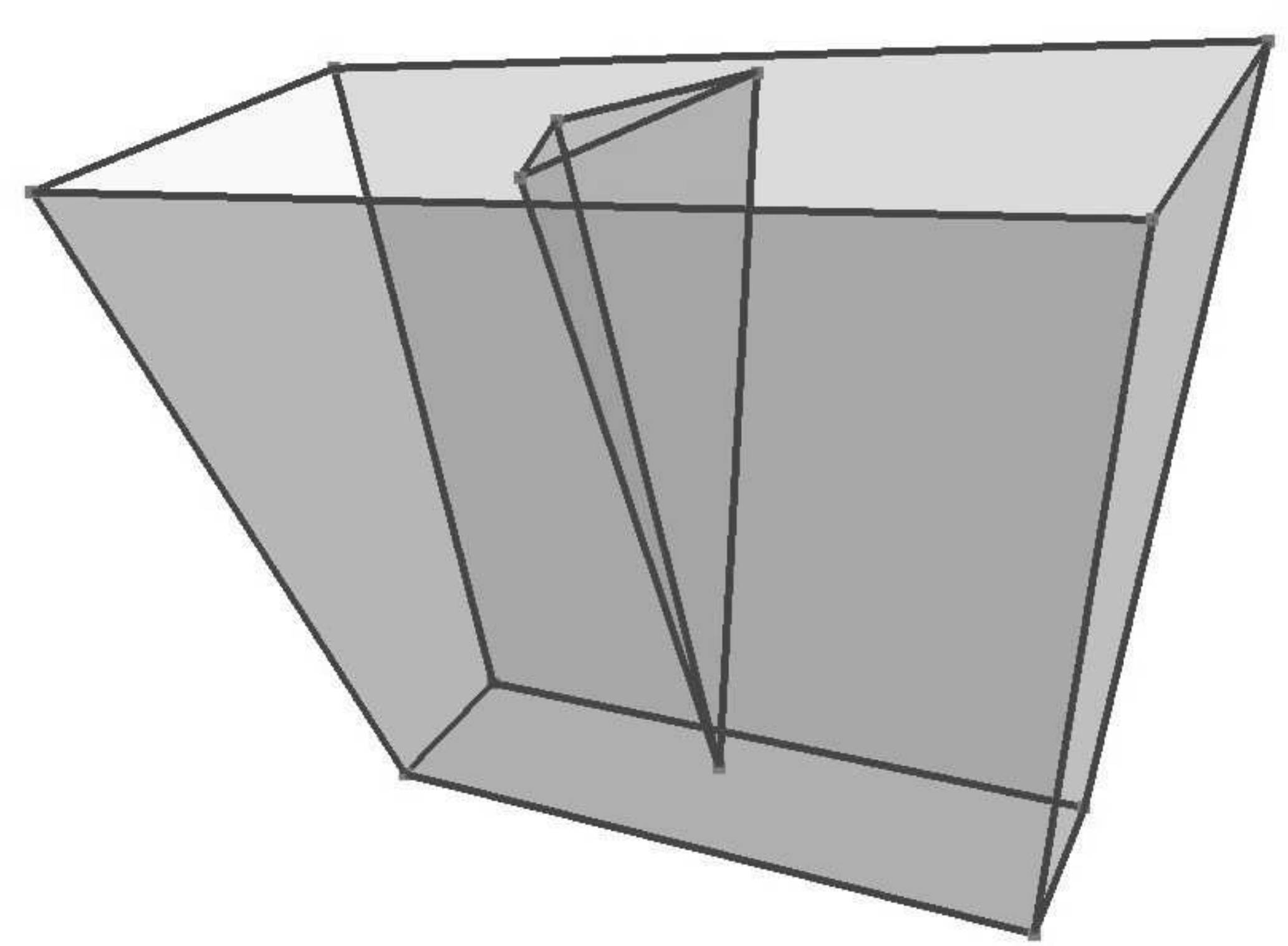} &
         \includegraphics[width=0.90in]{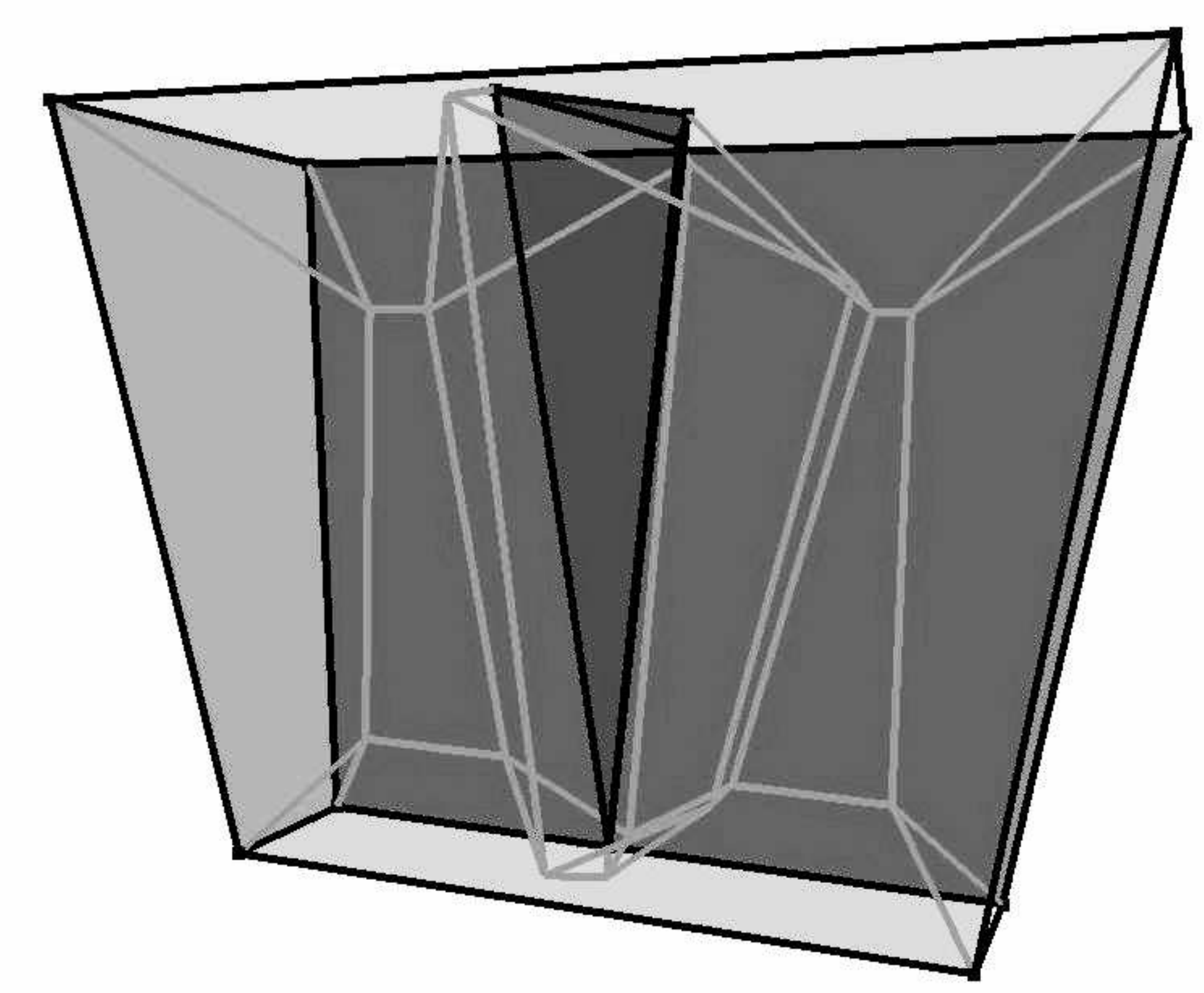} & &
         \includegraphics[width=0.75in]{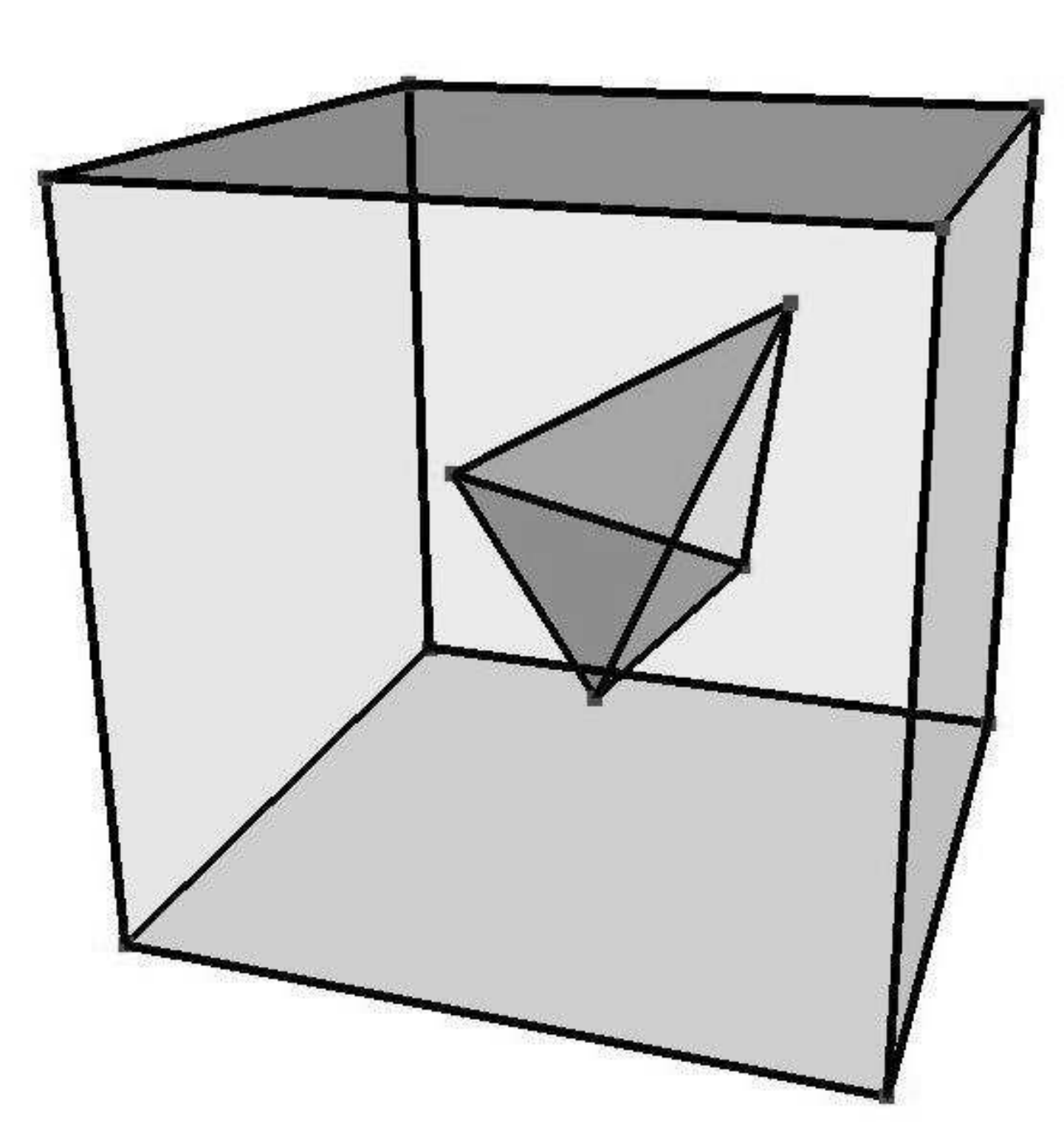} &
         \includegraphics[width=0.75in]{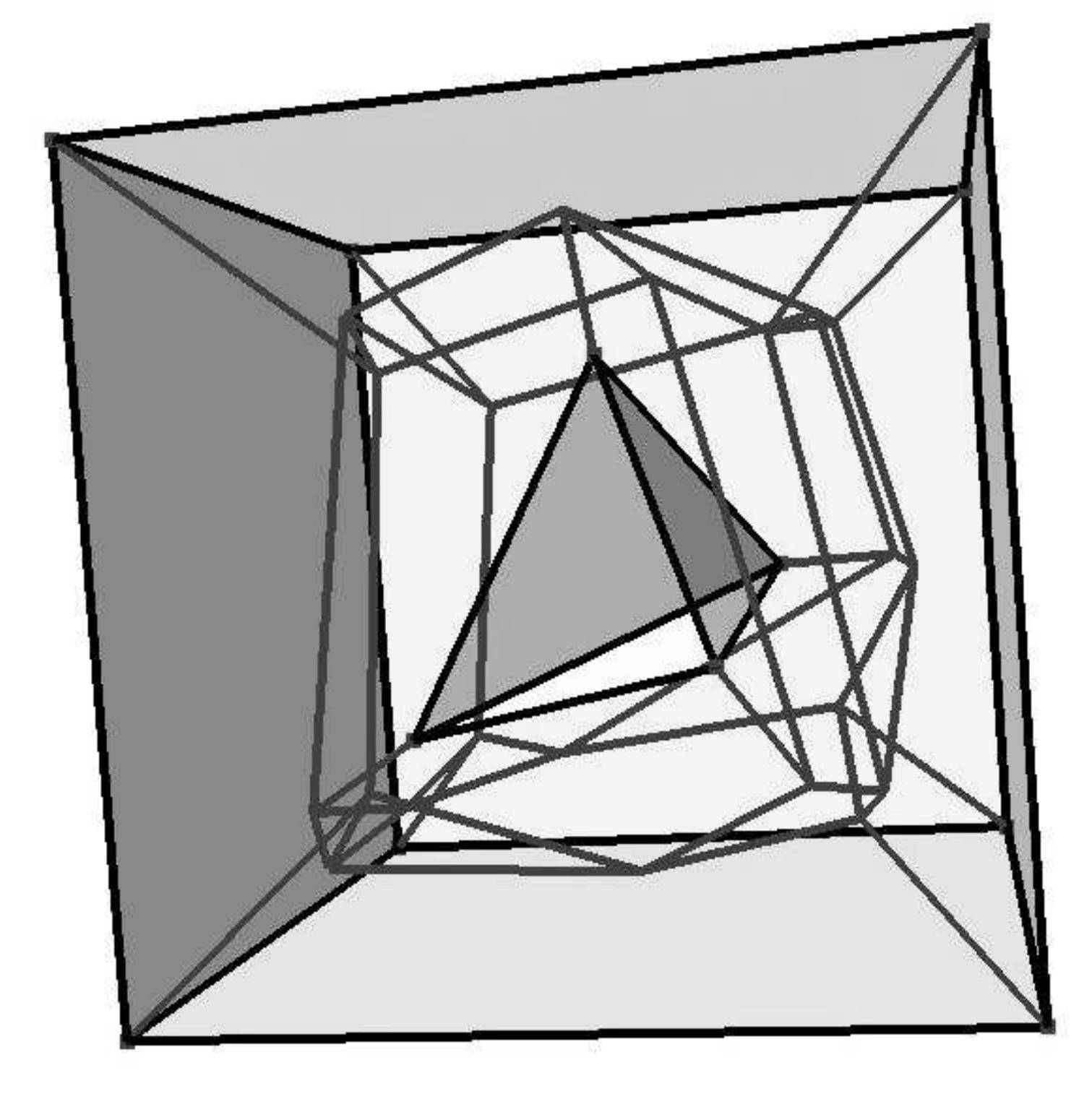} \\
      \multicolumn{2}{c}{(d)} & & \multicolumn{2}{c}{(e)} & &
         \multicolumn{2}{c}{(f)} \medskip \\
      \includegraphics[width=0.95in]{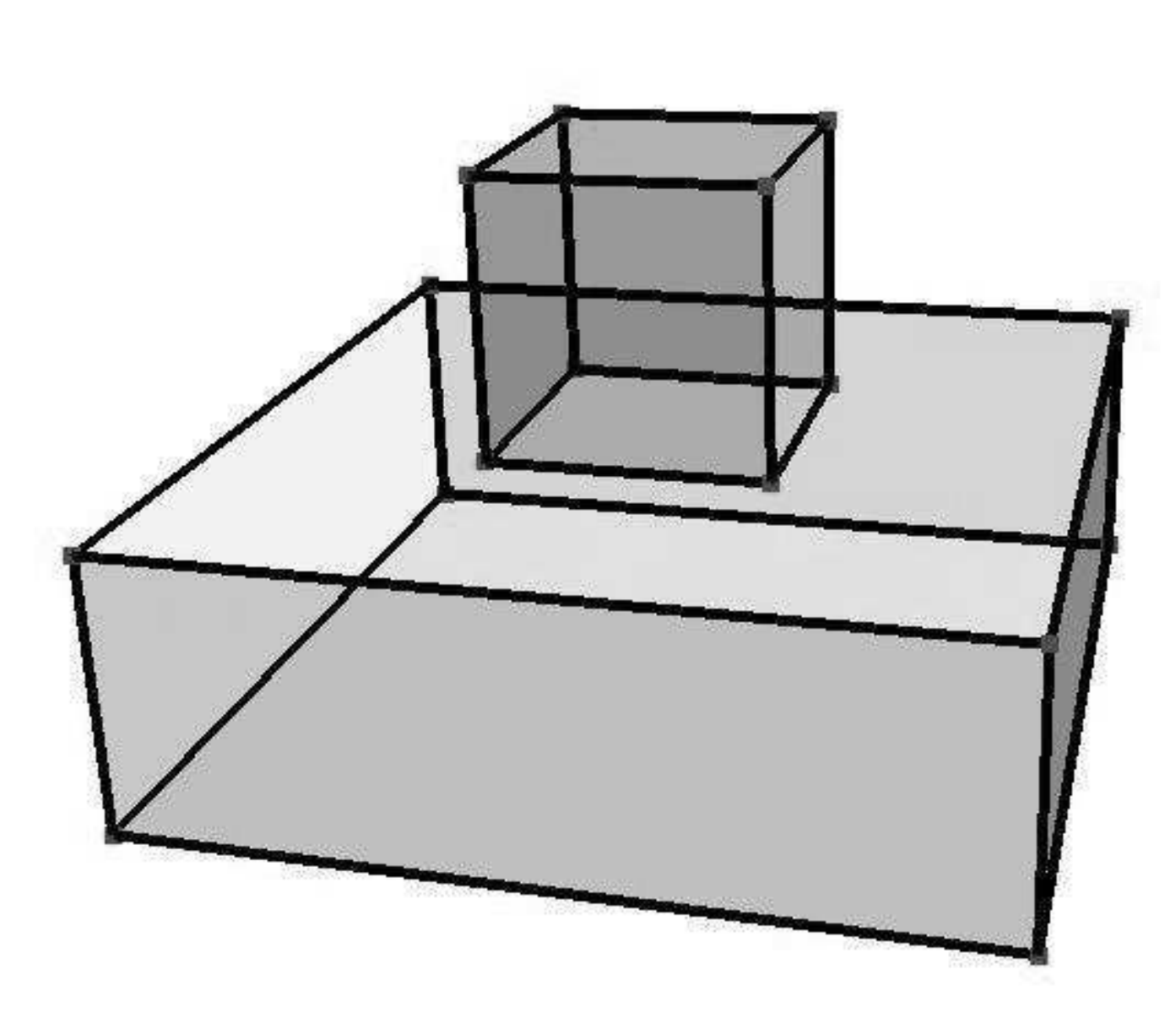} &
         \includegraphics[width=1.00in]{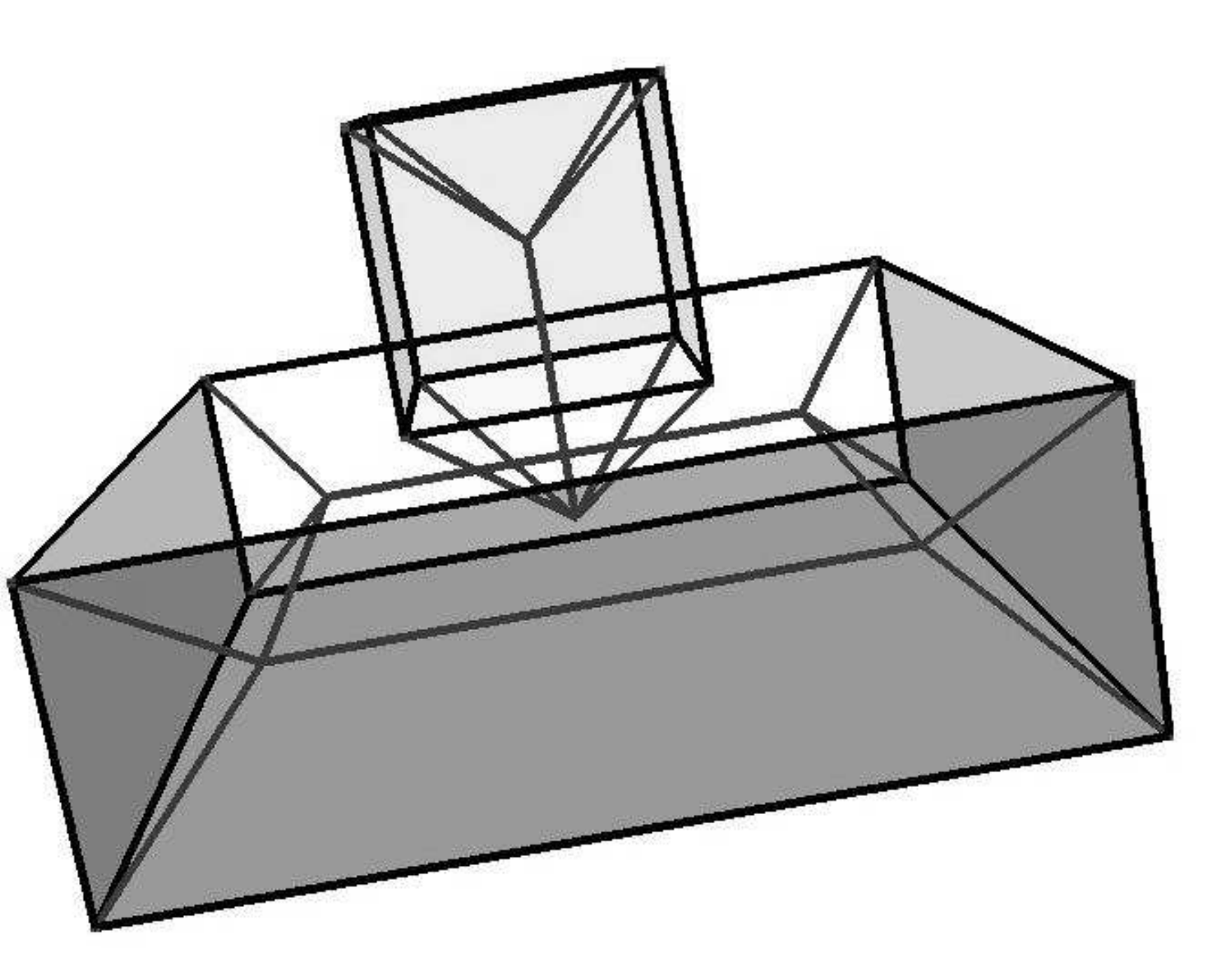} & &
         \includegraphics[width=0.75in]{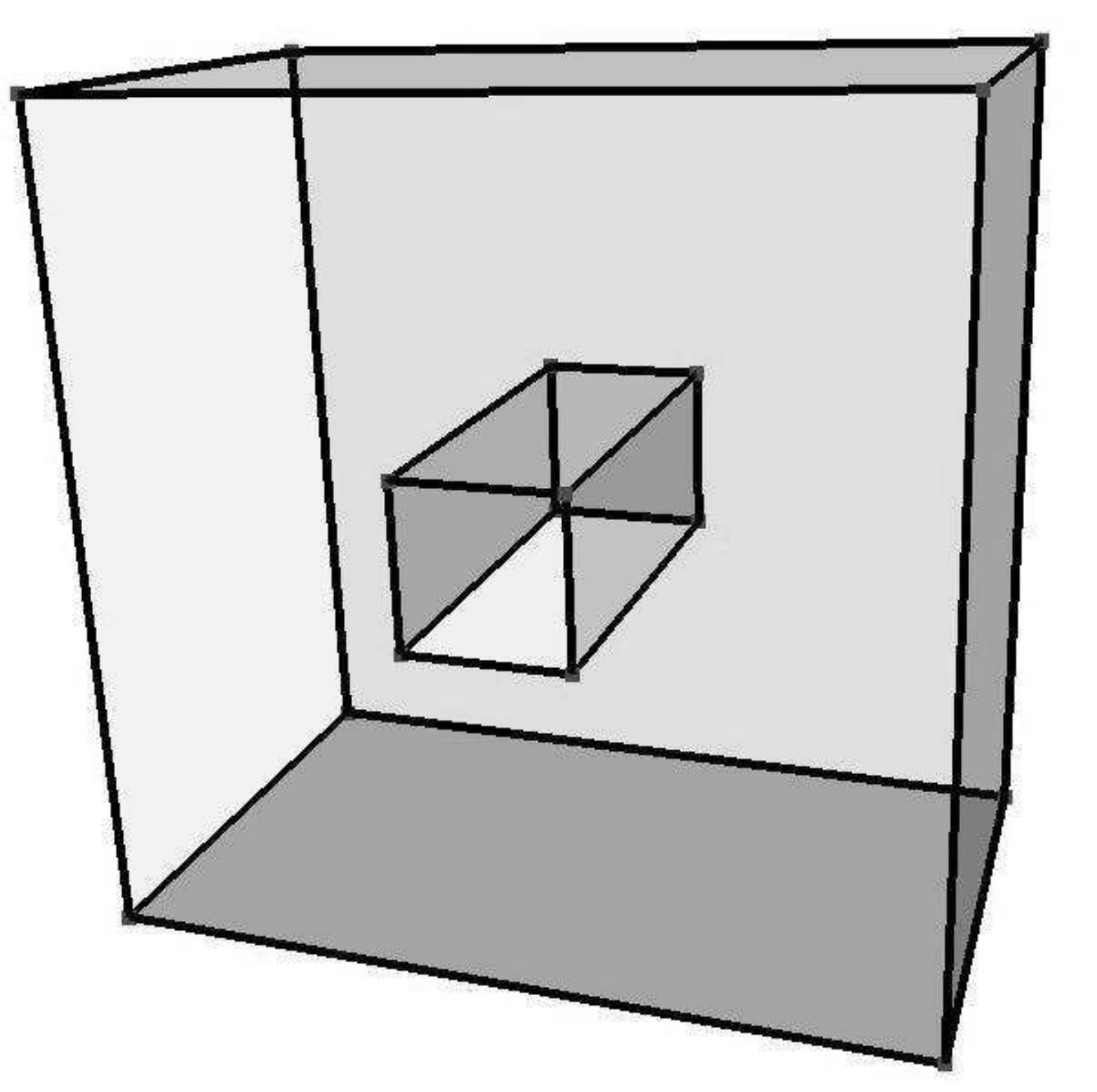} &
         \includegraphics[width=0.75in]{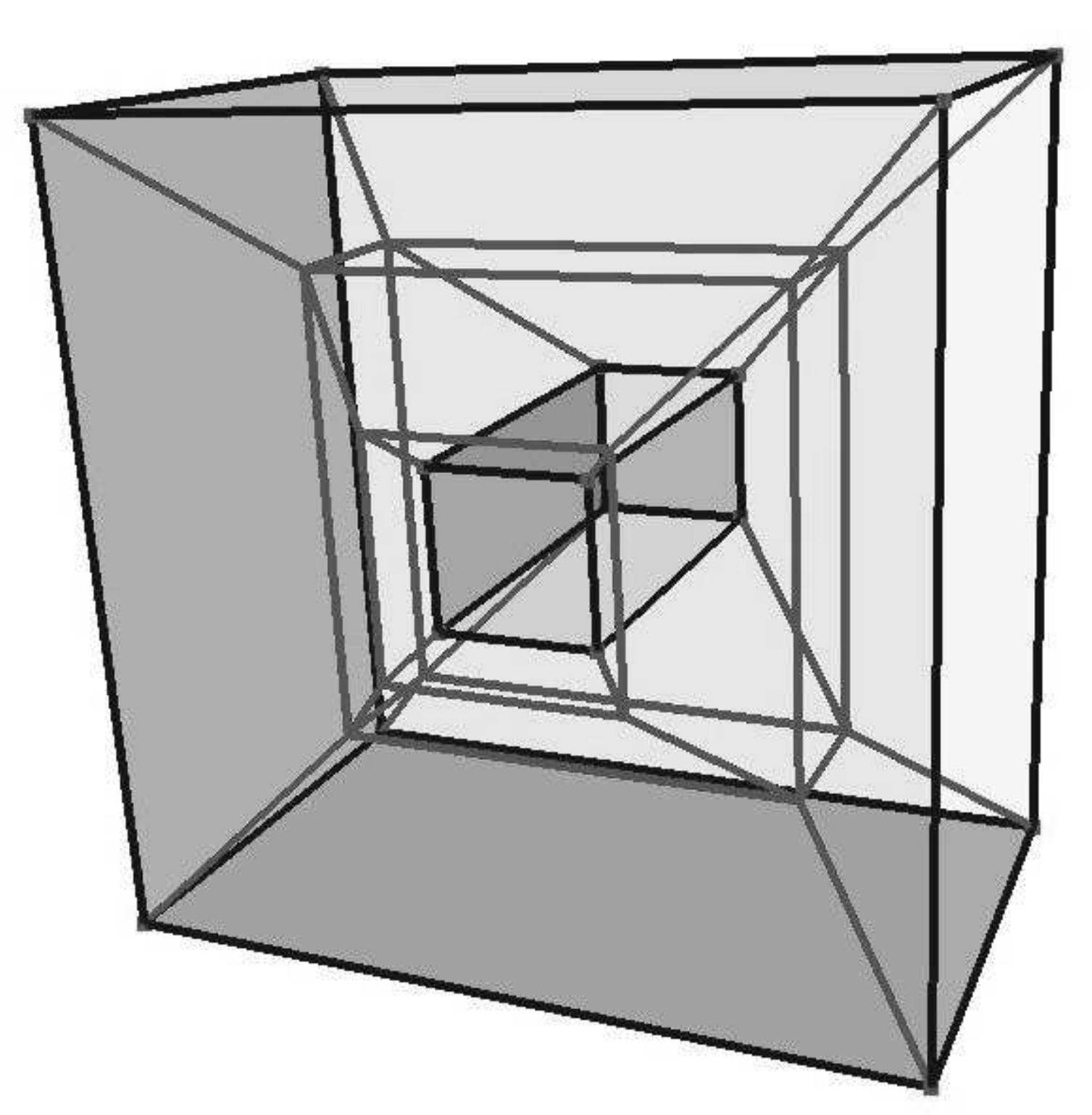} & &
         \includegraphics[width=0.90in]{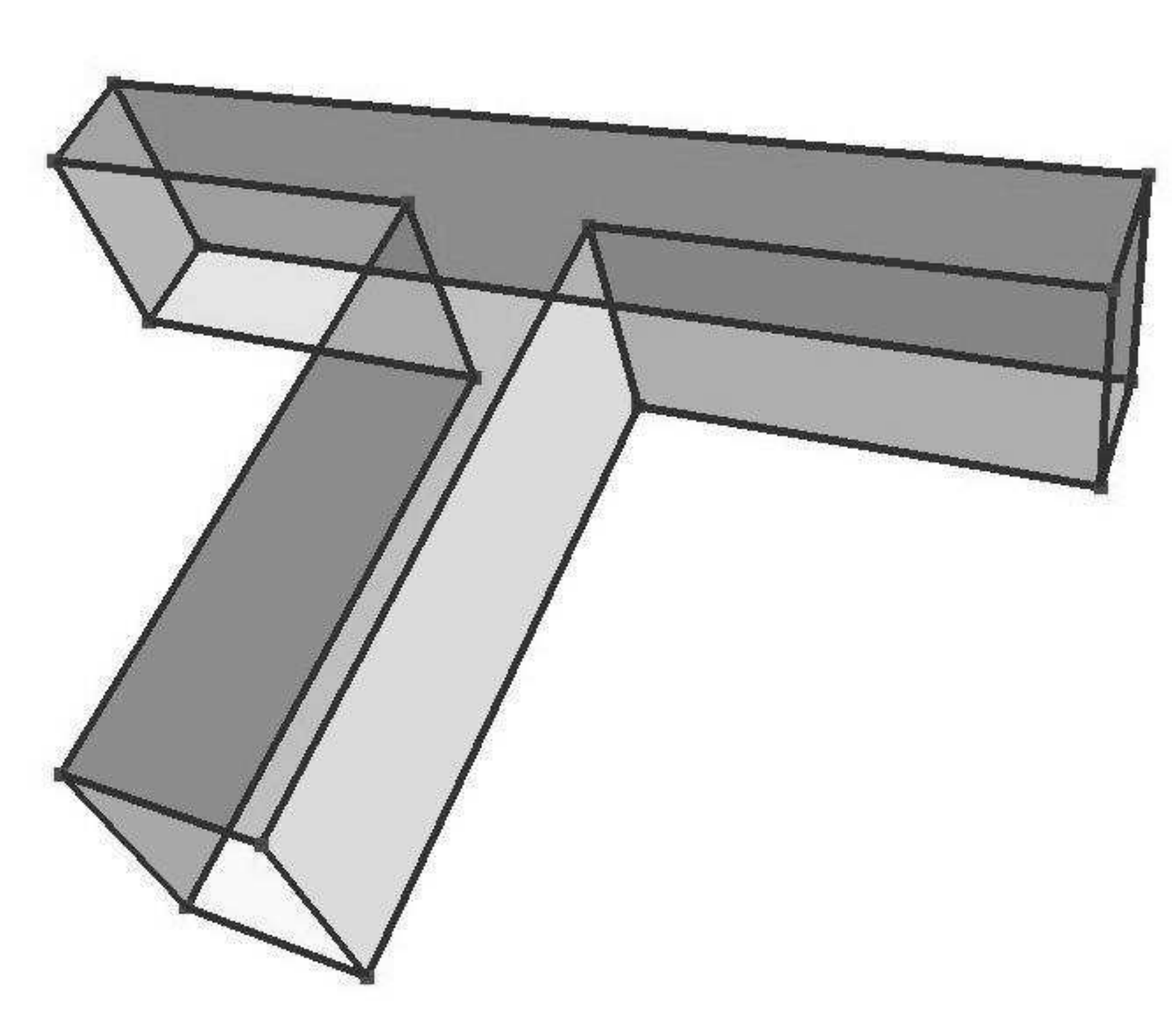} &
         \includegraphics[width=0.95in]{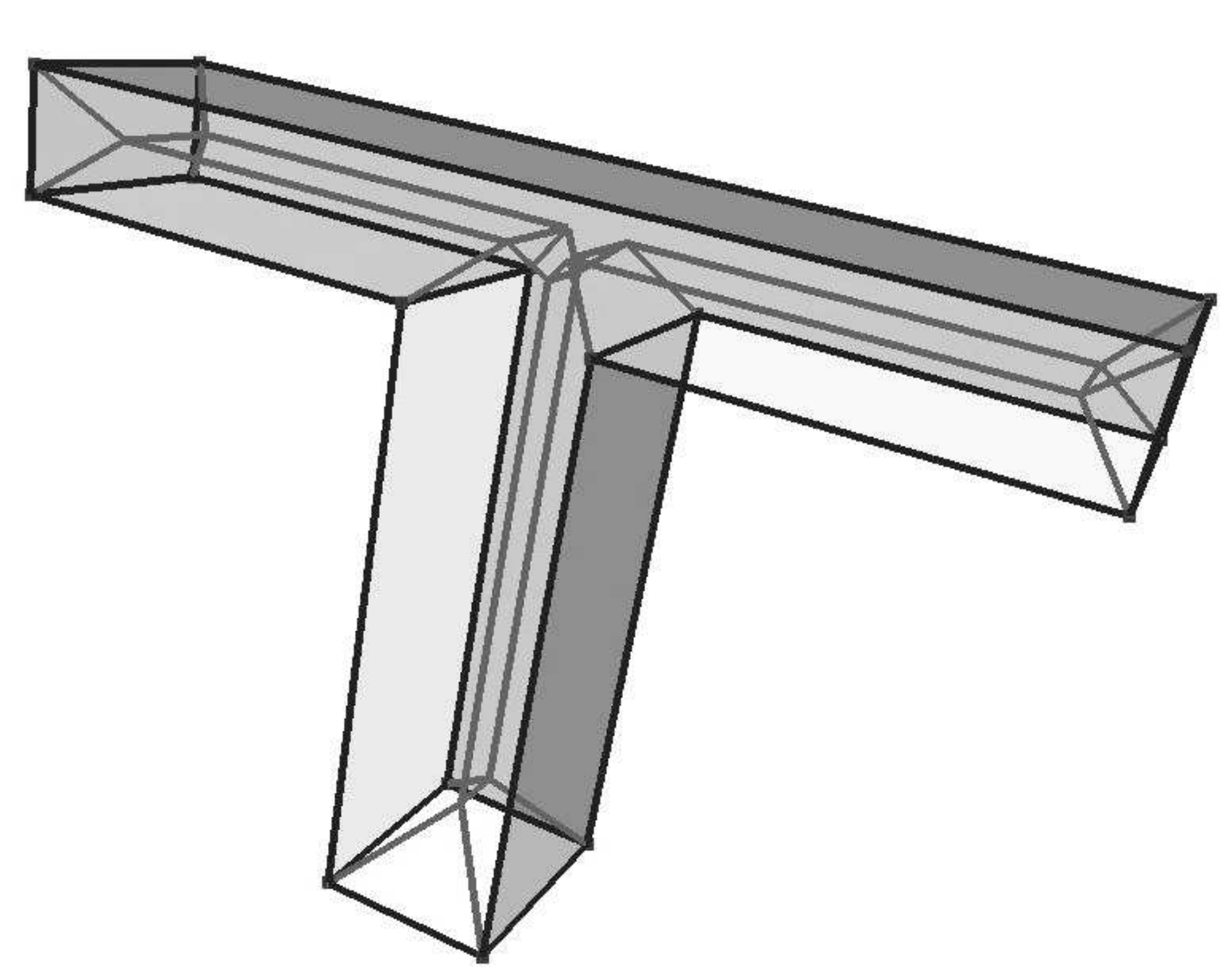} \\
      \multicolumn{2}{c}{(g)} & & \multicolumn{2}{c}{(h)} & &
         \multicolumn{2}{c}{(i)} 
   \end{tabular} \medskip \\
   \small
   \begin{tabular}{|c|ccl|rrcc|r|}
      \hline
          & \multicolumn{3}{c|}{Object} & \multicolumn{4}{c|}{Skeleton} &
      \multicolumn{1}{c|}{Time} \\
      Object & Vertices & Edges & Facets & Vertices & Edges & Faces & Cells &
         \multicolumn{1}{c|}{(Sec.)} \\
      \hline
      \multicolumn{9}{|c|}{General Objects} \\
      \hline
      (a) & 12 & 20~ & 10     &  8~~~ &  24~~ & 25 & 10 & 0.312 \\
      (b) & 20 & 30~ & 12     & 25~~~ &  60~~ & 46 & 12 & 0.719 \\
      (c) & 28 & 42~ & 16     & 45~~~ & 104~~ & 74 & 16 & 0.567 \\
      (d) & 20 & 30~ & 12     & 16~~~ &  42~~ & 37 & 12 & 0.188 \\
      (e) & 20 & 18~ &  9(+1) & 15~~~ &  45~~ & 56 &  9 & 0.250 \\
      (f) & 12 & 18~ & 10     & 21~~~ &  48~~ & 37 & 10 & 0.484 \\
      \hline
      \multicolumn{9}{|c|}{Polycubes} \\
      \hline
      (g) & 16 & 24~ & 11     &  6~~~ &  21~~ & 25 & 11 & 0.177 \\
      (h) & 16 & 24~ & 11     & 12~~~ &  36~~ & 33 & 11 & 0.146 \\
      (i) & 16 & 24~ & 10     & 12~~~ &  32~~ & 29 & 10 & 0.172 \\
      \hline
   \end{tabular} \smallskip \\
   (j) Statistics and running times
   \caption{Sample objects.}
   \label{F-examples}
\end{figure}

\section{Experimental Results}

We have implemented the algorithm for computing the straight skeleton of a
general polyhedron in Visual C++ .NET2005,
and experimented with the software
on a 3GHz Athlon 64 processor PC with 1GB of RAM.  We used the CGAL library
to perform basic geometric operations\ifFullversion, such as plane
intersection, with the embedded rational exact number type GMPQ\fi.
The source code consists of about 6,500 lines of code.
Figure~\ref{F-examples}\ifTitlepage, given in the appendix,\fi{} shows 
the straight skeletons of a few simple objects, and the performance of
our implementation.
(Note that object~(e) contains one hole polygon in addition to the~9
facets.)

\ifTitlepage
   \newpage
\fi
\bibliographystyle{abbrv}
\bibliography{voxskel,edemaine,geom}

\end{document}